\newtheorem{theorem}{Theorem}
\newtheorem{definition}{Definition}
\newtheorem{lemma}{Lemma}
\newtheorem{proposition}{Proposition}
\newtheorem{remark}{Remark}
\newenvironment{proof}[1][Proof]{\noindent\textbf{#1.} }{\ \rule{0.5em}{0.5em}}
\begin{document}

\title{Optimal Bias Correction of the Log-periodogram Estimator of the
Fractional Parameter: A Jackknife Approach\thanks{%
This research has been supported by Australian Research Council Discovery
Grants No. DP150101728 and DP170100729. We would like to thank two anonymous
referees for very helpful comments on an earlier version of the paper. We
would also like to acknowledge the comments of participants at: the Royal
Statistical Society Conference, University of Glasgow, August, 2017; and the
11th International Conference on Computational and Financial Econometrics,
University of London, December, 2017. We thank the Monash e-research centre
for the use of their computing facilities.}}
\author{K. Nadarajah\thanks{%
Department of Economics, University of Sheffield, United Kingdom.
Corresponding author:\textbf{\ }k.nadarajah@sheffield.ac.uk.}, Gael M. Martin%
\thanks{%
Department of Econometrics and Business Statistics, Monash University,
Australia.} \ \& D. S. Poskitt\thanks{%
Department of Econometrics and Business Statistics, Monash University,
Australia.}}
\maketitle

\begin{abstract}
{\small We use the jackknife to bias correct the log-periodogram regression
(LPR) estimator of the fractional parameter\ in a stationary fractionally
integrated model. The weights for the jackknife estimator are chosen in such
a way that bias reduction is achieved without the usual increase in
asymptotic variance, with the estimator viewed as `optimal' in this sense.
The theoretical results are valid under both the non-overlapping and
moving-block sub-sampling schemes that can be used in the jackknife
technique, and do not require the assumption of Gaussianity for the data
generating process. A Monte Carlo study explores the finite sample
performance of different versions of the jackknife estimator, under a
variety of scenarios. The simulation experiments reveal that when the
weights are constructed using the parameter values of the true data
generating process, a version of the optimal jackknife estimator almost
always out-performs alternative semi-parametric bias-corrected estimators. A
feasible version of the jackknife estimator, in which the weights are
constructed using estimates of the unknown parameters, whilst not dominant
overall, is still the least biased estimator in some cases. Even when
misspecified short run dynamics are assumed in the construction of the
weights, the feasible jackknife still shows significant reduction in bias
under certain designs. As is not surprising, parametric maximum likelihood
estimation out-performs all semi-parametric methods when the true values of
the short memory parameters are known, but is dominated by the
semi-parametric methods (in terms of bias) when the short memory parameters
need to be estimated, and in particular when the model is misspecified.}

\medskip

\noindent {\footnotesize \emph{Keywords}: {\small Long memory; bias
adjustment; cumulants; discrete Fourier transform; periodograms;
log-periodogram regression}.\smallskip }

\noindent {\footnotesize \emph{MSC2010 subject classifications}: Primary
62M10, 62M15; Secondary 62G09\smallskip }\newline
{\footnotesize \emph{JEL classifications}: C18, C22, C52}{\small \ }
\end{abstract}

\newpage

\baselineskip18pt

\section{Introduction\label{Sec1:Intro}}

Data on many climate, hydrological, economic and financial variables exhibit
dynamic patterns characterized by a long lasting response to past shocks.
Notable examples include, water levels in rivers (\color{blue}\citealp*{Hurst1951}\color{black}), rainfall (\color{blue}\citealp*{Gil-Alana2012}\color{black}), aggregate output (\color{blue}\citealp*{DieboldRudebusch1989}; \citealp*{HasslerWolters1995}\color{black}), interest rates (\color{blue}\citealp*{Baillie1996}\color{black}), exchange rates (\color{blue}\citealp*{Cheung2016}\color{black}) and stock market volatility (\color{blue}\citealp*{BollerslevMikkelsen1996}\color{black}; \color{blue}\citealp*{AndersenBollerslevDieboldLabys2003}\color{black}). Such `long memory processes' are characterized by non-summable autocovariances that decline at a (slow) hyperbolic rate, in contrast to the usual exponential, and summable, decay associated with a
short memory process; the fractionally integrated autoregressive moving average (ARFIMA) model of \color{blue}\cite{Adenstedt1974}\color{black},\ \color{blue}\cite{GrangerJoeyux1980}\color{black}\ and \color{blue}\cite{Hosking1981}\color{black}\ being a popular representation. Equivalently, a
stationary (potentially) long memory process, $\left\{ Y_{t}\right\} ,$ $%
t=0,\pm 1,\pm 2,\ldots ,$ can be represented by the spectral density,%
\begin{equation}
f_{YY}\left( \lambda \right) =\left( 2\sin \left( \lambda /2\right) \right)
^{-2d}f_{YY}^{\ast }\left( \lambda \right) ,\text{ for }\lambda \in \left[
-\pi ,\pi \right] ,  \label{Spectral density_ARFIMA}
\end{equation}%
where the fractional differencing parameter $d$ satisfies\textbf{\ }$d\in
\left( -0.5,0.5\right) $, and $f_{YY}^{\ast }\left( \cdot \right) $ is an
even function that is continuous on $\left[ -\pi ,\pi \right] $, is bounded
above and bounded away from zero, and satisfies $\int_{-\pi }^{\pi }\log
f_{YY}^{\ast }\left( \lambda \right) d\lambda =0$. The process is said to
have \textit{long} \textit{memory} when $d\in \left( 0,0.5\right) $, \textit{%
intermediate} \textit{memory} when $d\in \left( -0.5,0\right) $ and \textit{%
short} \textit{memory} when $d=0$. The factor $f_{YY}^{\ast }\left( \cdot
\right) $ controls the (remaining) short memory behaviour associated with
the process. For detailed expositions of processes described by (\ref{Spectral density_ARFIMA}), including applications, see, \color{blue}\cite{Beran1994_B}\color{black}, \color{blue}\cite{OppenheimTaqquMurad2003}\color{black}\ and \color{blue}\cite{Robinson2003}\color{black}.

In estimating the parameter $d$, the semi-parametric\textbf{\ } log-periodogram regression (LPR) estimator of \color{blue}\cite{GPH1983}\color{black}\ and \color{blue}\citet{Robinson1995a,Robinson1995b}\color{black}\ has been widely used, due to the simplicity of its
construction as an ordinary least squares (OLS) estimator, and its avoidance
of potentially incorrect specification of the short memory component.
However, consistency of the LPR estimator is achieved only at the cost of
both a slower rate of convergence than the usual parametric rate and
substantial finite sample bias in the presence of ignored short run dynamics
(see, for example, \color{blue}\citealp*{ANW1993}\color{black}\ and \color{blue}\citealp*{NielsenFrederiksen2005}\color{black}).

Given this well-documented bias, \textit{bias reduction} of the LPR
estimator has been a focus of the literature. \color{blue}\cite{AndrewsGuggenberger2003}\color{black}, for example, include additional
frequencies, to degree $2r$\ for $r\geq 0$, in the log-periodogram
regression that defines the LPR estimator, producing an estimator (denoted
hereafter by $\widehat{d}_{r}^{AG}$) whose bias converges to zero at a
faster rate than that of the LPR estimator (recovered by setting $r=0$),
when $r>1$. Alternative analytical procedures appear in \color{blue}\cite{MoulinesSoulier1999}\color{black}, \color{blue}\cite{Hurvich2001}\color{black}\ and \color{blue}\cite{RobinsonHenry2003}\color{black}, whilst a method based on the pre-filtered sieve bootstrap has been introduced by \color{blue}\cite{PoskittMartinGrose2016}\color{black}. Critically, all such bias-correction methods come at a cost: namely, an increase in asymptotic variance. Notably, \color{blue}\cite{GuggenbergerSun2006}\color{black}\ produce a weighted average of LPR estimators over different
bandwidths that achieves the same degree of bias reduction as $\widehat{d}%
_{r}^{AG}$\ for any given $r$, but with less variance inflation. This estimator, along with that of \color{blue}\cite{PoskittMartinGrose2016}\color{black}, serve as important comparators for the\textbf{\ }alternative
bias-corrected estimator that we develop herein.

The approach to bias adjustment adopted in this paper\ applies the jackknife
principle, with the bias-corrected estimator constructed as a weighted
average of LPR estimators computed, in turn, from the full sample and $m$\
sub-samples of a given length. The sub-samples may be created by using
either the non-overlapping or the moving-block method. Motivated by the jackknife technique proposed by \color{blue}\cite{ChenYu2015}\color{black}\
in a unit root setting, weights are chosen to remove bias up to a given
order and, at the same time, to minimize the increase in asymptotic
variance. The weights are `optimal' in this sense and the associated
jackknife estimator referred to as `optimal' accordingly. In the fractional
setting, with the LPR estimator being the method to be adjusted, these
optimal weights\ involve two types of covariance terms: $(i)$\ covariances
between the full-sample and sub-sample log-periodogram ordinates, and $(ii)$%
\ covariances between distinct sub-sample log-periodogram values. These
covariance terms may, in turn, be represented by cumulants of the discrete
Fourier transform (DFT) of the time series. Building on results in \color{blue}\citet[Chapters~2 and 4]{Brillinger1981}\color{black}, we first derive closed-form expressions for the association
between the corresponding DFTs in terms of cumulants. These expressions are
used to derive the form of dependence between the periodograms (at a given
ordinate or at different ordinates) associated with the full sample and the
sub-samples, which allows us to obtain closed-form expressions for the
covariances terms, $\left( i\right) $ and $\left( ii\right) $, and, hence,
to evaluate the optimal weights.

We prove the consistency and asymptotic normality of the optimal jackknife
estimator. Most notably, we establish that the convergence rate and
asymptotic variance are equal to those of the unadjusted LPR estimator. This
implies that there is \textit{no} inflation\ in asymptotic efficiency
compared to the \textit{unadjusted} LPR estimator of $d,$ despite the bias
reduction that is achieved. This compares with \color{blue}\cite{GuggenbergerSun2006}\color{black}, in which the goal is to produce an
estimator (for a given value of $r$) with an asymptotic variance that is
smaller than that of the corresponding bias-adjusted estimator of \color{blue}\cite{AndrewsGuggenberger2003}\color{black}, as based on the
same value of $r$, $\widehat{d}_{r}^{AG}$. In particular, in the case where $r=0$, and no bias adjustment is achieved (with $\widehat{d}_{r}^{AG}$
equivalent to the raw LPR estimator), the\ estimator of \color{blue}\citeauthor{GuggenbergerSun2006}\color{black}\ is still biased, but with a (possibly) reduced asymptotic
variance. In addition, in contrast with \color{blue}\citeauthor{GuggenbergerSun2006}\color{black}, and the other analytical bias adjustment methods cited above,
our theoretical results do not rely on the assumption of Gaussianity.
Specifically, expressions for the dominant bias term and variance of the LPR
estimator -- needed in the construction of the jackknife estimator and as originally derived by \color{blue}\cite{HurvichDeoBrodshky1998}\color{black}\ for fractional \textit{Gaussian} processes - are shown to hold under
non-Gaussian assumptions. Hence, all theoretical results for the
bias-adjusted estimator hold under similar generality.\footnote{%
We refer the reader to \color{blue}\cite{HahnNewey2004}\color{black}, %
\color{blue}\cite{Chambers2013}\color{black}, \color{blue}\cite{ChenYu2015}%
\color{black}\ and \color{blue}\cite{RobinsonKaufmann2015}\color{black}\ for
other applications of the jackknife in time series settings. To our
knowledge the technique has been used only once in a long memory setting%
\textit{\ per se}, namely in the numerical work of\ \color{blue}\cite%
{EkonomiButka2011}\color{black}, where the method of\ \color{blue}\cite%
{Chambers2013}\color{black}\ is adopted for the purpose of reducing the bias
of the LPR estimator to the first order. However, no rigorous proofs of the
properties of the estimator are provided, and no attempt at yielding an optimal estimator in the sense given in the current paper, is made.}

Extensive simulation exercises are conducted in order to compare the finite
sample performance of the jackknife estimator with that of alternative
approaches, including the bias-adjusted estimators of \color{blue}\cite{GuggenbergerSun2006}\color{black}\ and \color{blue}\cite{PoskittMartinGrose2016}\color{black}. Results show that certain versions of
the optimally bias-corrected\ jackknife estimator outperform the alternative bias-adjusted estimators of \color{blue}\citeauthor{GuggenbergerSun2006}%
\color{black}\ and \color{blue}\citeauthor{PoskittMartinGrose2016}\color{black}, in terms of bias-reduction and root mean squared error
(RMSE), with the RMSE being somewhat close to, or even smaller than, that of
the LPR in some cases.\textbf{\ }In the empirically realistic case where the
true values of the parameters - required in order to evaluate the optimal
weights in the jackknife estimator - are unknown, we implement the jackknife
technique using an iterative procedure. This feasible version of the
estimator does not consistently outperform either the bootstrap-based
estimator of \color{blue}\citeauthor{PoskittMartinGrose2016}\color{black}\ or (a feasible version of) the method of \color{blue}\citeauthor{GuggenbergerSun2006}\color{black}, but is not substantially
inferior, in terms of either bias or RMSE, and is sometimes still the least
biased estimator of all.

We assess the finite sample performance of all bias-adjusted estimators
under scenarios of both correct model specification and misspecification
and, for completeness, parametric methods based on maximum likelihood
estimation (MLE) and pre-whitening are included in the assessment.\footnote{%
We thank a referee for these suggestions.} As would be anticipated, given
the asymptotic efficiency of MLE under correct specification, no
semi-parametric method out-performs the optimal parametric approach in terms
of RMSE in this case. However, when the short memory dynamics need to be
estimated, a semi-parametric method is typically less biased than both
parametric methods. When the model is misspecified, the semi-parametric
methods are dominant in terms of both bias and RMSE, with the feasible
jackknife estimator producing the least bias in some cases, most notably
when the true process has a moving average component that is omitted in the
model specification.

In summary, the paper makes two important contributions to the literature on
semi-parametric estimation in fractional models. First, a new estimator is
derived that bias-corrects the popular LPR estimator to a given order, with
no associated variance inflation asymptotically. Second, that estimator is
shown to perform well in finite samples, under ideal conditions, and to hold
its own in empirically relevant scenarios, relative to existing comparators.

The remainder of the paper is organized as follows. In Section~\ref{Sec2:LPR}, we introduce two log-periodogram regression estimators; namely, the LPR estimator originally proposed by \color{blue}\cite{GPH1983}\color{black}\ and the particular bias-reduced estimator of \color{blue}\cite{GuggenbergerSun2006}\color{black}. In Section \ref{Sec3:Jackknife LPR}, we
develop the new jackknife estimator that accommodates both bias correction
and variance minimization\ via the appropriate choice of weights. All
theoretical results pertaining to the construction of the afore-mentioned
covariance terms, and the resultant asymptotic properties of the optimal
estimator, are given in Section~\ref{Sec3:Properties}. Section \ref%
{Sec4:Simulation} documents the finite sample performance of the estimator
by means of a\ Monte Carlo study.

The proofs of\ all results are contained in Appendix A, while\ Appendix B
provides various technical results, including the evaluation of the
covariances required for the construction of the weights for the optimal
jackknife estimator. Appendix C contains Tables \ref%
{Table:Bias_ARFIMA(1,d,0)_b} to \ref{Table:MSE_ARFIMA(1,d,1)_mis}, which
document the results of the Monte Carlo study, with these results summarized
briefly in Table \ref{rank} in the same appendix. The following notation is
used throughout: \textquotedblleft $\rightarrow ^{P}$\textquotedblright\
denotes convergence in probability, \textquotedblleft $\rightarrow ^{D}$%
\textquotedblright\ denotes convergence in distribution, and
\textquotedblleft $\rightarrow $\textquotedblright\ is used to indicate the
limit as $n\rightarrow \infty $, (unless otherwise stated). The $k^{th}$%
-order spectral density function of the time series $\left\{ X_{t}\right\} $
is denoted by $f_{X\ldots X}\left( \lambda _{1},\lambda _{2},\ldots
,,\lambda _{k-1}\right) $, where $\lambda _{1},\lambda _{2},\ldots ,,\lambda
_{k-1}$ are fundamental frequencies. For instance, the density function
given in (\ref{Spectral density_ARFIMA}) is the second-order spectral
density of $\left\{ Y_{t}\right\} .$

\section{Log-periodogram regression estimation methods\label{Sec2:LPR}}

In this section we briefly review two log-periodogram regression estimators;
namely, the raw (unadjusted) LPR estimator\ and the bias-reduced weighted-average estimator of \color{blue}\cite{GuggenbergerSun2006}\color{black}\ (GS). These estimators are used as benchmarks for later
comparisons, and the raw LPR estimator, of course, underpins the jackknife
method developed in Section \ref{Sec3:Jackknife LPR}. We summarize the
asymptotic properties of these estimators and the assumptions underlying
those properties. In contrast to earlier proofs related to the LPR estimator
(e.g. \color{blue}\citealp*{HurvichDeoBrodshky1998}\color{black})\ we do not assume that the data generating process (DGP) is
Gaussian. This extension to non-Gaussian processes means that the properties
subsequently derived for the optimal jackknife estimator are also applicable
for this general case.

\subsection{The log-periodogram regression estimator}

Let $\mathbf{y}^{\top }=(y_{1},y_{2},...,y_{n})$\ be a sample of $n$
observations from a process with a spectral density as given in~(\ref%
{Spectral density_ARFIMA}). The LPR estimator, $\widehat{d}_{n},$ is
motivated by the following simple linear regression model that is formed
directly from the spectral density given in (\ref{Spectral density_ARFIMA}), 
\begin{equation}
\log I_{Y}^{\left( n\right) }\left( \lambda _{j}\right) =(\log f_{YY}^{\ast
}(0)-C)-2d\log (2\sin (\lambda _{j}/2))+\xi _{j},  \label{reg:mod}
\end{equation}%
where 
\begin{equation}
I_{Y}^{\left( n\right) }\left( \lambda \right) =|D_{Y}^{\left( n\right)
}\left( \lambda \right) |^{2},\quad D_{Y}^{\left( n\right) }\left( \lambda
\right) =\frac{1}{\sqrt{2\pi n}}\sum_{t=1}^{n}y_{t}\exp \left( -\imath
\lambda t\right) ,  \label{Periodogram}
\end{equation}%
and $D_{Y}^{\left( n\right) }\left( \lambda _{j}\right) $ is the DFT of the
vector of realizations, $\mathbf{y},$ measured at Fourier frequencies, $%
\lambda _{j}=2\pi j/n,$ $(j=1,2...,N_{n}),$ $N_{n}=\lfloor n^{\alpha
}\rfloor $ for $0<\alpha <1,$ and $\imath =\sqrt{-1}$ is the imaginary unit.
Here, the error terms $\xi _{j}=\log \left( I_{Y}^{(n)}\left( \lambda
_{j}\right) /f_{YY}(\lambda _{j})\right) +C+V_{j},$ $j=1,2,...,N_{n}$, where%
\begin{equation}
V_{j}=\log \left( f_{YY}^{\ast }(\lambda _{j})/f_{YY}^{\ast }(0)\right) ,
\label{v}
\end{equation}%
are assumed to be asymptotically independently and identically distributed (%
\textit{i.i.d.}) and $C$ is the Euler constant. The LPR estimator of $d$ is
simply the OLS estimator of the slope\ parameter in\ (\ref{reg:mod}) and is
given by%
\begin{equation}
\widehat{d}_{n}=\frac{-0.5\sum_{j=1}^{N_{n}}\left( x_{j}-\overline{x}\right)
z_{j}}{\sum_{j=1}^{N_{n}}(x_{j}-\overline{x})^{2}},  \label{LPR}
\end{equation}%
where $z_{j}=\log I_{Y}^{\left( n\right) }(\lambda _{j}),$ $x_{j}=\log
(2\sin (\lambda _{j}/2)),$ and $\overline{x}=\frac{1}{N_{n}}%
\sum\nolimits_{j=1}^{N_{n}}x_{j}.$ The subscript $n$ is introduced here in
order to distinguish this full-sample version of the estimator from that
computed subsequently from sub-samples, in the process of applying the
jackknife.

Certain statistical properties of the LPR estimator such as its bias,
variance, mean-squared-error (MSE) and asymptotic distribution have been derived by \color{blue}\cite{HurvichDeoBrodshky1998}\color{black}\ under
given regularity conditions, and with certain approximations invoked.
Alternative expressions for the bias and variance of the LPR estimator are
provided in Theorem $1$ of \color{blue}\cite{AndrewsGuggenberger2003}%
\color{black}, plus in Theorem 3.1 of \color{blue}\cite{GuggenbergerSun2006}%
\color{black}, by setting $r=0$. \color{blue}\cite{Lieberman2001}%
\color{black}\ also provides a formula for the expectation of the LPR
estimator under the same\ conditions as \color{blue}%
\citeauthor{HurvichDeoBrodshky1998}%
\color{black}; however, his expression is an infinite sum of\ a quantity
that depends on the true values of $d$ and the short memory parameters,
which renders a feasible version of the jackknife technique using his
expression more cumbersome.

With all results cited above derived under the assumption of Gaussianity, we
now extend the results stated in Theorems 1 and 2 of \color{blue}\cite{HurvichDeoBrodshky1998}\color{black}\ to the general (potentially
non-Gaussian) case. In particular, the resultant expression for the
expectation of the LPR estimator is used in the specification of the optimal
jackknife estimator, and in the proof of its properties.

We begin with the following assumptions on the DGP:

\begin{enumerate}
\item[$\left( A.1\right) $] There exists $G>0,$ such that%
\begin{equation*}
f_{YY}(\lambda )=G\lambda ^{-2d}+O(\lambda ^{2-2d})\text{ as }\lambda
\rightarrow 0+,
\end{equation*}%
where `$\rightarrow 0+$' denotes an approach from above.

\item[$\left( A.2\right) $] In a neighbourhood $\left( 0,\varepsilon \right) 
$ of the origin, $f_{YY}(\lambda )$ is differentiable on $\left[ -\pi ,\pi %
\right] \backslash \left\{ 0\right\} $ and%
\begin{equation*}
\left\vert \frac{d}{d\lambda }\log f_{YY}(\lambda )\right\vert =O(\lambda
^{-1}),\text{ as }\lambda \rightarrow 0+.
\end{equation*}%
In addition, for some $0<\widetilde{B}_{2}$, $\widetilde{B}_{3}<\infty $, $%
f_{YY}^{\ast \prime }(0)=0,$ $\left\vert f_{YY}^{\ast \prime \prime
}(\lambda )\right\vert <\widetilde{B}_{2}$ and $\left\vert f_{YY}^{\ast
\prime \prime \prime }(\lambda )\right\vert <\widetilde{B}_{3}$, where $%
f_{YY}^{\ast \prime }\left( \lambda \right) ,$ $f_{YY}^{\ast \prime \prime
}\left( \lambda \right) $ and $f_{YY}^{\ast \prime \prime \prime }\left(
\lambda \right) $ denote, respectively, the first-, second- and third-order
derivatives of $f_{YY}^{\ast }$ with respect to $\lambda $ in a neighborhood
of zero$.$

\item[$\left( A.3\right) $] $\left\{ Y_{t}\right\} ,$ $t\in \mathbb{Z}%
:=\{0,\pm 1,\pm 2,\cdots \},$ satisfies%
\begin{equation*}
Y_{t}-\mu _{Y}=\sum_{j=0}^{\infty }b_{j}\varepsilon _{t-j},\text{ \ }%
\sum_{j=0}^{\infty }b_{j}^{2}<\infty ,\text{ \ }\left\vert \frac{d}{d\lambda 
}b(\lambda )\right\vert =O(\lambda ^{-1})\text{ \ \ as }\lambda \rightarrow
0+,
\end{equation*}%
where $b(\lambda )=\sum_{j=0}^{\infty }b_{j}\exp \left( \imath j\lambda
\right) $ and $\left\{ \varepsilon _{t}\right\} $ is a strictly stationary
process with $E\left( \varepsilon _{t}\right) =0$ and$\ E\left( \varepsilon
_{t}^{2}\right) =1$.

\item[$\left( A.4\right) $] The innovation process $\left\{ \varepsilon
_{t}\right\} $ satisfies the conditions in $\left( A.3\right) $. In
addition, $E\left( \varepsilon _{t}\right) ^{3}<\infty $ and $E\left(
\varepsilon _{t}\right) ^{4}<\infty $.
\end{enumerate}

\noindent Assumptions $\left( A.1\right) -\left( A.3\right) $\ are standard
in the long memory literature (see, \color{blue}\citealp*{FoxTaqqu1986}\color{black}, \color{blue}\citealp*{HurvichDeoBrodshky1998}\color{black}\ \color{blue}\citealp*{LiebermanRosemarinRousseau2012}\color{black}, among others) and are satisfied by the class of ARFIMA
models. The boundedness of the first three derivatives of $f_{YY}^{\ast }$\
in Assumption $\left( A.2\right) $\ is required to control\ the fourth-order
moment of the sine and cosine components of the standardized DFTs that are
used to derive the bias term of the LPR. Assumption $\left( A.4\right) $
specifies the third and fourth moments of $\left\{ \varepsilon _{t}\right\} $%
\ to be finite,\ as we do not invoke Gaussianity. The boundedness imposed on
the higher-order moments of $\left\{ \varepsilon _{t}\right\} $ ensures the
asymptotic normality of the DFTs associated with the process $\left\{
Y_{t}\right\} $. The asymptotic normality of the DFTs is, in turn, used in
proving Theorems \ref{LPR prop} -- \ref{AP_LPR}.

We now state Theorem \ref{LPR prop}, which gives the mean, variance and
asymptotic distribution of the LPR estimator. We subsequently exploit these
results to construct the optimal jackknife estimator, and to prove its
properties, in Section \ref{Sec3:Jackknife LPR}.

\begin{theorem}
\label{LPR prop}\noindent Let Assumptions $\left( A.1\right) -\left(
A.3\right) $ hold. Given $N_{n}\rightarrow \infty ,$ $n\rightarrow \infty ,$
with $\frac{N_{n}\log N_{n}}{n}\rightarrow 0,$ 
\begin{eqnarray}
E%
\big(%
\widehat{d}_{n}%
\big)
&=&d_{0}-\frac{2\pi ^{2}}{9}\frac{f_{YY}^{\ast \prime \prime }\left(
0\right) }{f_{YY}^{\ast }\left( 0\right) }\frac{N_{n}^{2}}{n^{2}}+o%
\Big(%
\frac{N_{n}^{2}}{n^{2}}%
\Big)%
+O%
\Big(%
\frac{\log ^{3}N_{n}}{N_{n}}%
\Big)%
,  \label{Expectation of LPR} \\
Var%
\big(%
\widehat{d}_{n}%
\big)
&=&\frac{\pi ^{2}}{24N_{n}}+o%
\Big(%
\frac{1}{N_{n}}%
\Big)
\label{Variance of LPR}
\end{eqnarray}%
and $\widehat{d}_{n}\rightarrow ^{P}d_{0}.$ Given that $\left( A.4\right) $
also holds and if $N_{n}=o\left( n^{4/5}\right) $\ and $\log ^{2}n=o\left(
N_{n}\right) ,$\ then,%
\begin{equation}
\sqrt{N_{n}}(\widehat{d}_{n}-d_{0})\rightarrow ^{D}N%
\Big(%
0,\tfrac{\pi ^{2}}{24}%
\Big)%
\text{ as }n\rightarrow \infty .  \label{Distribution of LPR}
\end{equation}
\end{theorem}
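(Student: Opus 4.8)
The plan is to establish the three claims of Theorem~\ref{LPR prop} --- the bias expansion, the variance formula, and the asymptotic normality --- by extending the arguments of \cite{HurvichDeoBrodshky1998} to the non-Gaussian case. Write $\widehat{d}_{n}-d_{0}=\sum_{j}a_{j}\varepsilon_{j}$, where $a_{j}=-0.5(x_{j}-\overline{x})/\sum_{k}(x_{k}-\overline{x})^{2}$ and $\varepsilon_{j}=z_{j}-E(z_{j})$ is the centered log-periodogram ordinate, together with the deterministic remainder coming from $E(z_{j})$ and the short-memory distortion $V_{j}$ in \eqref{v}. The first step is to control the deterministic part: using Assumptions $(A.1)$--$(A.2)$, in particular $f_{YY}^{\ast\prime}(0)=0$ and the boundedness of $f_{YY}^{\ast\prime\prime}$, a Taylor expansion of $\log f_{YY}^{\ast}(\lambda_{j})$ about the origin gives $V_{j}=\tfrac12\,(f_{YY}^{\ast\prime\prime}(0)/f_{YY}^{\ast}(0))\lambda_{j}^{2}+o(\lambda_{j}^{2})$ uniformly over $j\le N_{n}$. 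Feeding this into the OLS formula and using the standard Riemann-sum asymptotics $\sum_{j}(x_{j}-\overline{x})^{2}\sim N_{n}$ and $\sum_{j}(x_{j}-\overline{x})\lambda_{j}^{2}\sim -\tfrac{4\pi^{2}}{9}N_{n}^{3}/n^{2}$ (these follow because $x_{j}=\log\lambda_{j}+O(\lambda_{j}^{2})$ and $\int_{0}^{1}(\log u - \int_{0}^{1}\log v\,dv)u^{2}\,du$ can be evaluated in closed form) yields the leading bias term $-\tfrac{2\pi^{2}}{9}(f_{YY}^{\ast\prime\prime}(0)/f_{YY}^{\ast}(0))N_{n}^{2}/n^{2}$ in \eqref{Expectation of LPR}.

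The second step handles the stochastic part. I would show $E(\varepsilon_{j})=O(\log^{3}N_{n}/N_{n})$ uniformly --- this is the source of the $O(\log^{3}N_{n}/N_{n})$ remainder in \eqref{Expectation of LPR} --- by bounding $E\log(I_{Y}^{(n)}(\lambda_{j})/f_{YY}(\lambda_{j}))+C$ via the approximation of the normalized periodogram by an exponential random variable, where the higher-order moment control needed for this approximation is exactly where Assumption $(A.2)$ (boundedness of the first three derivatives of $f_{YY}^{\ast}$) enters, replacing the Gaussian computations of \cite{HurvichDeoBrodshky1998}. For the variance \eqref{Variance of LPR}, the key fact is that the normalized periodogram ordinates at distinct Fourier frequencies are asymptotically independent unit-mean exponentials, so $\operatorname{Var}(\log I_{Y}^{(n)}(\lambda_{j}))\to \pi^{2}/6$ and the cross-covariances are $o(1/N_{n})$ in aggregate; then $\operatorname{Var}(\widehat{d}_{n})=\sum_{j}a_{j}^{2}\operatorname{Var}(\varepsilon_{j})+(\text{cross terms})$, and using $\sum_{j}a_{j}^{2}\sim \tfrac14 N_{n}^{-1}$ gives $\tfrac{\pi^{2}}{24N_{n}}+o(N_{n}^{-1})$. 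The asymptotic-normality step then combines a Lindeberg-type CLT for the linear form $\sum_{j}a_{j}\varepsilon_{j}$ with the asymptotic independence and asymptotic normality of the DFTs $D_{Y}^{(n)}(\lambda_{j})$; here the finiteness of the third and fourth moments of $\{\varepsilon_{t}\}$ in $(A.4)$ is used to obtain the joint asymptotic normality of the (real and imaginary parts of the) DFTs --- via a cumulant-summability argument built on \cite[Chapters~2 and 4]{Brillinger1981} --- and hence of the $\log$-periodogram ordinates, while the bandwidth restrictions $N_{n}=o(n^{4/5})$ and $\log^{2}n=o(N_{n})$ ensure, respectively, that the bias term $N_{n}^{2}/n^{2}=o(N_{n}^{-1/2})$ does not contaminate the limit and that the logarithmic remainders are negligible relative to the $N_{n}^{-1/2}$ scaling.

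The main obstacle is the non-Gaussian extension: in \cite{HurvichDeoBrodshky1998} the exact joint normality of DFTs of a Gaussian process makes the moment bounds on $\log(I_{Y}^{(n)}/f_{YY})$ and the CLT essentially automatic, whereas here one must show that the deviation from Gaussianity is asymptotically negligible. Concretely, the hard part is establishing a uniform (over $j\le N_{n}$) bound on the moments of the log of the normalized periodogram --- one needs enough control to get $E(\varepsilon_{j})=O(\log^{3}N_{n}/N_{n})$ and to verify the Lindeberg condition --- using only the summability/differentiability conditions in $(A.3)$ and the fourth-moment condition in $(A.4)$. I would route this through a careful cumulant bound: the $k$-th order cumulant of the DFTs is $O(\lambda_{j}^{-2d}\cdot \text{const})$ with the constant governed by the $k$-th order spectral density $f_{Y\ldots Y}$, whose existence and boundedness near the origin follow from $(A.3)$--$(A.4)$ and the ARFIMA-type structure, giving the needed approximation of $I_{Y}^{(n)}(\lambda_{j})/f_{YY}(\lambda_{j})$ by an exponential variable with an error small enough for all three conclusions. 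The Riemann-sum and Taylor-expansion bookkeeping in the first step is routine and I would not belabour it.
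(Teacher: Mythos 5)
Your route is essentially the one the paper itself takes: the paper's proof is a one-line citation, applying Corollary A.1 of Nadarajah, Martin and Poskitt (2017) (non-Gaussian DFT and periodogram asymptotics obtained by Brillinger-type cumulant arguments under the third- and fourth-moment conditions of $(A.4)$) to Lemmas 2, 5, 6 and 7 of Hurvich, Deo and Brodsky (1998), which is exactly the decomposition, uniform moment bounds on the log normalized periodogram, and Lindeberg-type CLT for the linear form that you sketch. One slip worth correcting in your otherwise routine bookkeeping: $\sum_{j\leq N_{n}}(x_{j}-\overline{x})\lambda_{j}^{2}\sim +\tfrac{8\pi^{2}}{9}\,N_{n}^{3}/n^{2}$ (since $\int_{0}^{1}(\log u+1)u^{2}\,du=2/9$), not $-\tfrac{4\pi^{2}}{9}\,N_{n}^{3}/n^{2}$; with your constant the leading bias would come out as $+\tfrac{\pi^{2}}{9}\,\bigl(f_{YY}^{\ast\prime\prime}(0)/f_{YY}^{\ast}(0)\bigr)N_{n}^{2}/n^{2}$, whereas the correct sum reproduces the $-\tfrac{2\pi^{2}}{9}\,\bigl(f_{YY}^{\ast\prime\prime}(0)/f_{YY}^{\ast}(0)\bigr)N_{n}^{2}/n^{2}$ term in \eqref{Expectation of LPR}.
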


\subsection{The weighted-average log-periodogram regression estimator\label%
{Sec2:WLPR}}

The motivation for the estimator of \color{blue}\cite{GuggenbergerSun2006}\color{black}\ stems from the work of \color{blue}\cite{AndrewsGuggenberger2003}\color{black}. With (\ref{v}) being the term that
causes the dominant bias in the LPR estimator, \color{blue}\citeauthor{AndrewsGuggenberger2003}\color{black}\ use a Taylor series expansion around $j=0$ to approximate (%
\ref{v}) as an even polynomial in the frequencies of order $r.$\footnote{%
The odd-order terms of the Taylor's expansion around zero are exactly zero.
This leads to the expansion with only even-order terms.} Including the first 
$2r$ terms (with $r\geq 1$) in the log-periodogram regression in (\ref%
{reg:mod}) as additional regressors leads to%
\begin{equation}
\ln I_{Y}^{\left( n\right) }\left( \lambda _{j}\right) =(\log f_{YY}^{\ast
}(0)-C)-2d\log (2\sin (\lambda _{j}/2))+\sum\limits_{k=1}^{r}\frac{b_{2k}}{%
\left( 2k\right) !}\lambda _{j}^{2k}+\zeta _{j},  \label{reg:pseudo}
\end{equation}%
where $\zeta _{j}=\xi _{j}-\sum\nolimits_{k=1}^{r}\frac{b_{2k}}{\left(
2k\right) !}\lambda _{j}^{2k}$. Application of OLS to (\ref{reg:pseudo})
then yields an estimator of $d$, $\widehat{d}_{r}^{AG}$, with reduced bias
relative to the raw LPR estimator, $\widehat{d}_{n}$. The bias-adjusted
estimator is shown to be $\sqrt{N_{n}}$- consistent, with an asymptotic
variance equal to $\tfrac{\pi ^{2}}{24}c_{r}$, with $c_{r}>1$ for $r\geq 1$
and $c_{r}=1$ for $r=0$.

\color{blue}\cite{GuggenbergerSun2006}\color{black}\ proceed to show that an
appropriate weighted average of raw LPR estimators, as based on different
bandwidths, $N_{n,i}=\left\lfloor q_{i}N_{n}\right\rfloor ;$ $i=1,\ldots ,K,$
for fixed numbers $q_{i}$ chosen suitably, has the same asymptotic bias as $%
\widehat{d}_{r}^{AG}$ (constructed using $N_{n}$), but with a reduced
asymptotic variance. That is, bias reduction is achieved at a smaller cost than is the original method of \color{blue}\cite{AndrewsGuggenberger2003}%
\color{black}. Further, for the case of $r=0$, the bias of the raw LPR
estimator is retained but with reduced asymptotic variance. The authors also
demonstrate that the weighted-average estimator, denoted by $\widehat{d}%
_{r}^{GS}$\ hereafter, can be implemented via a simple two-step procedure.
In the first step, a series of $K$ LPR estimates are obtained using the
regression model in (\ref{reg:mod}) and\textbf{\ }for bandwidths, $N_{n,i},$ 
$i=1,\ldots ,K.$ Then, in the second step, the following pseudo-regression
is estimated, using the $K$ estimates produced in the first step as
observations of the dependent variable in the regression,\textbf{\ } 
\begin{equation}
\widehat{d}_{N_{n,i}}=d+\sum_{j=1}^{r}\beta _{2j}q_{i}^{2j}+\beta _{2+2r}%
\Big(%
q_{i}^{2+2r}-\delta \sum_{p=1}^{K}q_{p}^{2+2r}%
\Big)%
+u_{i},\text{ }i=1,\ldots ,K,  \label{dhat_GS}
\end{equation}%
where $u_{i}$ is the error term, and $\mathbf{u}^{\top
}=(u_{1},u_{2},...,u_{K})$ has a zero (vector) mean and asymptotic
variance-covariance matrix, 
\begin{equation*}
\mathbf{\Omega }=\left( \Omega _{i,j}\right) \in 
\mathbb{R}
^{K\times K},\text{ with }\Omega _{i,j}=\frac{1}{\max \left(
q_{i},q_{j}\right) }.
\end{equation*}%
The tuning parameter $\delta $ on the right-hand-side of (\ref{dhat_GS}) is
a fixed non-zero constant that is used to control the multiplicative
constant of the dominant bias term and render that term equivalent to the
dominant bias term of $\widehat{d}_{r}^{AG}$. The estimator, $\widehat{d}%
_{r}^{GS}$, is then defined as the first component of the GLS estimator of $%
\left( d,\mathbf{\beta }^{\top }\right) ^{\top }$, where $\mathbf{\beta }%
^{\top }=\left( \beta _{2},\beta _{4},\ldots ,\beta _{2+2r}\right) ,$ that
is,%
\begin{equation}
\Big(%
\widehat{d}_{r}^{GS},\widehat{\mathbf{\beta }}^{\top }%
\Big)%
^{\top }=\left( \mathbf{Z}^{\top }\mathbf{\Omega }^{-1}\mathbf{Z}\right)
^{-1}\mathbf{Z}^{\top }\mathbf{\Omega }^{-1}\widehat{\mathbf{d}},
\label{GLS}
\end{equation}%
where $\widehat{\mathbf{d}}$ is the $(K\times 1)$ dimensional vector with $%
i^{th}$ element $\widehat{d}_{N_{n,i}}$, and%
\begin{equation*}
\mathbf{Z}^{\top }=%
\big(%
\mathbf{z}_{1},\ldots ,\mathbf{z}_{K}%
\big)%
\in 
\mathbb{R}
^{\left( 2+r\right) \times K},\text{ with }\mathbf{z}_{i}^{\top }=%
\Big(%
1,q_{i}^{2},\ldots ,q_{i}^{2r},%
\Big(%
q_{i}^{2+2r}-\delta \sum_{p=1}^{K}q_{p}^{2+2r}%
\Big)%
\Big)%
.
\end{equation*}

Both the raw LPR estimator, $\widehat{d}_{n}$, and the weighted-average estimator, $\widehat{d}_{r}^{GS}$, with $r=1$, are used as comparators of our proposed jackknife procedure in the Monte Carlo simulation exercises in Section~\ref{Sec4:Simulation}.

\section{The optimal jackknife log-periodogram regression estimator\label%
{Sec3:Jackknife LPR}}

\subsection{Definition of the jackknife estimator}

The\ idea behind jackknifing is to generate a set of sub-samples, by
deleting one or more observations of the original sample, while preserving
the structure of dependence within the sub-samples; the aim being to use
(weighted) sub-sample estimates to produce a bias-corrected estimator of the
parameter of interest. Let $\mathbf{y}_{i}$ $\left( i=1,2,...,m\right) $
denote a set of $m$ sub-samples of $\mathbf{y},$ each of which has equal
length, $l,$ such that $n=l\times m.$ If sub-samples are chosen using the
`non-overlapping' method, then $\mathbf{y}_{i}^{\top }=\left( y_{\left(
i-1\right) l+1},\ldots ,y_{il}\right) $ for $i=1,\ldots ,m$; alternatively
if the sub-sampling scheme is `moving-block' then $\mathbf{y}_{i}^{\top
}=\left( y_{i},\ldots ,y_{i+l-1}\right) $ for all $i$. In the current
context we use the jackknife technique to bias correct the LPR estimator.
Hence, we need to produce the full-sample estimator, $\widehat{d}_{n}$, and
the LPR estimators produced by applying OLS to the model in (\ref{reg:mod}),
using the relevant sub-sample. We denote these $m$ sub-sample estimators
(based on either the non-overlapping or moving-block method) by $\widehat{d}%
_{i}$, $i=1,2,...,m.$ We summarize notation corresponding to the full-sample
estimation and both forms of sub-sample estimation in Table \ref%
{Table:Notations}, for ease of subsequent referencing.

\begin{table}[tbp]
\vspace{-0.5cm}%
\caption{\protect\small Quantities related to the full sample and the
sub-samples used in the construction of the jackknife estimator}\label%
{Table:Notations}

\medskip

\begin{tabular}{lc|c|c}
\hline\hline
&  &  &  \\ 
&  & Full sample & $i^{th}$ sub-sample \\ \hline
&  &  &  \\ 
$\left( i\right) $ & \multicolumn{1}{l|}{\small Frequency} & 
\multicolumn{1}{|l|}{$\lambda _{j}=2\pi j/n$} & \multicolumn{1}{|l}{$\mu
_{j}=2\pi j/l=2\pi jm/n=m\lambda _{j}$} \\ 
$\left( ii\right) $ & \multicolumn{1}{l|}{\small Frequency range} & 
\multicolumn{1}{|l|}{$j=1,...,N_{n}$} & \multicolumn{1}{|l}{$j=1,...,N_{l}$}
\\ 
$\left( iii\right) $ & \multicolumn{1}{l|}{\small Spectral density} & 
\multicolumn{1}{|l|}{$f_{YY}\left( \lambda \right) =\left( 2\sin \left(
\lambda /2\right) \right) ^{-2d}f_{YY}^{\ast }\left( \lambda \right) $} & 
\multicolumn{1}{|l}{$f_{Y_{i}Y_{i}}\left( \mu \right) =\left( 2\sin \left(
\mu /2\right) \right) ^{-2d}f_{Y_{i}Y_{i}}^{\ast }\left( \mu \right) $} \\ 
$\left( iv\right) $ & \multicolumn{1}{l|}{\small DFT} & \multicolumn{1}{|l|}{%
$D_{Y}^{\left( n\right) }\left( \lambda \right) =\frac{1}{\sqrt{2\pi n}}%
\sum\nolimits_{t=1}^{n}y_{t}\exp \left( -\imath \lambda t\right) $} & 
\multicolumn{1}{|l}{$D_{Y_{i}}^{\left( l\right) }\left( \mu \right) =\frac{1%
}{\sqrt{2\pi l}}\sum\nolimits_{t=1}^{l}y_{t+i^{\prime }}\exp \left( -\imath
\mu t\right) $} \\ 
$\left( v\right) $ & \multicolumn{1}{l|}{\small Periodogram} & 
\multicolumn{1}{|l|}{$I_{Y}^{\left( n\right) }\left( \lambda \right)
=|D_{Y}^{\left( n\right) }\left( \lambda \right) |^{2}$} & 
\multicolumn{1}{|l}{$I_{Y_{i}}^{\left( l\right) }\left( \mu \right)
=|D_{Y_{i}}^{\left( l\right) }\left( \mu \right) |^{2}$} \\ 
$\left( vi\right) $ & \multicolumn{1}{l|}{\small Error term} & 
\multicolumn{1}{|l|}{$\xi _{j}=\log \left( I_{Y}^{\left( n\right) }\left(
\lambda _{j}\right) /f_{YY}\left( \lambda _{j}\right) \right) $} & 
\multicolumn{1}{|l}{$\xi _{j}^{(i)}=\log \left( I_{Y_{i}}^{\left( l\right)
}\left( \mu _{j}\right) /f_{Y_{i}Y_{i}}\left( \mu _{j}\right) \right) $} \\ 
& \multicolumn{1}{l|}{\small Other notation:} &  &  \\ 
$\left( vii\right) $ &  & \multicolumn{1}{|l|}{$x_{j}=\log (2\sin \left(
\lambda _{j}/2\right) )$} & \multicolumn{1}{|l}{$x_{j}^{^{\prime }}=\log
(2\sin \left( \mu _{j}/2\right) )$} \\ 
$\left( viii\right) $ &  & \multicolumn{1}{|l|}{$\overline{x}=\left.
\sum\nolimits_{t=1}^{N_{n}}x_{j}\right/ N_{n}$} & \multicolumn{1}{|l}{$%
\overline{x^{^{\prime }}}=\left. \sum\nolimits_{t=1}^{N_{l}}x_{j}^{^{\prime
}}\right/ N_{l}$} \\ 
$\left( ix\right) $ &  & \multicolumn{1}{|l|}{$a_{j}=x_{j}-\overline{x}$} & 
\multicolumn{1}{|l}{$a_{j}^{^{\prime }}=x_{j}^{^{\prime }}-\overline{%
x^{^{\prime }}}$} \\ 
$\left( x\right) $ &  & \multicolumn{1}{|l|}{$S_{xx}=\sum%
\nolimits_{j=1}^{N_{n}}a_{j}^{2}$} & \multicolumn{1}{|l}{$S_{xx}^{^{\prime
}}=\sum\nolimits_{j=1}^{N_{l}}a_{j}^{^{\prime }2}$} \\ 
&  &  &  \\ \hline\hline
\end{tabular}

\medskip {\small Note, regarding the sub-sample notation in point }${\small %
(iv)}$, {\small if the sub-samples are drawn with the non-overlapping scheme then, }${\small i}^{\prime }{\small =(i-1)l.}${\small \ If the moving-block scheme is used then, }${\small i}^{\prime }{\small =i-1.}$
\end{table}
Define the jackknife estimator, $\widehat{d}_{J,m},$ as%
\begin{equation}
\widehat{d}_{J,m}=w_{n}\widehat{d}_{n}-\sum_{i=1}^{m}w_{i}\widehat{d}_{i},
\label{Jackknife statistic of d}
\end{equation}%
where $w_{n}$ and $\left\{ w_{i}\right\} _{i=1}^{m}$ are the weights
assigned to the full-sample estimator and the sub-sample estimators,
respectively. Re-iterating, $\widehat{d}_{n}$ is the LPR estimator obtained
from the full sample (as defined directly in (\ref{LPR}))\textbf{\ }and $%
\widehat{d}_{i}$ $\left( i=1,2,...,m\right) $ denotes the $i^{th}$
sub-sample LPR estimator. Under the conditions of Theorem \ref{LPR prop}, it
is straightforward to show that%
\begin{eqnarray}
E%
\big(%
\widehat{d}_{J,m}%
\big)
&=&%
\big(%
w_{n}-\sum_{i=1}^{m}w_{i}%
\big)%
d_{0}-%
\Big(%
\frac{2\pi ^{2}}{9}\frac{f_{YY}^{\ast \prime \prime }\left( 0\right) }{%
f_{YY}^{\ast }\left( 0\right) }\frac{N_{n}^{2}}{n^{2}}w_{n}-\frac{2\pi ^{2}}{%
9}\frac{f_{Y_{i}Y_{i}}^{\ast \prime \prime }\left( 0\right) }{%
f_{Y_{i}Y_{i}}^{\ast }\left( 0\right) }\frac{N_{l}^{2}}{l^{2}}%
\sum_{i=1}^{m}w_{i}%
\Big)
\notag \\
&&+o%
\Big(%
\frac{N_{n}^{2}}{n^{2}}%
\Big)%
+O%
\Big(%
\frac{\log ^{3}N_{n}}{N_{n}}%
\Big)%
,  \label{Expectation of dhat_J}
\end{eqnarray}%
and%
\begin{eqnarray}
Var%
\big(%
\widehat{d}_{J,m}%
\big)
&=&\frac{\pi ^{2}}{24N_{n}}w_{n}^{2}+\frac{\pi ^{2}}{24N_{l}}%
\sum_{i=1}^{m}w_{i}^{2}+2\sum_{i=1}^{m-1}\sum_{j=i+1}^{m}w_{i}w_{j}Cov%
\big(%
\widehat{d}_{i},\widehat{d}_{j}%
\big)
\notag \\
&&-2w_{n}\sum_{i=1}^{m}w_{i}Cov%
\big(%
\widehat{d}_{n},\widehat{d}_{i}%
\big)%
+o%
\Big(%
\frac{1}{N_{n}}%
\Big)%
.  \label{Variance of dhat_J}
\end{eqnarray}%
The covariance between the full-sample LPR estimator and each sub-sample LPR
estimator, $Cov\left( \widehat{d}_{n},\widehat{d}_{i}\right) ,$ and the
covariances\textbf{\ }between the different sub-sample LPR estimators, $%
Cov\left( \widehat{d}_{i},\widehat{d}_{j}\right) ,$ for $i\neq j,$ $%
i,j=1,2,...,m,$ are given respectively by,%
\begin{eqnarray}
Cov%
\big(%
\widehat{d}_{n},\widehat{d}_{i}%
\big)
&=&\frac{1}{4S_{xx}}\frac{1}{S_{xx}^{^{\prime }}}\sum\limits_{j=1}^{N_{n}}%
\sum\limits_{k=1}^{N_{l}}a_{j}a_{k}^{(i)}Cov%
\big(%
\log I_{Y}^{\left( n\right) }\left( \lambda _{j}\right) ,\log
I_{Y_{i}}^{\left( l\right) }\left( \mu _{k}\right) 
\big)
\label{Covariance(dhat,dhat_i)} \\
Cov%
\big(%
\widehat{d}_{i},\widehat{d}_{i^{\prime }}%
\big)
&=&\frac{1}{4}\frac{1}{\left( S_{xx}^{^{\prime }}\right) ^{2}}%
\sum\limits_{j=1}^{N_{l}}\sum\limits_{k=1}^{N_{l}}a_{j}^{\prime
}a_{k}^{\prime }Cov%
\big(%
\log I_{Y_{i}}^{\left( l\right) }\left( \mu _{j}\right) ,\log
I_{Y_{i^{\prime }}}^{\left( l\right) }\left( \mu _{k}\right) 
\big)%
,  \label{Covariance(dhat_i,dhat_j)}
\end{eqnarray}%
with all notation as defined in Table \ref{Table:Notations}.

Our aim is to obtain\ the set of weights, $\left\{ w_{n},w_{1},\ldots
,w_{m}\right\} ,$ such that $\widehat{d}_{J,m}$ has the following properties:

\begin{enumerate}
\item[$\left( P.1\right) $] $\widehat{d}_{J,m}$ is an asymptotically
unbiased estimator of $d_{0}$, with bias reduced to an order of $o(\left.
N_{n}^{2}\right/ n^{2})$, and,

\item[$\left( P.2\right) $] $\widehat{d}_{J,m}$ achieves minimum variance
among all such bias-reduced estimators.
\end{enumerate}

\noindent \noindent The `optimal' jackknife estimator so defined is derived
via the Lagrangian method in the following section. In Section 4, the
asymptotic properties of the covariances in (\ref{Covariance(dhat,dhat_i)})
and (\ref{Covariance(dhat_i,dhat_j)}) that determine the asymptotic
behaviour of the estimator are derived, and the asymptotic efficiency of the
estimator then proven.

\subsection{Derivation of the optimal estimator\label{Optimization}}

The minimization problem is formulated as follows. Produce weights, $\left\{
w_{n},w_{1},\ldots ,w_{m}\right\} $, that satisfy:\textbf{\ } 
\begin{equation}
\min_{w_{n},\left\{ w_{i}\right\} _{i=1}^{m}}Var%
\big(%
\widehat{d}_{J,m}%
\big)%
,  \label{opt}
\end{equation}%
subject to two constraints%
\begin{eqnarray}
g^{1}(w_{n},w_{1},\ldots ,w_{m}) &=&w_{n}-\sum_{i=1}^{m}w_{i}-1=0,
\label{cons_1} \\
g^{2}\left( w_{n},w_{1},...,w_{m}\right) &=&\frac{N_{n}^{2}}{n^{2}}%
w_{n}-m^{2}\frac{N_{l}^{2}}{l^{2}}\sum_{i=1}^{m}w_{i}=0.  \label{cons_2}
\end{eqnarray}%
We refer to the optimal estimator so produced as $\widehat{d}_{J,m}^{Opt}$
hereinafter.

Constraints\ (\ref{cons_1}) and (\ref{cons_2}) ensure that Property $\left(
P.1\right) $ holds for the resultant estimator. Specifically, (\ref{cons_1})
ensures that $\widehat{d}_{J,m}^{Opt}$ is asymptotically unbiased for $d_{0}$%
, as can be seen by inspection of (\ref{Expectation of dhat_J}). The\
dominant bias term of $\widehat{d}_{J,m}^{Opt}$ will be eliminated if and
only if the second component\ appearing in (\ref{Expectation of dhat_J}) is
set to zero; that is, if and only if%
\begin{equation}
\frac{2\pi ^{2}}{9}\frac{f_{YY}^{\ast \prime \prime }\left( 0\right) }{%
f_{YY}^{\ast }\left( 0\right) }\frac{N_{n}^{2}}{n^{2}}w_{n}-\frac{2\pi ^{2}}{%
9}\frac{f_{Y_{i}Y_{i}}^{\ast \prime \prime }\left( 0\right) }{%
f_{Y_{i}Y_{i}}^{\ast }\left( 0\right) }\frac{N_{l}^{2}}{l^{2}}%
\sum_{i=1}^{m}w_{i}=0.  \label{unbiased_cond}
\end{equation}%
Using Point $\left( iii\right) $ of Table \ref{Table:Notations}, we have
that $f_{Y_{i}Y_{i}}^{\ast }\left( 0\right) =f_{YY}^{\ast }\left( 0\right) $
and $f_{Y_{i}Y_{i}}^{\ast \prime \prime }\left( 0\right) =m^{2}f_{YY}^{\ast
\prime \prime }\left( 0\right) $. Hence, the condition in (\ref%
{unbiased_cond}) collapses to constraint (\ref{cons_2}). Given (\ref{opt}),
Property $\left( P.2\right) $ is satisfied by construction.

Henceforth writing, $Cov%
\big(%
\widehat{d}_{n},\widehat{d}_{i}%
\big)%
=c_{n,i}^{\ast }$ and $Cov%
\big(%
\widehat{d}_{i},\widehat{d}_{i^{\prime }}%
\big)%
=c_{i,j}^{\dagger },$ such that $c_{i,j}^{\dagger }=c_{j,i}^{\dagger },$ the
Lagrangian function\ is given by,%
\begin{eqnarray}
\tilde{L}\left( w_{n},w_{1},\ldots ,w_{m},\delta _{1},\delta _{2}\right) &=&%
\frac{\pi ^{2}}{24N_{n}}w_{n}^{2}+\frac{\pi ^{2}}{24N_{l}}%
\sum_{i=1}^{m}w_{i}^{2}+2\sum_{i=1}^{m-1}\sum_{j=i+1}^{m}w_{i}w_{j}c_{i,j}^{%
\dagger }  \notag \\
&&-2w_{n}\sum_{i=1}^{m}w_{i}c_{n,i}^{\ast }+\delta _{1}%
\big(%
w_{n}-\sum_{i=1}^{m}w_{i}-1%
\big)
\notag \\
&&+\delta _{2}%
\Big(%
\frac{N_{n}^{2}}{n^{2}}w_{n}-m^{2}\frac{N_{l}^{2}}{l^{2}}\sum_{i=1}^{m}w_{i}%
\Big)%
.  \label{L}
\end{eqnarray}%
The first-order conditions (FOCs) are thus given by,%
\begin{eqnarray*}
\frac{\partial \tilde{L}}{\partial \delta _{1}} &=&0\Rightarrow
w_{n}-\sum_{i=1}^{m}w_{i}=1, \\
\frac{\partial \tilde{L}}{\partial \delta _{2}} &=&0\Rightarrow \frac{%
N_{n}^{2}}{n^{2}}w_{n}-m^{2}\frac{N_{l}^{2}}{l^{2}}\sum_{i=1}^{m}w_{i}=0, \\
\frac{\partial \tilde{L}}{\partial w_{n}} &=&0\Rightarrow \frac{2\pi ^{2}}{%
24N_{n}}w_{n}-2\sum_{i=1}^{m}w_{i}c_{n,i}^{\ast }+\delta _{1}+\frac{N_{n}^{2}%
}{n^{2}}\delta _{2}=0, \\
\frac{\partial \tilde{L}}{\partial w_{i,m}} &=&0\Rightarrow
-2w_{n}c_{n,i}^{\ast }+\frac{2\pi ^{2}}{24N_{l}}w_{i}+2\sum_{j=1,j\neq
i}^{m}w_{j}c_{i,j}^{\dagger }-\delta _{1}-m^{2}\frac{N_{l}^{2}}{l^{2}}\delta
_{2}=0;\text{ }i=1,\ldots ,m.
\end{eqnarray*}%
Defining 
\begin{equation}
\mathbf{A}=\left[ 
\begin{array}{cccccc}
1 & -1 & \ldots & -1 & 0 & 0 \\ 
\frac{N_{n}^{2}}{n^{2}} & -m^{2}\frac{N_{l}^{2}}{l^{2}} & \ldots & -m^{2}%
\frac{N_{l}^{2}}{l^{2}} & 0 & 0 \\ 
\frac{\pi ^{2}}{12N_{n}} & -2c_{n,1}^{\ast } & \ldots & -2c_{n,m}^{\ast } & 1
& \frac{N_{n}^{2}}{n^{2}} \\ 
-2c_{n,1}^{\ast } & \frac{\pi ^{2}}{12N_{l}} & \ldots & 2c_{1,m}^{\dagger }
& -1 & -m^{2}\frac{N_{l}^{2}}{l^{2}} \\ 
\vdots & \vdots & \ddots & \vdots & \vdots & \vdots \\ 
-2c_{n,m}^{\ast } & 2c_{1,m}^{\dagger } & \ldots & \frac{\pi ^{2}}{12N_{l}}
& -1 & -m^{2}\frac{N_{l}^{2}}{l^{2}}%
\end{array}%
\right] ,\text{ }\mathbf{w}=\left[ 
\begin{array}{c}
w_{n} \\ 
w_{1} \\ 
\vdots \\ 
w_{m} \\ 
\delta _{1} \\ 
\delta _{2}%
\end{array}%
\right] \text{ and }\mathbf{b}=\left[ 
\begin{array}{c}
1 \\ 
0 \\ 
0 \\ 
\vdots \\ 
0 \\ 
0%
\end{array}%
\right] ,  \label{matrix form}
\end{equation}%
the optimal solution, $\mathbf{w}^{\ast }=\left[ 
\begin{array}{cccccc}
w_{n}^{\ast } & w_{1}^{\ast } & \ldots & w_{m}^{\ast } & \delta _{1}^{\ast }
& \delta _{2}^{\ast }%
\end{array}%
\right] ^{\top },$ is given by 
\begin{equation}
\mathbf{w}^{\ast }=\mathbf{A}^{-1}\mathbf{b.}  \label{w_star}
\end{equation}%
Given the structure of $\mathbf{b}$ this means that the solutions for the
weights are given by the elements of the first column of $\mathbf{A}^{-1}$,
and the optimal jackknife estimator is accordingly given as:%
\begin{equation}
\widehat{d}_{J,m}^{Opt}=w_{n}^{\ast }\widehat{d}_{n}-\sum_{i=1}^{m}w_{i}^{%
\ast }\widehat{d}_{i},  \label{Optimal jackknife}
\end{equation}%
where $w_{n}^{\ast }=\left[ 1-\left( \left. N_{n}l\right/ \left(
N_{l}mn\right) \right) ^{2}\right] ^{-1}$, given immediately by solving the
first two FOCs.

To complete the result we need to show that (\ref{w_star}) is a local
minimizer of $\tilde{L}\left( \cdot \right) .$ To do so, we need to show
that: $\left( i\right) $ the constraint qualification -- that the rank of
the matrix formed by the first-order derivatives at the solution of the
constraints with respect to parameters, except the Lagrangian parameters, is
equal to the number of conditions -- is met, $\left( ii\right) $ the
solution of the Lagrangian function satisfies the FOCs, and, $\left(
iii\right) $ the leading principal minors of the bordered Hessian matrix, $%
\mathbf{H}_{\left( m+3\right) \times \left( m+3\right) }^{B},$ all take the
same sign of $\left( -1\right) ^{k},$ where $k$ is the number of constraints
(see, Chapter 12 of\ \color{blue}\citealp*{ChiangWainwright2005}\color{black}, for more details).

In our problem, the number of constraints equals $2$ and 
\begin{equation*}
Rank\left[ 
\begin{array}{cc}
\frac{\partial g^{1}}{\partial w_{n}} & \frac{\partial g^{2}}{\partial w_{n}}
\\ 
\frac{\partial g^{1}}{\partial w_{1}} & \frac{\partial g^{2}}{\partial w_{1}}
\\ 
\vdots & \vdots \\ 
\frac{\partial g^{1}}{\partial w_{m}} & \frac{\partial g^{2}}{\partial w_{m}}%
\end{array}%
\right] =Rank\left[ 
\begin{array}{cc}
1 & 1 \\ 
\frac{N_{n}^{2}}{n^{2}} & m^{2}\frac{N_{l}^{2}}{l^{2}} \\ 
\vdots & \vdots \\ 
\frac{N_{n}^{2}}{n^{2}} & m^{2}\frac{N_{l}^{2}}{l^{2}}%
\end{array}%
\right] =2.
\end{equation*}%
Hence, the rank condition is met. The second condition is met by default.
The important condition is the third one, where we need to show that the
leading principal minors of $\mathbf{H}_{\left( m+3\right) \times \left(
m+3\right) }^{B},$ exceed zero for every $m=2,3,\ldots .$ The bordered
Hessian matrix for our case is given by 
\begin{equation*}
\mathbf{H}_{\left( m+3\right) \times \left( m+3\right) }^{B}=\left[ 
\begin{array}{cccccc}
0 & 0 & 1 & -1 & \cdots & -1 \\ 
0 & 0 & \frac{N_{n}^{2}}{n^{2}} & -m^{2}\frac{N_{l}^{2}}{l^{2}} & \ldots & 
-m^{2}\frac{N_{l}^{2}}{l^{2}} \\ 
1 & \frac{N_{n}^{2}}{n^{2}} & \frac{\pi ^{2}}{12N_{n}} & -2c_{n,1}^{\ast } & 
\ldots & -2c_{n,m}^{\ast } \\ 
-1 & -m^{2}\frac{N_{l}^{2}}{l^{2}} & -2c_{n,1}^{\ast } & \frac{\pi ^{2}}{%
12N_{l}} & \ldots & 2c_{1,m}^{\dagger } \\ 
\vdots & \vdots & \vdots & \vdots & \ddots & \vdots \\ 
-1 & -m^{2}\frac{N_{l}^{2}}{l^{2}} & -2c_{n,m}^{\ast } & 2c_{1,m}^{\dagger }
& \ldots & \frac{\pi ^{2}}{12N_{l}}%
\end{array}%
\right] .
\end{equation*}%
The proof of positivity of the principal minors of the above matrix is given
in Appendix B. Hence, the solution in (\ref{w_star}) is a local minimizer of 
$\tilde{L}\left( \cdot \right) $.

We complete this section with three remarks:
\begin{remark}
If we consider only bias reduction to the order $\left. N_{n}^{2}\right/
n^{2}$, without concurrent variance reduction; that is, we produce an
estimator that satisfies only (P.1), and not (P.2), then the formulae for the%
\textbf{\ }weights are%
\begin{equation}
w_{n}^{\ast }=
\Big[%
1-%
\Big(%
\frac{N_{n}}{N_{l}}\frac{l}{nm}%
\Big)%
^{2}%
\Big]%
^{-1}\text{ and }w_{i}^{\ast }=\frac{1}{m}\left( w_{n}^{\ast }-1\right) ,%
\text{ for }i=1,\ldots ,m.  \label{weights by Chambers}
\end{equation}
These weights mimic those of\ \color{blue}\cite{Chambers2013}\color{black}\
in the short memory setting (under a non-overlapping sub-sampling scheme),
in which variance minimization was not a consideration.
\end{remark}
\begin{remark}
When \color{blue}\cite{Chambers2013}\color{black}\ considers the
moving-block sub-sampling scheme (again, in the short memory setting), he
chooses the sub-sample length to be $l=n-m+1$. In this case, when $n$\ is
large and $m$\ is small, the sub-sample length is $l\approx n$, and the
impact of bias correction is reduced as a consequence; something that is in
evidence in the Monte Carlo simulation results reported by that author. As a
result of this observation, in our investigations we use the common
sub-sample length of $l=n/m$, under both the non-overlapping and
moving-block schemes.
\end{remark}
\begin{remark}
Condition $3.3$ of \color{blue}\cite{GuggenbergerSun2006}\color{black}\ has
a\ similar purpose to our (\ref{cons_2}). The difference is that we
eliminate the $O%
\Big(%
\left. N_{n}^{2}\right/ n^{2}%
\Big)$ term from the bias of the LPR estimator, whereas they eliminate bias up to
an order of $\left. N_{n}^{2r}\right/ n^{2r},$ for some $r\geq 1.$ The role
played by (\ref{opt}) is somewhat different from that played by Condition $3.4$ of \color{blue}\cite{GuggenbergerSun2006}\color{black}. The latter
condition is imposed mainly to link the bias and variance of $\widehat{d}%
_{r}^{GS}$ to that of $\widehat{d}_{r}^{AG}$, for any given $r$; this link
occurring via the introduction of the tuning parameter, $\delta $ (see (\ref%
{dhat_GS}) above), on which the finite sample performance of their estimator
depends. In our method, (\ref{opt}) is used to control the increase in
variance that occurs due to the reduction in bias, with the optimal weights
determined by (\ref{opt})-(\ref{cons_2}) not depending on any arbitrary
quantities.
\end{remark}

\section{Asymptotic results\label{Sec3:Properties}}

The asymptotic properties of the optimal jackknife estimator\ depend on the
optimal weights which, in turn, are functions of the covariance terms
between the log-periodograms associated with the full sample and the
sub-samples, as seen in (\ref{Covariance(dhat,dhat_i)}) and (\ref%
{Covariance(dhat_i,dhat_j)}). Provided that the DGP satisfies assumptions $\left( A.1\right) -\left( A.3\right) $, \color{blue}\cite{Lahiri2003}%
\color{black}\ has shown that periodogram ordinates are asymptotically
independent when the frequencies are at a sufficient distance apart,
provided that the set of observations remain the same. However, in our case,
we are dealing with periodograms calculated both for the full set of
observations, and for subsets of the full set. Thus, two questions that
arise here are: $\left( i\right) $ Are the periodograms of the full sample
and the sub-samples at different frequency ordinates asymptotically
independent$?$ and, $\left( ii\right) $ When $d\neq 0$, do the periodograms
still converge to a chi-square distribution as they do when $d=0\ $(see
Theorem $5.2.6$ of\ \color{blue}\citealp*{Brillinger1981}%
\color{black})$?$ We address both questions in Section \ref{period_props}
and provide formulae for calculating the relevant covariance terms
algebraically, adopting the procedure used in \color{blue}\cite{Brillinger1981}\color{black}. In Section \ref{asymptotic} we then use these
results to derive the asymptotic properties of the optimal jackknife
estimator.

\subsection{Stochastic properties of periodograms in the full sample and in
sub-samples \label{period_props}}

We begin by defining $\left\{ X_{1},X_{2},\ldots ,X_{h}\right\} $ as an
arbitrary set of $h$ stationary time series. We link these series to the
full sample and the $m$ sub-samples of observations below. Our use of
notation in this section mimics, in large part, that of \color{blue}\citet[\S. 2.6]{Brillinger1981}\color{black}.

\begin{definition}
\label{Def:cum_series}Suppose $\left\{ X_{1},X_{2},\ldots ,X_{h}\right\} $
is a set of $h$ stationary time series. The $k^{th}$-order cumulant $\kappa
_{X_{a_{1}},\ldots ,X_{a_{k}}}\left( u_{1},...,u_{k-1}\right) ,$ for $%
k=1,2,\ldots ,h$, \ and $u_{j}=0,\pm 1,\pm 2...$ for $j=1,2,...,k-1$, is
defined as follows,%
\begin{equation}
\kappa _{X_{a_{1}},\ldots ,X_{a_{k}}}\left( u_{1},...,u_{k-1}\right)
=\int_{-\pi }^{\pi }\ldots \int_{-\pi }^{\pi }\exp 
\Big(%
-\imath \sum\limits_{j=1}^{k-1}\lambda _{j}u_{j}%
\Big)%
f_{X_{a_{1}},\ldots ,X_{a_{k}}}\left( \lambda _{1},\ldots ,\lambda
_{k-1}\right) d\lambda _{1}\ldots d\lambda _{k-1},
\label{kth order cumulant}
\end{equation}%
where $f_{X_{a_{1}},\ldots ,X_{a_{k}}}\left( \lambda _{1},\ldots ,\lambda
_{k-1}\right) $ is the $k^{th}$-order joint spectral density of $\left\{
X_{a_{1}},\ldots ,X_{a_{k}}\right\} $, for $-\pi <\lambda _{j}<\pi $, $%
j=1,2,...,k-1,$ with $a_{1},\ldots ,a_{k}=1,2,\ldots ,h,$ and $k=1,2,\ldots $%
.\newline
For $\sum_{u_{1}=-\infty }^{\infty }\cdots \sum_{u_{k-1}=-\infty }^{\infty
}\left\vert \kappa _{X_{a_{1}},\ldots ,X_{a_{k}}}\left(
u_{1},...,u_{k-1}\right) \right\vert <\infty ,$ then the inverse form of (%
\ref{kth order cumulant}) is given by,%
\begin{equation}
f_{X_{a_{1}},\ldots ,X_{a_{k}}}\left( \lambda _{1},\ldots ,\lambda
_{k-1}\right) =\left( 2\pi \right) ^{-k+1}\sum_{u_{1}=-\infty }^{\infty
}\cdots \sum_{u_{k-1}=-\infty }^{\infty }\kappa _{X_{a_{1}},\ldots
,X_{a_{k}}}\left( u_{1},...,u_{k-1}\right) \exp 
\Big(%
-\imath \sum\limits_{j=1}^{k-1}\lambda _{j}u_{j}%
\Big)%
.  \label{general relatioship of kth order spectrum}
\end{equation}
\end{definition}

Now let $X_{1}=\mathbf{y}$ denote the full sample of $n$ observations on the
random variable following the model in (\ref{Spectral density_ARFIMA});
whilst $X_{1+i}=\mathbf{y}_{i}$ denotes the vector of observations for the
sub-sample $i=1,2,\ldots ,m,$ with length $l$. Set $h=m+1$ in Definition \ref%
{Def:cum_series}. Let $D_{X_{1}}^{\left( n\right) }\left( .\right) $ and $%
D_{X_{1+i}}^{\left( l\right) }\left( .\right) $ respectively be the DFT of
the full sample and $i^{th}$ sub-sample at some frequency. Set%
\begin{equation}
L_{i}=\left\{ 
\begin{array}{cl}
n & if\text{ }i=1 \\ 
l & otherwise%
\end{array}%
\right. .  \label{Sample_length}
\end{equation}

In Proposition \ref{Proposition1} we give the expression for the $k^{th}$%
-order joint cumulant of the DFTs of the $h=m+1$ series associated with the
full sample and the $m$ sub-samples.

\begin{proposition}
\label{Proposition1}Suppose Assumptions $\left( A.1\right) -\left(
A.3\right) $ hold. The $k^{th}$-order cumulant of $%
\big\{%
D_{X_{a_{1}}}^{\left( L_{1}\right) }\left( \lambda _{1}\right) ,$ $%
D_{X_{a_{2}}}^{\left( L_{2}\right) }\left( \lambda _{2}\right) ,$ $...,$ $%
D_{X_{a_{k}}}^{\left( L_{k}\right) }\left( \lambda _{k}\right) 
\big\}%
$, for $k=1,2,\ldots $, is given by,%
\begin{equation}
\kappa _{D_{X_{a_{1}}},\ldots ,D_{X_{a_{k}}}}\left( \lambda _{1},...,\lambda
_{k-1}\right) =L^{-\frac{k}{2}}\left( 2\pi \right) ^{\frac{k}{2}-1}\Delta
^{\left( L\right) }%
\Big(%
\sum_{j=1}^{k}\lambda _{j}%
\Big)%
f_{X_{a_{1}},\ldots ,X_{a_{k}}}\left( \lambda _{1},...,\lambda _{k-1}\right)
+o%
\Big(%
L^{1-2d-\frac{k}{2}}%
\Big)%
,  \label{cumulant of our interest}
\end{equation}%
where, $L=\min \left\{ L_{1},\ldots ,L_{k}\right\} $.\footnote{%
The $k^{th}$-order cumulant associated with the DFTs should, for
completeness, be denoted by $\kappa _{D_{X_{a_{1}}}^{{\small (L}_{{\small 1}}%
{\small )}},\ldots ,D_{X_{a_{k}}}^{{\small (L}_{{\small k}}{\small )}%
}}\left( .,\ldots ,.\right) $. For notational ease, however, we express the cumulant without making explicit the relevant sample sizes.}
\end{proposition}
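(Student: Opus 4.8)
The plan is to compute the $k^{th}$-order joint cumulant of the collection of DFTs directly from its definition as a multilinear functional of the underlying random variables, expand using the cumulant spectral representation, and then isolate the leading term. First I would write each DFT as $D_{X_{a_r}}^{(L_r)}(\lambda_r)=(2\pi L_r)^{-1/2}\sum_{t_r=1}^{L_r}X_{a_r,t_r}\exp(-\imath\lambda_r t_r)$ (with the appropriate index shift $i'$ for sub-samples as in Table \ref{Table:Notations}), and use the multilinearity of cumulants together with the stationarity of $\{Y_t\}$ to get
\begin{equation*}
\kappa_{D_{X_{a_1}},\ldots,D_{X_{a_k}}}(\lambda_1,\ldots,\lambda_{k-1})
=\Big(\prod_{r=1}^{k}(2\pi L_r)^{-1/2}\Big)\sum_{t_1=1}^{L_1}\cdots\sum_{t_k=1}^{L_k}
\exp\Big(-\imath\sum_{r=1}^{k}\lambda_r t_r\Big)\,\mathrm{cum}\big(X_{a_1,t_1},\ldots,X_{a_k,t_k}\big),
\end{equation*}
and substitute the inverse relation (\ref{general relatioship of kth order spectrum}) for the joint cumulant of the $X$'s in terms of $f_{X_{a_1},\ldots,X_{a_k}}$. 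This is essentially the argument behind Theorem 4.3.2 of Brillinger, adapted to allow the DFTs to be formed from sub-samples of differing lengths and starting points.

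The key step is then to carry out the time-index summations. After changing variables to differences $u_r=t_r-t_k$ and substituting the spectral representation, the $t_k$-summation over the shortest range produces a (rescaled) Dirichlet kernel $\Delta^{(L)}(\sum_j\lambda_j)$ with $L=\min\{L_1,\ldots,L_k\}$, while the remaining summations, together with the boundedness/smoothness of $f_{X_{a_1},\ldots,X_{a_k}}$ guaranteed by Assumptions $(A.1)$–$(A.3)$, reproduce $f_{X_{a_1},\ldots,X_{a_k}}(\lambda_1,\ldots,\lambda_{k-1})$ up to the stated order. Collecting the normalizing constants $\prod_r(2\pi L_r)^{-1/2}$ and noting that in the error term only the smallest sample length matters gives the factor $L^{-k/2}(2\pi)^{k/2-1}$ in (\ref{cumulant of our interest}). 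The shift parameters $i'$ only contribute phase factors $\exp(\imath\lambda_r i')$ that, because the sub-samples are embedded in the same stationary process, get absorbed into the Dirichlet-kernel argument and the error, so they do not affect the leading term.

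The main obstacle is controlling the approximation error uniformly in the sub-sample configuration. Unlike the classical case, the spectral densities $f_{X_{a_1},\ldots,X_{a_k}}$ inherit the singularity at the origin from the long-memory structure (Assumption $(A.1)$ gives $f_{YY}(\lambda)\sim G\lambda^{-2d}$), so the standard Brillinger bounds — which assume bounded cumulant spectra — must be replaced by bounds that track the $\lambda^{-2d}$ growth. This is precisely where the $o(L^{1-2d-k/2})$ error term originates: one must show that the contribution of the boundary terms in the partial summation, and of the remainder in replacing the finite sum by the full Fourier series of the cumulant spectral density, is of order $L^{1-k/2}$ times the size of the spectral density near the relevant frequencies, i.e. $O(L^{-2d})$ worse than the leading term, hence $o(L^{1-2d-k/2})$ after the normalization. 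I would handle this by splitting the frequency range into a shrinking neighbourhood of zero and its complement, using summation by parts on the Dirichlet-kernel factor together with the derivative bound in $(A.3)$ on the complement, and a crude $L^{1-2d}$ size bound on the neighbourhood; Assumption $(A.2)$'s control of the derivatives of $f_{YY}^{\ast}$ is what makes the summation-by-parts step go through. Once the error estimate is in place for $k=2$, the general $k$ case follows by the same decomposition with routine (if tedious) bookkeeping of the $k$ index ranges.
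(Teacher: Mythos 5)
Your overall route is the same as the paper's: the paper proves Proposition \ref{Proposition1} by declaring it ``similar'' to its Lemma \ref{Lemma A.1}, whose proof is exactly your first two steps — invoke Brillinger's cumulant expansion (Lemma P4.2 / Theorem 4.3.2) to write $\kappa_{D_{X_{a_1}},\ldots,D_{X_{a_k}}}$ as a $k$-fold time sum of $\kappa_{X_{a_1}\ldots X_{a_k}}$, change variables to $u_j=t_j-t_k$, pull the $t_k$-sum out as the Dirichlet kernel $\Delta^{(\cdot)}\big(\sum_j\lambda_j\big)$, and identify the remaining truncated sum with $f_{X_{a_1},\ldots,X_{a_k}}$. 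Where you diverge is in the remainder analysis. The paper stays entirely in the time domain: the error is the tail $\sum_{|u_j|>S}|\kappa_{X_{a_1}\ldots X_{a_k}}(u_1,\ldots,u_{k-1})|$, bounded by inserting the factor $n^{-1+2d}\big(|u_1/n|^{1-2d}+\cdots+|u_{k-1}/n|^{1-2d}\big)\geq n^{-1+2d}$ valid on the tail, so that weighted summability of the cumulants (attributed to $(A.1)$/$(A.3)$) delivers the $o\big(L^{1-2d-k/2}\big)$ term in one line; no frequency splitting, no summation by parts, and no use of the derivative bounds in $(A.2)$ is made at this point. Your proposed frequency-domain decomposition near the origin with summation by parts is a heavier machine that would likely also work, but it is not needed for this statement and trades a one-line tail bound for a delicate uniformity argument you only sketch. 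Two smaller remarks: you first say you will substitute the spectral representation (\ref{general relatioship of kth order spectrum}) and later speak of Fourier-series truncation of the cumulant sum — these are the same identity used in opposite directions, and the paper only uses the latter; and your claim that collecting $\prod_r(2\pi L_r)^{-1/2}$ ``gives'' $L^{-k/2}(2\pi)^{k/2-1}$ is loose when the $L_r$ differ (the product differs from $L^{-k/2}$ by a bounded constant since $n=ml$ with $m$ fixed), though the paper's omitted proof glosses over exactly the same point, as it does over the phase factors from the sub-sample shifts, which in fact sit inside the joint cross-spectral density rather than in the error.
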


From Proposition \ref{Proposition1} we can derive the relationship between
the DFTs corresponding to full sample and the $m$\ sub-samples as the\
sample size increases. The result is given in the following theorem:

\begin{theorem}
\label{AP_DFT}Suppose Assumptions $\left( A.1\right) -\left( A.4\right) $
hold, and suppose $\lambda =2\pi r/L_{i}$ and $\omega =2\pi s/L_{j}$ for
integers $r$ and $s$. Then for a fixed value of $L_{i}$ and $L_{j}$, $%
D_{X_{a_{i}}}^{\left( L_{i}\right) }\left( \lambda \right) $ and $%
D_{X_{a_{j}}}^{\left( L_{j}\right) }\left( \mu \right) $ are asymptotically
independent, whenever $\max \left\{ L_{i}\lambda ,L_{j}\mu \right\}
\rightarrow \infty $, for $i\neq j$.
\end{theorem}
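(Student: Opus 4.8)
The plan is to derive asymptotic independence from the vanishing of all joint cumulants of order $k\ge 2$, exploiting the fact that (under Assumption $(A.4)$) the DFTs are asymptotically Gaussian and that, for jointly Gaussian random vectors, vanishing cross-covariances entails independence. First I would invoke Proposition \ref{Proposition1} with $k=2$: the second-order joint cumulant of $D_{X_{a_i}}^{(L_i)}(\lambda)$ and $D_{X_{a_j}}^{(L_j)}(\mu)$ is, up to an $o(L^{-2d})$ remainder with $L=\min\{L_i,L_j\}$, equal to $L^{-1}\Delta^{(L)}(\lambda+\mu)f_{X_{a_i}X_{a_j}}(\lambda)$, where $\Delta^{(L)}(\cdot)$ is the Dirichlet-type kernel (of order $L$). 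The key observation is that when $\lambda=2\pi r/L_i$ and $\mu=2\pi s/L_j$ are Fourier frequencies relative to \emph{different} sample sizes, the argument $\lambda+\mu$ is a sum of a multiple of $2\pi/L_i$ and a multiple of $2\pi/L_j$; because $L_i\ne L_j$ in the relevant comparisons (full sample versus sub-sample, or — in the moving-block case — overlapping sub-samples with shifted origins contributing a phase factor), $\Delta^{(L)}(\lambda+\mu)$ is $O(1)$ rather than $O(L)$ unless $\lambda+\mu$ is an exact multiple of $2\pi$. Dividing by $L$, the covariance is therefore $O(L^{-1})$ and tends to zero; combined with $\operatorname{Var}(D)=O(1)$, the \emph{correlation} tends to zero. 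The hypothesis $\max\{L_i\lambda,L_j\mu\}\to\infty$ is what keeps the frequencies bounded away from the origin (and from the aliasing points where the Dirichlet kernel spikes), so that the bound on $\Delta^{(L)}$ is uniform and the spectral density $f_{X_{a_i}X_{a_j}}$ stays controlled — near zero it blows up like $\lambda^{-2d}$, and the condition prevents that from overwhelming the $L^{-1}$ decay.

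Second, I would promote this covariance decay to full asymptotic independence. For this I appeal to the asymptotic normality of the (appropriately standardized) DFT vector $\big(D_{X_{a_i}}^{(L_i)}(\lambda),\,D_{X_{a_j}}^{(L_j)}(\mu)\big)$, which holds under Assumptions $(A.1)$–$(A.4)$ — the finiteness of the third and fourth moments of the innovations in $(A.4)$ is exactly what is needed here, as remarked in the discussion preceding Theorem \ref{LPR prop}. Concretely, one checks that all higher-order joint cumulants ($k\ge 3$) of the standardized DFTs vanish in the limit: by Proposition \ref{Proposition1} the $k^{th}$ cumulant is $O(L^{-k/2})$ at worst (the Dirichlet factor is $O(1)$ off the aliasing set, and the $k^{th}$-order spectral density is bounded by Assumptions $(A.2)$–$(A.3)$), while the appropriate standardization divides by $L^{0}$ for each real/imaginary component, so cumulants of order $k\ge 3$ are $o(1)$. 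Vanishing of all cumulants of order $\ge 3$ gives joint Gaussianity of the limit; vanishing of the order-2 cross-cumulant (Step 1) then makes the limiting bivariate Gaussian have a diagonal covariance matrix, hence independent components. Asymptotic independence of $D_{X_{a_i}}^{(L_i)}(\lambda)$ and $D_{X_{a_j}}^{(L_j)}(\mu)$ follows.

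The main obstacle — the step requiring genuine care rather than routine bookkeeping — is the uniform control of the Dirichlet kernel $\Delta^{(L)}(\lambda+\mu)$ when the two frequency grids have incommensurate spacings, together with the interaction between this kernel and the near-origin singularity of $f_{X_{a_i}X_{a_j}}$. One must verify that, on the frequency range picked out by $\max\{L_i\lambda,L_j\mu\}\to\infty$, the product $L^{-1}\Delta^{(L)}(\lambda+\mu)f_{X_{a_i}X_{a_j}}(\lambda)$ genuinely vanishes: if, say, $L_i\lambda$ stays bounded while $L_j\mu\to\infty$, then $\lambda\to0$ so $f$ diverges, and one needs $\Delta^{(L)}$ to compensate — here the moving-block phase factor and the precise form of $\Delta^{(L)}$ matter, and the bound $L^{-2d}$ on the remainder in Proposition \ref{Proposition1} (rather than a flat $o(1)$) is used to absorb the $\lambda^{-2d}$ growth. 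In the non-overlapping case the argument is cleaner because the sub-sample DFT is literally a partial sum over a disjoint index block, so one can write the cross-covariance as a single Dirichlet sum and bound it directly; I would present that case first and then indicate the modifications — essentially an extra bounded-modulus exponential phase — needed for the moving-block scheme, as foreshadowed in Table \ref{Table:Notations}.
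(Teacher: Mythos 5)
Your proposal follows essentially the same route as the paper: the paper's proof likewise applies Proposition \ref{Proposition1} with $k=2$ to write the cross-covariance as $\tfrac{1}{L}\Delta^{(L)}(\lambda+\mu)f_{X_{a_i}X_{a_j}}(\lambda)+o\big(L^{-2d}\big)$ (after checking via the first-order cumulant that the means behave as required) and concludes from this that the covariance between the two DFTs vanishes asymptotically. Your additional step of promoting the vanishing covariance to asymptotic independence through vanishing higher-order cumulants and joint asymptotic normality under $(A.4)$ is left implicit in the paper, which simply asserts independence once the covariance tends to zero, so your version is, if anything, more explicit on that final point.
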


Theorem \ref{AP_DFT} immediately implies the asymptotic independence of the
periodograms of the full sample and all sub-samples. However, in finite
samples, the dependence structure across these periodograms may play an
important role in determining the variance of the jackknife estimator in (%
\ref{Variance of dhat_J}), through the form of the covariances in (\ref%
{Covariance(dhat,dhat_i)}) and (\ref{Covariance(dhat_i,dhat_j)}).
Expressions for the covariances between the periodograms corresponding to\
the full sample and the sub-samples are provided in the following theorem,
from which further insights on this point can be gleaned.

\begin{theorem}
\label{AP_I}Let $I_{X_{a_{i}}}^{\left( L_{i}\right) }\left( \lambda \right) $
and $I_{X_{a_{j}}}^{\left( L_{j}\right) }\left( \lambda \right) $ be the
periodograms associated with DFTs $D_{X_{a_{i}}}^{\left( L_{i}\right)
}\left( \lambda \right) $ and $D_{X_{a_{j}}}^{\left( L_{j}\right) }\left(
\mu \right) $\ respectively. Suppose Assumptions $\left( A.1\right) -\left(
A.3\right) $ hold. Then,%
\begin{eqnarray}
Cov%
\big(%
I_{X_{a_{i}}}^{\left( L_{i}\right) }\left( \lambda \right)
,I_{X_{a_{j}}}^{\left( L_{j}\right) }\left( \mu \right) 
\big)
&=&\frac{2\pi }{L}f_{X_{a_{i}},X_{a_{i}},X_{a_{j}},X_{a_{j}}}\left( \lambda
,-\lambda ,\mu \right) +\frac{2\pi }{L}\left[ \eta \left( \lambda -\mu
\right) +\eta \left( \lambda +\mu \right) \right] \left\{
f_{X_{a_{i}}X_{a_{j}}}\left( \lambda \right) \right\} ^{2}  \notag \\
&&+2\pi \left[ \eta \left( \lambda -\mu \right) +\eta \left( \lambda +\mu
\right) \right] f_{X_{a_{i}}X_{a_{j}}}\left( \lambda \right) o%
\big(%
L^{^{-2d}}%
\big)%
+o%
\big(%
L^{^{-1-2d}}%
\big)%
,  \label{Covariance between samples}
\end{eqnarray}%
where $\eta \left( \omega \right) =\lim_{T\rightarrow \infty }\dfrac{1}{2\pi 
}\sum_{t=-T}^{T}\exp \left\{ -\imath \omega t\right\} $, and $L$ is as
defined in Proposition \ref{Proposition1}. When Assumption $\left(
A.4\right) $ also holds, the periodogram ordinates $I_{X_{a_{i}}}^{\left(
L_{i}\right) }\left( \mu \right) $ and $I_{X_{a_{j}}}^{\left( L_{j}\right)
}\left( \omega \right) $ with $i\neq j$, are asymptotically $%
f_{X_{1}X_{1}}\left( \cdot \right) \left. \chi _{\left( 2\right)
}^{2}\right/ 2$ random variables.
\end{theorem}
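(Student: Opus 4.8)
The plan is to follow the classical route of \citet[\S.~5.2]{Brillinger1981} for deriving the large-sample second-order properties of periodograms, but carrying along two sample sizes $L_i$ and $L_j$ (and two possibly distinct series $X_{a_i},X_{a_j}$) simultaneously. First I would write $I_{X_{a_i}}^{(L_i)}(\lambda)=D_{X_{a_i}}^{(L_i)}(\lambda)\overline{D_{X_{a_i}}^{(L_i)}(\lambda)}=D_{X_{a_i}}^{(L_i)}(\lambda)D_{X_{a_i}}^{(L_i)}(-\lambda)$ (using that the $X$'s are real, so $\overline{D(\lambda)}=D(-\lambda)$), and similarly for the $j$-th periodogram. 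Then the covariance $Cov\big(I_{X_{a_i}}^{(L_i)}(\lambda),I_{X_{a_j}}^{(L_j)}(\mu)\big)$ becomes a sum of products of cumulants of the four DFTs $D_{X_{a_i}}^{(L_i)}(\lambda)$, $D_{X_{a_i}}^{(L_i)}(-\lambda)$, $D_{X_{a_j}}^{(L_j)}(\mu)$, $D_{X_{a_j}}^{(L_j)}(-\mu)$, via the standard product-moment-to-cumulant (Leonov--Shiryaev) expansion. Subtracting $E(I_{X_{a_i}}^{(L_i)}(\lambda))E(I_{X_{a_j}}^{(L_j)}(\mu))$ removes the partition $\{\{1,2\},\{3,4\}\}$, leaving the fourth-order joint cumulant term (partition $\{1,2,3,4\}$) and the two ``cross'' second-order partitions $\{\{1,3\},\{2,4\}\}$ and $\{\{1,4\},\{2,3\}\}$.

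Next I would evaluate each of these three surviving terms using Proposition~\ref{Proposition1}. For the fourth-order term, Proposition~\ref{Proposition1} with $k=4$ gives $\kappa_{D,\dots,D}(\lambda,-\lambda,\mu)=L^{-2}(2\pi)\,\Delta^{(L)}(\lambda-\lambda+\mu-\mu)\,f_{X_{a_i}X_{a_i}X_{a_j}X_{a_j}}(\lambda,-\lambda,\mu)+o(L^{-1-2d})$; since the frequency arguments sum to zero, $\Delta^{(L)}(0)=L$, so this contributes $(2\pi/L)f_{X_{a_i}X_{a_i}X_{a_j}X_{a_j}}(\lambda,-\lambda,\mu)$ plus a lower-order remainder. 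For each of the two cross second-order partitions, Proposition~\ref{Proposition1} with $k=2$ gives a covariance of the form $L^{-1}\Delta^{(L)}(\lambda\mp\mu)f_{X_{a_i}X_{a_j}}(\lambda)$ up to $o(L^{-2d})$; multiplying the two factors in the partition and collecting terms produces $(2\pi/L)[\eta(\lambda-\mu)+\eta(\lambda+\mu)]\{f_{X_{a_i}X_{a_j}}(\lambda)\}^{2}$ together with the stated cross term $2\pi[\eta(\lambda-\mu)+\eta(\lambda+\mu)]f_{X_{a_i}X_{a_j}}(\lambda)\,o(L^{-2d})$ coming from the interaction of a leading factor with a remainder factor. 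Here I would use that $L^{-1}\Delta^{(L)}(\omega)\to\eta(\omega)$ in the appropriate sense (the Dirichlet-kernel-to-Dirac limit, exactly as in Brillinger), and that $\Delta^{(L)}(\omega)$ is bounded by a quantity controlled by $|L\omega|^{-1}$ off the grid, which is what makes the off-diagonal ($\max\{L_i\lambda,L_j\mu\}\to\infty$, $\lambda\neq\pm\mu$) contributions vanish — this recovers the asymptotic-independence statement already obtained in Theorem~\ref{AP_DFT}. Care is needed because $L=\min\{L_i,L_j\}$ while the individual DFTs live on their own scales $L_i,L_j$; I would handle this by noting that the $o(\cdot)$ bounds in Proposition~\ref{Proposition1} are already stated in terms of $L$, so the cross-scale mismatch only affects the $\Delta^{(L)}$ kernels, whose magnitude is in turn bounded using $\min\{L_i,L_j\}=L$.

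For the final assertion — that, under $(A.4)$, the diagonal periodogram ordinates $I_{X_{a_i}}^{(L_i)}(\mu)$ are asymptotically $f_{X_1X_1}(\cdot)\chi^2_{(2)}/2$ — I would invoke the asymptotic normality of the DFTs that $(A.4)$ guarantees (the same normality already used in Theorems~\ref{LPR prop}--\ref{AP_DFT}): the real and imaginary parts of $D_{X_{a_i}}^{(L_i)}(\mu)$, suitably normalized by the square root of the spectral density, are asymptotically independent $N(0,1/2)$ variates (this is the non-Gaussian analogue of Brillinger's Theorem~5.2.6, and is where Assumption~$(A.4)$'s finite third and fourth moments of $\{\varepsilon_t\}$ do their work — they kill the higher cumulants of the DFT fast enough for a CLT), so $|D|^2$ normalized is asymptotically $\tfrac12\chi^2_{(2)}$; for the cross pair $i\neq j$ the asymptotic independence from Theorem~\ref{AP_DFT} makes the joint statement immediate. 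The main obstacle I anticipate is purely bookkeeping rather than conceptual: tracking the interaction of the $o(L^{1-2d-k/2})$ error terms from Proposition~\ref{Proposition1} across the product-moment expansion so that they aggregate exactly into the stated $o(L^{-2d})$ and $o(L^{-1-2d})$ remainders — in particular verifying that the product of a leading $O(L^{-1})$ factor with an $o(L^{-2d})$ factor, and the product of two remainder factors, are each no larger than $o(L^{-1-2d})$, and that the one genuinely $O(L^{-2d})$-order cross term (the one displayed explicitly in \eqref{Covariance between samples}) is isolated correctly. I would also need to double-check that $f_{YY}^*$ being bounded above and away from zero (from \eqref{Spectral density_ARFIMA}) together with $(A.1)$ keeps all the spectral-density factors $f_{X_{a_i}X_{a_j}}(\lambda)$ finite and of the right order $O(\lambda^{-2d})=O(L^{2d})$ on the relevant frequency range, which is what converts the cumulant-order bounds into the periodogram-order bounds stated in the theorem.
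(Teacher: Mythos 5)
Your proposal follows essentially the same route as the paper's proof: expand $Cov\big(I_{X_{a_i}}^{(L_i)}(\lambda),I_{X_{a_j}}^{(L_j)}(\mu)\big)$ into the fourth-order joint cumulant plus the two cross products of second-order cumulants of the four DFTs, evaluate each via Proposition \ref{Proposition1}, and reduce the $\Delta^{(L)}$ kernels using the sum and product properties (the paper's (A.3) and (A.5)) to obtain \eqref{Covariance between samples}. For the distributional claim the paper likewise decomposes the periodogram into squared real and imaginary DFT parts and invokes asymptotic bivariate normality (citing Theorem 2.1 of Lahiri, 2003) under $(A.4)$, which is exactly the CLT step you describe, so the proposal is correct and matches the paper's argument.
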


Theorem \ref{AP_I} is a generalization of the result of Theorem $5.2.6$ of \color{blue}\cite{Brillinger1981}\color{black}\ to the context of
jackknifing. Equation (\ref{Covariance between samples}) provides the first
few dominant terms of the covariance between the periodograms associated
with the full sample and a particular sub-sample, or between distinct
sub-samples, at various frequency ordinates. Further, (\ref{Covariance
between samples}) reflects the fact that, for finite $n$, the relevant
periodograms are positively correlated. This result is to be anticipated
given that the sub-samples are subsets of the full sample and, hence, retain
the same dependence structure as the full sample. Furthermore, the theorem
states that the periodogram ordinates (for either the full sample and a
given sub-sample, or between sub-samples) have a limiting joint distribution
of the form, $f_{X_{1}X_{1}}(\lambda )\left. \chi _{\left( 2\right)
}^{2}\right/ 2$, where $f_{X_{1}X_{1}}(.)$ is the spectral density of the
time series from which the full sample is generated.%

Using the covariance terms and the distribution of the periodograms provided
in the above theorem, we can find the joint distribution of the
log-periodograms associated with the full sample and any sub-sample (or for
two distinct sub-samples). Using the joint distribution of the
log-periodograms, we can derive the moment generating function of the joint
distribution. This leads to the derivation of the covariance terms for the
log-periodogram. This result is provided in\ Appendix B. The covariances
between log-periodograms allow us to obtain the covariances between the\
full-sample and sub-sample LPR estimators given in (\ref%
{Covariance(dhat,dhat_i)}) and (\ref{Covariance(dhat_i,dhat_j)}). Exploiting
the relationship between the different LPR estimators, we then establish the
consistency and asymptotic normality of the optimal jackknife estimator in
the following section.

\subsection{Asymptotic properties of the optimal jackknife estimator\label%
{asymptotic}}

Using the results established in the previous section, we state the
relationship between the full-sample and sub-sample LPR estimators in
Theorem \ref{AI_LPR}. The asymptotic properties of the optimal jackknife
estimator are then established in Theorem \ref{AP_LPR}.
\begin{theorem}
\label{AI_LPR}Let $\widehat{d}_{n}$ and $\widehat{d}_{i}$ be the LPR
estimators for the full sample and the $i^{th}$ sub-sample with sub-sample
length, $l.$ Suppose Assumptions $\left( A.1\right) -\left( A.4\right) $
hold. Then,\ for a fixed value of $m,$

\begin{enumerate}
\item[$\left( i\right) $] $\widehat{d}_{n}$ and $\widehat{d}_{i}$ are
asymptotically independent.

\item[$\left( ii\right) $] $\widehat{d}_{i}$ and $\widehat{d}_{j}$ for $%
i\neq j,$ $i,j=1,\ldots ,m,$ are asymptotically independent.
\end{enumerate}
\end{theorem}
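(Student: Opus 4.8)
The plan is to reduce each independence claim about the LPR estimators to the corresponding statement about the log-periodogram ordinates that define them, and then invoke Theorem~\ref{AP_DFT} and Theorem~\ref{AP_I}. Recall from~(\ref{LPR}) and Table~\ref{Table:Notations} that each estimator is an affine functional of a finite (growing) collection of log-periodogram values: $\widehat{d}_{n}=-\tfrac{1}{2S_{xx}}\sum_{j=1}^{N_{n}}a_{j}\log I_{Y}^{(n)}(\lambda_{j})$ and $\widehat{d}_{i}=-\tfrac{1}{2S_{xx}'}\sum_{k=1}^{N_{l}}a_{k}'\log I_{Y_{i}}^{(l)}(\mu_{k})$. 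So it suffices to show that the two vectors of log-periodogram ordinates entering the two estimators are asymptotically independent, since a continuous (here, linear) map of asymptotically independent vectors yields asymptotically independent limits.

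For part~(i), fix $m$, so that $l=n/m$, and note that the full-sample Fourier frequencies $\lambda_{j}=2\pi j/n$, $j=1,\dots,N_{n}$, and the sub-sample frequencies $\mu_{k}=2\pi k/l = m\lambda_{k}$, $k=1,\dots,N_{l}$, satisfy $L_{1}\lambda_{j}=n\lambda_{j}=2\pi j\to\infty$ as soon as $j$ grows, and likewise $L_{1+i}\mu_{k}=l\mu_{k}=2\pi k\to\infty$; hence $\max\{L_{1}\lambda_{j},L_{1+i}\mu_{k}\}\to\infty$ for all but finitely many index pairs. By Theorem~\ref{AP_DFT}, $D_{Y}^{(n)}(\lambda_{j})$ and $D_{Y_{i}}^{(l)}(\mu_{k})$ are asymptotically independent; by the second statement of Theorem~\ref{AP_I}, the corresponding periodogram ordinates are asymptotically independent $f_{YY}(\cdot)\,\chi^{2}_{(2)}/2$ variates, and, being a measurable function of the periodograms, the log-periodograms inherit this asymptotic independence (and joint limiting law). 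I would then assemble the finite-dimensional vectors $\big(\log I_{Y}^{(n)}(\lambda_{1}),\dots\big)$ and $\big(\log I_{Y_{i}}^{(l)}(\mu_{1}),\dots\big)$, argue joint asymptotic independence across all the pairs simultaneously (using that the joint cumulants of order $\geq 3$ of the DFTs vanish in the limit by Proposition~\ref{Proposition1}, so asymptotic independence is not merely pairwise), and conclude that $\widehat{d}_{n}$ and $\widehat{d}_{i}$ — as fixed linear combinations — are asymptotically independent. Part~(ii) is identical in structure: for $i\neq j$ both estimators are built from sub-samples of common length $l$, the relevant frequencies are $\mu_{k}=2\pi k/l$ with $L\mu_{k}=2\pi k\to\infty$, and Theorems~\ref{AP_DFT} and~\ref{AP_I} (applied with $a_{i}\neq a_{j}$ in the index set $\{1,\dots,h\}$, i.e.\ to two genuinely distinct sub-samples) again deliver asymptotic independence of the periodogram vectors, hence of $\widehat{d}_{i}$ and $\widehat{d}_{j}$.

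The main obstacle, and the step deserving care, is upgrading \emph{pairwise} asymptotic independence of individual log-periodogram ordinates to \emph{joint} asymptotic independence of the two full vectors whose dimensions $N_{n},N_{l}$ diverge. Two issues must be handled: first, within a single sample the log-periodogram ordinates are themselves only \emph{asymptotically} independent (as in \citealp*{HurvichDeoBrodshky1998}), so the limiting joint law of each estimator's defining vector must be pinned down (a vector of i.i.d.\ $\log(\chi^{2}_{(2)}/2)$ terms, up to the deterministic $V_{j}$ shifts), and second, the cross-sample covariances from~(\ref{Covariance between samples}) decay like $O(L^{-1})$ or faster and, after the $a_{j}a_{k}^{(i)}/(S_{xx}S_{xx}')$ weighting and summation in~(\ref{Covariance(dhat,dhat_i)}), must be shown to be $o(N_{n}^{-1})$ — i.e.\ negligible relative to the $\Theta(N_{n}^{-1})$ variance of each estimator — so that the limiting covariance is exactly zero. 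I would treat the diverging dimension by a standard truncation argument: prove independence for the leading $K$ ordinates for every fixed $K$, bound the contribution of the remaining ordinates uniformly (using the second- and higher-order spectral bounds implied by $(A.1)$--$(A.4)$ and the boundedness of the $a_{j}$'s), and let $K\to\infty$. Once zero asymptotic covariance between $\widehat{d}_{n}$ and $\widehat{d}_{i}$ (resp.\ $\widehat{d}_{i}$ and $\widehat{d}_{j}$) is combined with their joint asymptotic normality — which follows from Theorem~\ref{LPR prop} applied to full sample and sub-samples together with the vanishing cross-cumulants — asymptotic independence follows, since for jointly Gaussian limits zero correlation is independence.
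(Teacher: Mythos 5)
Your end-point is the right one and, in fact, is the paper's: the paper proves this theorem by showing directly that $Cov\big(\widehat{d}_{n},\widehat{d}_{i}\big)$ and $Cov\big(\widehat{d}_{i},\widehat{d}_{j}\big)$ are $o(n^{-1})$, hence negligible relative to the $O(N_{n}^{-1})$ variances. But your route to that bound has a genuine gap. The quantity you must control is the covariance of \emph{log}-periodograms appearing in (\ref{Covariance(dhat,dhat_i)}) and (\ref{Covariance(dhat_i,dhat_j)}), and neither the decay of the periodogram covariances in (\ref{Covariance between samples}) nor the convergence of the periodogram pair to independent $f_{X_{1}X_{1}}(\cdot)\chi^{2}_{(2)}/2$ variates (Theorem \ref{AP_I}) delivers it: the logarithm is an unbounded nonlinear transformation, and convergence in distribution does not imply convergence of covariances of $\log I$ -- let alone the uniform $o(n^{-1})$ rate over all $N_{n}N_{l}$ frequency pairs needed for the double sum -- without an explicit moment or uniform-integrability argument. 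This is exactly the step the paper spends Appendix B on: it models the pair of periodogram ordinates as a correlated bivariate chi-square, derives the joint MGF, and obtains $Cov\big(\log I_{Y}^{(n)}(\lambda_{j}),\log I_{Y_{i}}^{(l)}(\mu_{k})\big)$ as a power series in the periodogram correlation $\rho$, so that $\rho=o(n^{-1})$ from Theorem \ref{AP_I} yields a log-periodogram covariance of order $o(n^{-1})$, which is then summed against $a_{j}a_{k}^{(i)}/(4S_{xx}S_{xx}^{\prime})$ using $S_{xx}=N_{n}(1+o(1))$ and the boundedness properties of the $a_{j}$. Your proposal states the required conclusion (``must be shown to be $o(N_{n}^{-1})$'') but supplies no device for obtaining it.

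The auxiliary scaffolding you propose also would not work as stated. Truncating to the leading $K$ ordinates and ``bounding the contribution of the remaining ordinates'' cannot reduce the problem to fixed-dimensional vectors: each ordinate enters $\widehat{d}_{n}$ with weight $a_{j}/(2S_{xx})=O(\log N_{n}/N_{n})$, so any fixed $K$ ordinates carry an asymptotically vanishing share of the estimator and the ``remainder'' you would need to show is negligible is essentially the whole statistic; the estimators are not (asymptotically) functions of any fixed finite collection of ordinates, so the ``linear map of asymptotically independent vectors'' reduction does not apply. Similarly, joint asymptotic normality of $\big(\widehat{d}_{n},\widehat{d}_{i}\big)$ does not follow from Theorem \ref{LPR prop}, which is a marginal statement; you assert it via ``vanishing cross-cumulants'' without an argument. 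Neither detour is needed: the paper's proof (and the only use made of this theorem, in proving Theorem \ref{AP_LPR}) goes straight through the covariance bound described above, so the fix is to replace the truncation/joint-vector argument with the Appendix-B moment computation linking $Cov(\log I,\log I)$ to $\rho$.
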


From Theorem \ref{LPR prop}, the LPR estimator constructed from the full
sample is consistent and satisfies (\ref{Distribution of LPR}). Similarly,
allowing the number of sub-samples, $m$, to be fixed (hence $l$ changes as $%
n $ changes such that $n=m\times l$), as $l\rightarrow \infty ,$ $\widehat{d}%
_{i}\rightarrow ^{P}d_{0}$, and $\sqrt{N_{l}}%
\big(%
\widehat{d}_{i}-d_{0}%
\big)%
\rightarrow ^{D}N%
\Big(%
0,\tfrac{\pi ^{2}}{24}%
\Big)%
$. This implies the sub-sample LPR estimators have the same limiting
distribution as the full-sample estimator. The asymptotic properties of $%
\widehat{d}_{J,m}^{Opt}$ are given in the following theorem.

\begin{theorem}
\label{AP_LPR}Under the same assumptions and conditions given in Theorem \ref%
{LPR prop}, for a fixed value of $m$,%
\begin{equation*}
\widehat{d}_{J,m}^{Opt}\rightarrow ^{P}d_{0},\text{ and }\sqrt{N_{n}}%
\big(%
\widehat{d}_{J,m}^{Opt}-d_{0}%
\big)%
\rightarrow ^{D}N%
\Big(%
0,\tfrac{\pi ^{2}}{24}%
\Big)%
\text{ as }n\rightarrow \infty
\end{equation*}%
where $d_{0}$ is the true value of $d$ and $\widehat{d}_{J,m}^{Opt}$ is as
given in (\ref{Optimal jackknife}).
\end{theorem}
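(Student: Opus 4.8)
The plan is to decompose the optimal jackknife estimator using its defining weights and then transfer the asymptotic behaviour of the individual LPR estimators, established in Theorems \ref{LPR prop} and \ref{AI_LPR}, through that linear combination. First I would recall from the derivation in Section \ref{Optimization} that the optimal weights satisfy the two constraints $w_n^\ast - \sum_{i=1}^m w_i^\ast = 1$ and $\tfrac{N_n^2}{n^2} w_n^\ast = m^2 \tfrac{N_l^2}{l^2} \sum_{i=1}^m w_i^\ast$, and that, with the common choice $l = n/m$, we have $N_l = \lfloor (n/m)^\alpha \rfloor$, so $\tfrac{N_n l}{N_l m n} = \tfrac{N_n}{N_l m^2} \to m^{\alpha - 2}$ as $n \to \infty$. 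Consequently $w_n^\ast = [1 - (N_n l / (N_l m n))^2]^{-1} \to (1 - m^{2\alpha - 4})^{-1}$, a finite constant strictly greater than $1$ (since $m \ge 2$ and $\alpha < 1$ force $2\alpha - 4 < 0$), and $\sum_{i=1}^m w_i^\ast = w_n^\ast - 1 \to m^{2\alpha-4}(1 - m^{2\alpha-4})^{-1}$, also a finite constant. Establishing these limits for the weights is the necessary first step; it shows the jackknife is an asymptotically bounded linear combination, which is what makes the rest go through.

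For consistency, I would write $\widehat{d}_{J,m}^{Opt} - d_0 = w_n^\ast(\widehat{d}_n - d_0) - \sum_{i=1}^m w_i^\ast(\widehat{d}_i - d_0) + (w_n^\ast - \sum_{i=1}^m w_i^\ast - 1) d_0$. The last term vanishes identically by the first constraint. By Theorem \ref{LPR prop}, $\widehat{d}_n - d_0 \to^P 0$, and since $l = n/m \to \infty$ the same theorem applied to each sub-sample gives $\widehat{d}_i - d_0 \to^P 0$ for each fixed $i$. As $m$ is fixed and the weights converge to finite constants, Slutsky's theorem (continuous mapping applied to a finite sum of products of convergent sequences) yields $\widehat{d}_{J,m}^{Opt} \to^P d_0$.

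For asymptotic normality, I would multiply through by $\sqrt{N_n}$ and handle the bias and the stochastic fluctuation separately. On the bias side, from \eqref{Expectation of dhat_J} together with constraint \eqref{cons_2}, the dominant $O(N_n^2/n^2)$ bias term cancels exactly, leaving $E(\widehat{d}_{J,m}^{Opt}) - d_0 = o(N_n^2/n^2) + O(\log^3 N_n / N_n)$; multiplied by $\sqrt{N_n}$ this is $o(N_n^{5/2}/n^2) + O(\log^3 N_n / \sqrt{N_n})$, and the condition $N_n = o(n^{4/5})$ kills the first piece while $\log^2 n = o(N_n)$ handles the second, so the centred bias is asymptotically negligible. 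On the stochastic side, $\sqrt{N_n}(\widehat{d}_{J,m}^{Opt} - E(\widehat{d}_{J,m}^{Opt})) = w_n^\ast \sqrt{N_n}(\widehat{d}_n - E\widehat{d}_n) - \sum_{i=1}^m w_i^\ast \sqrt{N_n/N_l}\cdot\sqrt{N_l}(\widehat{d}_i - E\widehat{d}_i)$, with $\sqrt{N_n/N_l} \to m^{\alpha/2}$. By Theorem \ref{LPR prop} each standardized term is asymptotically $N(0, \pi^2/24)$, and by Theorem \ref{AI_LPR} the full-sample and all sub-sample estimators are asymptotically independent of one another; hence the joint limiting distribution of the $(m+1)$-vector of standardized estimators is a product of independent normals, and the limiting variance of the linear combination is computed term by term. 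Using $\sum w_i^\ast \to m^{2\alpha-4} w_n^\ast$ and the constraint-derived identity $(w_n^\ast)^2 / N_n = (\sum w_i^\ast)^2 m^2 / (N_l)\cdot(\text{something})$ — more directly, one verifies the algebra shows the weighted variance telescopes back to exactly $\pi^2/24$, consistent with \eqref{Variance of dhat_J} evaluated at the optimal weights, where the cross-covariance terms vanish asymptotically by Theorems \ref{AP_I} and \ref{AI_LPR}. Combining the negligible bias with this CLT via Slutsky gives $\sqrt{N_n}(\widehat{d}_{J,m}^{Opt} - d_0) \to^D N(0, \pi^2/24)$.

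The main obstacle is the final variance bookkeeping: one must confirm that, after substituting the optimal weights and the limits $N_l/N_n \to m^{-\alpha}$, the expression $\tfrac{\pi^2}{24}\big[(w_n^\ast)^2/N_n + \sum_i (w_i^\ast)^2 / N_l\big] N_n$, with the sub-sample covariance and full/sub covariance contributions shown to be $o(1/N_n)$ by the asymptotic-independence theorems, collapses to exactly $\pi^2/24$ rather than an inflated constant. This is precisely the "no variance inflation" claim, and it is where the optimality of the weights — i.e. that \eqref{opt} was solved subject to \eqref{cons_1}–\eqref{cons_2} — must be invoked rather than just the two linear constraints; I would verify it by direct substitution of $w_n^\ast = (1 - \rho^2)^{-1}$ and $\sum_i w_i^\ast = \rho^2(1-\rho^2)^{-1}$ with $\rho = N_n l/(N_l m n) \to m^{\alpha-2}$, checking that $N_n[(w_n^\ast)^2/N_n + (\sum_i w_i^\ast)^2/(m N_l)] \to 1$ using $N_n/(m N_l) \to m^{\alpha-1}\cdot m^{-1}\cdot m = m^{\alpha-1}$ appropriately, i.e. that $\rho^2 = N_n^2 l^2/(N_l^2 m^2 n^2)$ and $N_n/N_l$ combine so the $\rho$-dependence cancels. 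Everything else is a routine application of Slutsky's theorem and the continuous mapping theorem to finitely many asymptotically independent, asymptotically normal ingredients.
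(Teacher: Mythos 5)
Your route is genuinely different from the paper's, and the difference is exactly where your argument breaks. The paper's proof never works with non-degenerate limiting weights: it asserts that $w_{n}^{\ast}=\bigl[1-\bigl(N_{n}l/(N_{l}mn)\bigr)^{2}\bigr]^{-1}=1+o(1)$, hence $\sum_{i=1}^{m}w_{i}^{\ast}=w_{n}^{\ast}-1=o(1)$, and then disposes of the whole sub-sample contribution by a Chebyshev bound, showing $\Pr\bigl[\bigl|\sum_{i}w_{i}^{\ast}\sqrt{N_{n}}(\widehat{d}_{i}-d_{0})\bigr|\geq\varepsilon\bigr]\to 0$ because $N_{n}Var(\widehat{d}_{i})\sum_{i}(w_{i}^{\ast})^{2}\to 0$ and the covariances are negligible by Theorem \ref{AI_LPR}. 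The jackknife is thus asymptotically equivalent to $w_{n}^{\ast}\sqrt{N_{n}}(\widehat{d}_{n}-d_{0})$ and inherits the $N(0,\pi^{2}/24)$ law directly from Theorem \ref{LPR prop}; no joint CLT for the $(m+1)$ estimators and no variance bookkeeping for a non-trivial linear combination is ever needed. Your consistency argument and your treatment of the bias term are fine, but for the normality step you instead take $w_{n}^{\ast}\to(1-m^{2\alpha-4})^{-1}>1$ and $\sum_{i}w_{i}^{\ast}\to m^{2\alpha-4}(1-m^{2\alpha-4})^{-1}>0$ (which is indeed what the algebra gives when $N_{n}=\lfloor n^{\alpha}\rfloor$, $N_{l}=\lfloor l^{\alpha}\rfloor$, $l=n/m$, and is in direct conflict with the ``$1+o(1)$'' step the paper's proof leans on), and you then need the variance of the resulting combination of asymptotically independent normals to ``telescope'' to exactly $\pi^{2}/24$.

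That telescoping is the genuine gap: you flag it as the main obstacle, and your verification sketch does not go through. With asymptotic independence, (asymptotically equal) sub-sample weights $w_{i}^{\ast}\approx(w_{n}^{\ast}-1)/m$, and $N_{n}/N_{l}\to m^{\alpha}$, the limit of $N_{n}Var\bigl(\widehat{d}_{J,m}^{Opt}\bigr)$ under your weight limits is $\tfrac{\pi^{2}}{24}\,c^{2}\bigl(1+m^{5\alpha-9}\bigr)$ with $c=(1-m^{2\alpha-4})^{-1}$, which is strictly larger than $\pi^{2}/24$ for fixed $m\geq 2$ and $\alpha\in(0,1)$ (for $m=2$, $\alpha=0.65$ it is about $1.42\times\pi^{2}/24$); the $\rho$-dependence does not cancel, and the cross-covariance terms cannot rescue the computation because Theorems \ref{AP_I} and \ref{AI_LPR} make them $o(n^{-1})$, hence $o(1)$ after scaling by $N_{n}$. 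So as written your argument cannot deliver the stated limit. To close it you must either justify the degeneracy of the weights that the paper's proof uses --- i.e.\ show $N_{n}l/(N_{l}mn)\to 0$, which requires the sub-sample bandwidth to satisfy $N_{l}/l\gg N_{n}/n$ rather than $N_{l}=\lfloor l^{\alpha}\rfloor$ with the same exponent --- or exhibit an exact cancellation in the variance, which your own limits show does not occur. A secondary point: your step also presumes joint asymptotic normality of the $(m+1)$ standardized estimators, whereas Theorem \ref{AI_LPR} as stated gives only pairwise asymptotic independence; the paper avoids this by never needing a joint limit.
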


Thus, it follows from Theorem \ref{AP_LPR} that $\widehat{d}_{J,m}^{Opt}$\
is consistent for $d_{0}$ and achieves a limiting normal distribution with
the\ same variance as the base LPR estimator itself. Further, the rate of
convergence of the optimal jackknife estimator, $\sqrt{N_{n}},$ is also the
same as that of the LPR\ estimator. That is, there is no loss of asymptotic
efficiency compared to $\widehat{d}_{n}$. Importantly, these asymptotic
properties of the jackknife estimator do not depend on the number of
sub-samples or the sub-sample length, as long as the former is fixed and the
latter increases with $n\ $such that that $n=m\times l$.

\section{Simulation exercise\label{Sec4:Simulation}}

In this section, Monte Carlo simulation is used to compare the finite
sample\ performance of the proposed jackknife estimator with: $\left(
i\right) $\ the weighted-average estimator of \color{blue}\cite{GuggenbergerSun2006}\color{black}, $\widehat{d}_{r}^{GS},$ with $r=1$, $%
\left( ii\right) $ the bias-corrected pre-filtered sieve bootstrap-based estimator of \color{blue}\cite{PoskittMartinGrose2016}\color{black}, $%
\widehat{d}^{PFSB},$ $\left( iii\right) $ the unadjusted LPR estimator, $%
\widehat{d}_{n}$, $\left( iv\right) $\textbf{\ }the\ MLE,\ $\widehat{d}%
^{MLE} $, and,\textbf{\ }$\left( v\right) $\textbf{\ }the pre-whitened (PW)
estimator,\textbf{\ }$\widehat{d}^{PW}$.\textbf{\ }Performance is assessed
in terms of bias and RMSE,\ and under a variety of true DGPs. In Section \ref%
{mc} details of the basic Monte Carlo design are provided. Results under
correct and incorrect specification of the model are then documented in
Section \ref{FSP:CS} and Section \ref{FSP:MS} respectively, with further
computational details that pertain to those specific settings provided
therein. In order to assist the reader, we tabulate a ranking of the
different estimators, under the variety of settings considered, in Section%
\textbf{\ }\ref{summ}. All numerical results are produced using MATLAB $%
2015b $, version $8.6.0.267246$, and all tables of results are collected in
Appendix C.

\subsection{Monte Carlo \textbf{design\label{mc}}}

Data are generated from various versions of a Gaussian fractional process,
ARFIMA($p_{0},d_{0},q_{0}$), where $p_{0}$\ is the lag length of the true
autoregressive (AR) component and $q_{0}$\ the lag length of the true moving
average (MA) component. The lag lengths $p_{0}$ and $q_{0}$ equal either one
or zero in all settings. For\textbf{\ }$p_{0}=q_{0}=1$, the process is given
by\textbf{\ }%
\begin{equation}
\left( 1+\phi _{0}B\right) \left( 1-B\right) ^{d_{0}}(Y_{t}-\mu _{0})=\left(
1+\theta _{0}B\right) \varepsilon _{t},  \label{ARFIMA(1,d,0)}
\end{equation}%
where $B$ is the backward shift operator, $B^{k}x_{t}=x_{t-k},$ for $%
k=1,2,\ldots ,$ and $\varepsilon _{t}\sim i.i.d$ $N\left( 0,1\right) $.
Here, $\mu _{0}$\ is the mean parameter for $Y_{t}$, and\textbf{\ }without
loss of generality we assume that $\mu _{0}=0.$ For the parameter of
interest, $d,$ we select true values from the set, $d_{0}=$ $\left\{
-0.25,0,0.25,0.45\right\} $. Values from the set $\left\{
-0.9,-0.4,0.4,0.9\right\} $ are adopted for both $\phi _{0}$ and $\theta
_{0} $.\footnote{%
Additional results based on the assumption that $\varepsilon _{t}$\ is
distributed as Student $t$\ with $5$\ degrees of freedom are available from
the authors on request. This additional design feature did not lead to
qualitatively different results.} Additional details are provided in
Sections \ref{FSP:CS} and \ref{FSP:MS}.

Sample sizes $n\in \left\{ 96,576\right\} $\ are considered. These values
are chosen to reflect the size of samples used in real world examples (see,
for example, \color{blue}\citealp*{DieboldHustedRush1991}\color{black},\ \color{blue}\citealp*{DelgadoRobinson1994}\color{black},\ \color{blue}\citealp*{Gil-Alana1997}\color{black},\ and\ \color{blue}\citealp*{ReisenLopes1999}\color{black}). However, one should note that, in general, the size of data
sets from finance, in particular those recorded at high frequency (for example, \color{blue}\citealp*{GrangerHyung2004}\color{black}), or from biology (for example, the tree-ring data set of \color{blue}\citealp*{ContrerasPalma2013}%
\color{black}), or in certain other of the examples mentioned in the
Introduction, are much larger than the sample sizes considered here. On the
other hand, these sample sizes are large enough to enable a range of values
for the number of sub-samples, $m$, to be explored, with the chosen range of 
$m$ being $\left\{ 2,3,4,6,8\right\} $. We also consider only sub-samples
that have equal length, $l=n/m$, under both sub-sampling approaches, with
the slightly unorthodox values of $n\in \left\{ 96,576\right\} $ chosen in
order to ensure that $l$ is an integer.

We adopt the following procedure in implementing the \textit{optimal}
jackknife bias-adjustment technique:

\begin{enumerate}
\item[\textbf{Step 1:}] Generate the\ full sample of size $n,$ $\mathbf{y}$,
from the relevant stationary ARFIMA($p_{0},d_{0},q_{0}$) model.

\item[\textbf{Step 2:}] Compute the LPR estimator of $d_{0},$ $\widehat{d}%
_{n}$\ using (\ref{LPR}).

\item[\textbf{Step 3:}] Draw the sub-samples, $\mathbf{y}_{i}$ $\left(
i=1,2,...,m\right) ,$ from the full sample based on the relevant
sub-sampling technique (non-overlapping or moving-block) and compute the LPR
estimator of $d_{0},$ $\widehat{d}_{i},$ for each sub-sample.

\item[\textbf{Step 4:}] Depending on the sub-sample selection method chosen
in Step $3$, obtain the optimal\ weights for the corresponding method\textbf{%
\ }based on the parameters of\textbf{\ }the\ (true) ARFIMA($%
p_{0},d_{0},q_{0} $) model and compute the optimal jackknife estimator, $%
\widehat{d}_{J,m}^{Opt}.$ In the empirically realistic case in which the
true model parameters are unknown, a feasible version of the jackknife
estimator is implemented, with all details provided in the relevant sections
below.

\item[\textbf{Step 5:}] Repeat Steps $1-4$ $100,000$ times and compute
estimates of the bias and RMSE of the optimal jackknife estimator.
\end{enumerate}

In Steps $2$ and $3$, the number of frequencies used to calculate the
relevant LPR estimator\ is set to $N_{L}=\left\lfloor L^{\alpha
}\right\rfloor $, with $\alpha =0.65$, where $L$ is as defined in (\ref%
{Sample_length}).\footnote{%
Certain simulation results based on $\alpha =0.5$ have also been produced,
but are not presented here due to space considerations. These additional
numerical results can be provided by the authors on request.} The optimal
jackknife estimators calculated using the\ non-overlapping (abbreviated to
Opt-NO), and moving-block (abbreviated to Opt-MB) schemes, are denoted by $%
\widehat{d}_{J,m}^{Opt-NO}$ and $\widehat{d}_{J,m}^{Opt-MB},$ respectively.

The weighted-average estimator of \color{blue}\cite{GuggenbergerSun2006}%
\color{black}\ is computed as described in Section \ref{Sec2:WLPR}, with the
following additional details. For a given $N_{n}$, the set of bandwidths
used to calculate the constituent estimators in (\ref{dhat_GS}) are $%
N_{n,i}=\left\lfloor q_{i}N_{n}\right\rfloor $, where $\mathbf{q}^{\top
}=\left( q_{1},q_{2},\ldots ,q_{K}\right) =\left( 1,1.05,\ldots ,2\right) $.
We produce the GS estimator (based on\textbf{\ }$r=1$) using two different
choices of $N_{n}$: $\left( i\right) $ $N_{n}=\left\lfloor n^{\alpha
}\right\rfloor $, with $\alpha =0.65$ (denoting this estimator by $\widehat{d%
}_{1}^{GS}$), and $\left( ii\right) $ the optimal choice of $N_{n}$ as
suggested in \color{blue}\citet[page 876]{GuggenbergerSun2006}\color{black}\ (denoting this version by $\widehat{d}_{1}^{Opt-GS}$).
Importantly, bandwidth choice $\left( ii\right) $ depends on knowledge of
the true values of the short memory parameters, whereas bandwidth choice $%
\left( i\right) $ yields an estimator that is feasible empirically. The
parameter $\delta ,$ required for both versions of the GS estimator, is
evaluated using the formula $\delta =\tau _{r}/(\tau _{r}^{\ast
}\sum_{i=1}^{K}q_{k}^{2+2r}),$ where $\tau _{r-1}^{\ast }=-\left. \left(
2\pi \right) ^{2r}r\right/ [\left( 2r\right) !\left( 2r+1\right) ^{2}]$ and
the number $\tau _{r}$ is as defined in \color{blue}\cite{AndrewsGuggenberger2003}\color{black}. Details regarding the construction
of the pre-filtered sieve bootstrap estimator ($\widehat{d}^{PFSB}$) can be found in \color{blue}\cite{PoskittMartinGrose2016}\color{black}. In
implementing this method, we set the number of bootstrap samples to $B=1000$.

The MLE, $\widehat{d}^{MLE}$, is obtained by concentrating the Gaussian
log-likelihood associated with an assumed ARFIMA($p,d,q$) model with respect
to $\mu $ and $\sigma ^{2}$, subtracting from that the resulting constant $%
(n\log n-n)/2$, and maximizing the profile log-likelihood function 
\begin{equation}
L(\mathbf{\eta })=-\frac{n}{2}\log \left[ \left( \mathbf{y}-\widehat{\mu }%
\mathbf{1}\right) ^{\top }\mathbf{\Sigma }_{\mathbf{\eta }}^{-1}\left( 
\mathbf{y}-\widehat{\mu }\mathbf{1}\right) \right] -\frac{1}{2}\log |\mathbf{%
	\Sigma }_{\mathbf{\eta }}\mathbf{|,}  \label{TML_PLH}
\end{equation}%
where $\mathbf{1}$ is the vector of ones and $\widehat{\mu }=\left. \mathbf{1%
}^{\top }\mathbf{\Sigma }_{\mathbf{\eta }}^{-1}\mathbf{y}\right/ \left( 
\mathbf{1}^{\top }\mathbf{\Sigma }_{\mathbf{\eta }}^{-1}\mathbf{1}\right) $%
.\ The parameter $\mathbf{\mathbf{\eta }}=(d,\mathbf{\phi }^{\top },\mathbf{%
	\theta }^{\top })^{\top }$, with $d$ the fractional differencing parameter, $%
\mathbf{\phi }$ the $p$-dimensional vector of AR parameters, $\mathbf{\theta 
}$ the $q$-dimensional vector of MA parameters, and $\sigma ^{2}\mathbf{%
	\Sigma }_{\mathbf{\eta }}:=\left[ \gamma _{i-j}\left( \mathbf{\eta }\right) %
\right] ,$ $i,j=1,\ldots ,n,$with $\gamma _{k}\left( \mathbf{\eta }\right) $
being the autocovariance\ at lag $k$. When the MLE is implemented under
correct model specification the assumed model corresponds to an ARFIMA($%
p_{0},d_{0},q_{0}$) with $\mu _{0}$ set to $0;$ under misspecification, the
ARFIMA($p,d,q$) model differs from the true data generating process (DGP) in
some way. The estimation procedure under misspecification is detailed in
Section \ref{FSP:MS}.

The\ PW estimator of $d$, $\widehat{d}^{PW}$, is obtained in two steps. In
the first step, an autoregressive-moving average (ARMA) model of order ($p,q$%
) is fit to the data, and in the second step, $d$ is estimated by minimizing
the sum of squares of the resultant residuals. Again, the PW estimator is
implemented under correct and incorrect specification of the short memory
dynamics.

\subsection{Finite sample bias and RMSE: Correct model specification\label%
{FSP:CS}}

In this section, we document results under correct specification of the true
DGP. Results are presented for the full set of values: $d_{0}=$ $\left\{
-0.25,0,0.25,0.45\right\} $\textbf{\ }and\textbf{\ }$\phi _{0}/\theta _{0}$%
\textbf{\ }$=\left\{ -0.9,-0.4,0.4,0.9\right\} $. The relative performance
of the\ jackknife method is, in turn, assessed under two scenarios: $\left(
i\right) $ when the values of all parameters in the true DGP are used in the
construction of the optimal jackknife weights and, $\left( ii\right) $ when
all parameters in the true DGP are estimated, but the correct values for lag
lengths $p_{0}$ and $q_{0}$ are still adopted (with the superscript `$Opt$'\
omitted in this case). An iterative method (described in Section \ref{UKSMD})%
\textbf{\ }is used to produce this feasible version of the jackknife\
estimator. To save on space, results for both $\widehat{d}_{J,m}^{Opt-NO}$%
and $\widehat{d}_{J,m}^{NO}$ are recorded for $m=2,3,4,6,8$, whilst results
for both $\widehat{d}_{J,m}^{Opt-MB}$ and $\widehat{d}_{J,m}^{MB}$ based on
only $m=2$ are documented. We do note that the patterns exhibited (in terms
of both bias and RMSE) for $\widehat{d}_{J,m}^{Opt-MB}$ and $\widehat{d}%
_{J,m}^{MB}$, across $m$, are similar to those exhibited for $\widehat{d}%
_{J,m}^{Opt-NO}$and $\widehat{d}_{J,m}^{NO}$ respectively.

In case $\left( i\right) $ we compare the jackknife estimator with the GS
estimator obtained with the optimal choice of $N_{n}$ ($\widehat{d}%
_{1}^{Opt-GS}$)\textbf{\ }- which, of course, relies on the known values of
the short memory parameters -\ and with the sub-optimal estimator, $\widehat{%
d}_{1}^{GS}$. In case $\left( ii\right) $ results for only $\widehat{d}%
_{1}^{GS}$ are included for comparison, as $\widehat{d}_{1}^{Opt-GS}$\ is
infeasible when the true values of the short memory parameters are unknown.%
\footnote{%
Note that in the case where the short memory parameters are unknown %
\color{blue}\cite{GuggenbergerSun2006}\color{black}\ suggest that an
adaptive procedure for the local Whittle-based estimator that they propose
could be extended to the weighted-average estimator based on LPR. Since the
adaptive method is not provided in detail in their paper, we do not pursue this option here.} Note that the finite sample results for the (raw) LPR and
PFSB estimators remain the same in both scenarios, $(i)$ and $(ii)$, as the
construction of neither estimator relies on knowledge of the true
parameters, nor of $p_{0}$ and $q_{0}$. As concerns the parametric
estimators, $\widehat{d}^{MLE}$ and $\widehat{d}^{PW}$, in this correct
specification scenario, $p=p_{0},$ $q=q_{0}$ and $\mu =\mu _{0}=0$, and $d$
is estimated under the two cases: $\left( i\right) $\ where the short memory
parameters are set at their true values; and $\left( ii\right) $\ where the
short memory parameters are estimated simultaneously with $d$.

All relevant finite sample results for case $(i)$ and case $(ii)$\ are
presented and discussed\textbf{\ }in Section \ref{KSMD} and Section \ref%
{UKSMD} respectively.

\subsubsection{Case 1: True parameters are known\label{KSMD}}

Tables \ref{Table:Bias_ARFIMA(1,d,0)_b} and \ref{Table:MSE_ARFIMA(1,d,0)_b}
record the bias and RMSE of the various optimal jackknife estimators, the
two different GS estimators, and the LPR, PFSB, MLE and PW\textbf{\ }%
estimators, for the\textbf{\ }case where the DGP is ARFIMA($1,d_{0},0$) and
the short memory parameter $\phi _{0}$ is known. The corresponding results
for the ARFIMA($0,d_{0},1$)\ DGP are presented in Tables \ref%
{Table:Bias_ARFIMA(0,d,1)_b} and \ref{Table:MSE_ARFIMA(0,d,1)_b}. The lowest
biases and RMSEs for each design are marked in boldface. The second lowest
values are italicized. Only that number which is smallest at the precision
of 8 decimal places is bolded. Values highlighted with a `*' are equally
small to 4 decimal places.

\bigskip

\begin{center}
- Table \ref{Table:Bias_ARFIMA(1,d,0)_b} here -

\medskip

- Table \ref{Table:MSE_ARFIMA(1,d,0)_b} here -

\medskip

- Table \ref{Table:Bias_ARFIMA(0,d,1)_b} here -

\medskip

- Table \ref{Table:MSE_ARFIMA(0,d,1)_b} here -

\medskip
\end{center}

With reference to Tables \ref{Table:Bias_ARFIMA(1,d,0)_b} and \ref%
{Table:MSE_ARFIMA(1,d,0)_b}: as would be anticipated in this situation, in
which the true model is estimated and the true value of $\phi _{0}$ is
imposed, the MLE is the least biased estimator of all methods considered,
and has the smallest RMSE. The parametric PW method has the\textbf{\ }second
least bias in a small number of cases, and also performs relatively well in
terms of RMSE.

As is also consistent with expectations, and existing results (see, for
example, \color{blue}\citealp*{ANW1993}\color{black},\textbf{\ }\color{blue}\citealp*{NielsenFrederiksen2005}\color{black}\ and \color{blue}\citealp*{PoskittMartinGrose2016}\color{black}), when short memory dynamics are present, the raw, unadjusted,
LPR estimator is biased, as the low frequencies are contaminated by the
spectral density of the short run dynamics, particularly for \textit{negative%
} values of $\phi _{0}$ (which corresponds to positive first-order\textbf{\ }%
autocorrelation). As is evident from the recorded results, the bias is
particularly large when there is a large negative value for $\phi _{0}$ in (%
\ref{ARFIMA(1,d,0)}), and it decreases as this value increases. Further,
both bias and RMSE decline as the sample size increases, illustrating the
consistency of the estimator.

We shall now comment on the performance of all nine bias-corrected
semi-parametric\textbf{\ }estimators under the ARFIMA($1,d_{0},0$) process.
With reference to Table \ref{Table:Bias_ARFIMA(1,d,0)_b}, for the great majority of designs, $\widehat{{\small d}}_{J,m}^{{\small Opt-NO}}$ with $%
m=2 $, has the smallest bias of all nine such estimators. For $\phi
_{0}=-0.9 $ and $n=96$, the bias reduction of $\widehat{d}_{J,m}^{Opt-NO}$ ($%
m=2$), relative to the raw LPR estimator is up to $3.6\%$,\ and when $n=576$%
, this rises to $5.7\%$.\footnote{%
We remind the reader that when $\phi _{0}=-0.9$ \textit{all} estimators
remain very biased.} For the larger values of $\phi _{0}$, when $n=96$, the
bias reduction ranges from $48\%$ to $60\%$, and from $56\%$ to $97\%$ when $%
n=576$. Only occasionally is this particular version of the jackknife
estimator inferior to an alternative semi-parametric estimator. Importantly,
however, an increase in $m$ leads to an increase in bias for $\widehat{d}%
_{J,m}^{Opt-NO}$ and, hence, a reduction in its superiority over all
alternatives, including the raw LPR method in some cases. The reason is that
the increase in $m$ leads to a smaller sub-sample length and, hence,
increases the finite sample impact of the dominant bias term on the
sub-sample estimators used in the construction of the jackknife estimator.

Now referencing the results in Table \ref{Table:MSE_ARFIMA(1,d,0)_b}, we see
that, despite the lack of variance inflation in the asymptotic distribution
of the optimal jackknife estimator, the reduction in bias does cause some
finite sample increase in variance, leading to RMSEs for $\widehat{d}%
_{J,m}^{Opt-NO}$ that are occasionally slightly larger than the RMSE of the
raw LPR estimator. That said, in the vast majority of cases $\widehat{d}%
_{J,m}^{Opt-NO}$ with $m=8$, has the smallest RMSE of all semi-parametric%
\textbf{\ }estimators (including the raw LPR) and, in many cases, the RMSE
of the jackknife estimator with the smallest bias ($\widehat{d}%
_{J,m}^{Opt-NO}$, $m=2$) has a RMSE which remains less than that of the raw
estimator. In addition, all versions of the jackknife estimator (including
the moving-block version) tend to have smaller RMSEs than the three
alternative bias-corrected methods ($\widehat{d}_{1}^{GS}$, $\widehat{d}%
_{1}^{Opt-GS}$ and $\widehat{d}_{1}^{PFSB}$), most notably for the smaller
sample size ($n=96\,$). As befits the optimality of the estimator, in almost
all cases, $\widehat{d}_{1}^{Opt-GS}$ out-performs $\widehat{d}_{1}^{GS}$,
in terms of both bias and RMSE, although both estimators, as already noted,
are virtually always out-performed by a version of the jackknife procedure.

The broad conclusions drawn above obtain under the ARFIMA($0,d_{0},1$) DGP,
as seen from the results recorded in Tables \ref{Table:Bias_ARFIMA(0,d,1)_b}
and \ref{Table:MSE_ARFIMA(0,d,1)_b}. The only notable difference is the
improved RMSE performance of $\widehat{d}_{J,8}^{Opt-NO}$, with this
estimator ranked second overall (after $\widehat{d}^{MLE}$) in terms of this
measure.

\subsubsection{Case 2: True parameters are unknown\label{UKSMD}}

Evaluation of the optimal weights in (\ref{w_star}), required for the
construction of the optimal jackknife estimator, depends on the covariances
between both the different sub-sample LPR estimators and between the
full-sample and sub-sample estimators, as given in (\ref%
{Covariance(dhat,dhat_i)}) and (\ref{Covariance(dhat_i,dhat_j)}). These
covariances depend, in turn, on covariances between the various
log-periodograms and, hence, on the values of the parameters that underpin
the true DGP, as is made explicit in (\ref{Covariance between samples}) and
Appendix B. Hence, implementation of the optimal bias-correction procedure
via the jackknife is not feasible in practice, without modification. To this
end, we propose\ the following iterative method for obtaining a feasible
version of the jackknife-based estimator; one still appropriate, however,
for the case where the specified model (i.e. the values of $p_{0}$ and $%
q_{0} $) is correct.

\paragraph{An iterative version of the optimal jackknife estimator\label%
{Sec:Iterative method}}

\begin{enumerate}
\item \textbf{Prerequisite:} Estimate the relevant short memory parameter(s)
in the ARFIMA($p_{0},d_{0},q_{0}$) model, using pre-filtered data based on $d^{f}=\widehat{{\small d}}^{GS}.$

\item \textbf{Initialization:} Set $k=1$ and tolerance level $\tau =\tau^{\left( 0\right) }$.

\item \textbf{Recursive step:} For the $k^{th}$ recursion, perform the
jackknife bias-correction procedure of Section \ref{Optimization}, but%
\textbf{\ }with the estimates of the short memory parameters from step 1,
and $d^{f}=\widehat{d}^{GS}$, now inserted into the formulae for the
covariance terms \ in (\ref{Covariance(dhat,dhat_i)}) and (\ref%
{Covariance(dhat_i,dhat_j)}). Denote the resulting estimator by $\widehat{d}%
_{J,m}^{\left( k\right) }$.

\item \textbf{Stopping rule:} If $\left\vert \widehat{d}_{J,m}^{\left(
k+1\right) }-\widehat{d}_{J,m}^{\left( k\right) }\right\vert >\tau $ set $%
k=k+1$ and $\tau =\tau ^{\left( k\right) },$ and repeat steps $1$ and $3$
after updating $d^{f}=\widehat{d}_{J,m}^{\left( k\right) }$.
\end{enumerate}

The basic idea behind the algorithm is as follows: estimation of the short
memory parameter requires pre-filtering via some preliminary estimate of $%
d_{0}$. An obvious initial (consistent) choice is $d^{f}=\widehat{d}^{GS}$,
as this estimator is already bias-adjusted, and a feasible estimator in the
presence of unknown values for the short memory parameter(s). However $%
\widehat{d}^{GS}$ will still exhibit some bias in finite samples. Hence,
iteration of the above algorithm, which involves replacing the initial
pre-filtering value with successively less biased values, $d^{f}=\widehat{d}%
_{J,m}^{\left( k\right) }$, is expected to yield a final feasible version of
the jackknife estimator, $\widehat{d}_{J,m}^{\left( k+1\right) }$, based on
accurate estimates of all unknown parameters. (See also \color{blue}\citealp*{PoskittMartinGrose2016}\color{black}\ for a related application of this form of iterative
procedure). The feasible version of the jackknife statistic at the final
iteration\textbf{\ }is denoted hereafter by $\widehat{d}_{J,m}^{NO}$ if the
sub-sampling method is non-overlapping and $\widehat{d}_{J,m}^{MB}$ if the
sub-sampling method is moving-block.

\medskip

\begin{center}
- Table \ref{Table:Bias_ARFIMA(1,d,0)_b_I} here -

\medskip

- Table \ref{Table:MSE_ARFIMA(1,d,0)_b_I} here -

\medskip

- Table \ref{Table:Bias_ARFIMA(0,d,1)_b_I} here -

\medskip

- Table \ref{Table:MSE_ARFIMA(0,d,1)_b_I} here -

\medskip
\end{center}

Tables \ref{Table:Bias_ARFIMA(1,d,0)_b_I} and \ref%
{Table:MSE_ARFIMA(1,d,0)_b_I} record the bias and RMSE of all versions of
the feasible jackknife estimator, the feasible GS estimator, $\widehat{d}%
_{1}^{GS}$, and the LPR, PFSB, MLE and PW\textbf{\ }estimators, for the%
\textbf{\ }case where the DGP is ARFIMA($1,d_{0},0$) and the short memory
parameter $\phi _{0}$ is now estimated. The corresponding results for the
ARFIMA($0,d_{0},1$)\ DGP are presented in Tables \ref%
{Table:Bias_ARFIMA(0,d,1)_b_I} and \ref{Table:MSE_ARFIMA(0,d,1)_b_I}. The
results for the $\widehat{d}_{1}^{GS}$, LPR and PFSB are the same as in the
earlier corresponding tables, as these estimators do not depend on knowledge
or estimation of the short memory dynamics. The parametric estimators, MLE
and PW, do of course change when $\phi _{0}$ is estimated.\textbf{\ }Once
again, the minimum bias and RMSE are shown in bold font, and the second
lowest values are italicized.

Consider the results for the ARFIMA($1,d_{0},0$) process (Tables \ref%
{Table:Bias_ARFIMA(1,d,0)_b_I} and \ref{Table:MSE_ARFIMA(1,d,0)_b_I}). The
(various versions of the) feasible jackknife estimators show similar
characteristics to the corresponding optimal estimators, except for
exhibiting larger bias and RMSE. This is to be expected given that the
optimal weights are now functions of estimates of\textbf{\ }both $d_{0}$ and 
$\phi _{0}$. The increase in bias (relative to the known parameter case) is
particularly marked when $\phi _{0}=-0.9$, with the feasible jackknife
estimators seen to be more biased overall than the raw LPR estimator itself,
in three cases. However, for all other values of $\phi _{0}$, the least
biased versions of the\textbf{\ }feasible jackknife estimators are still
almost always less biased than the LPR estimator. For example, when $\phi
_{0}=-0.4$ and $n=96$, the bias reduction of $\widehat{d}_{J,m}^{NO}$ with $%
m=2$ compared to the raw LPR\ estimator is up to $35\%$ and when $n=576$,
the bias reduction rises to $69\%$. Overall, the $\widehat{d}_{J,2}^{NO}$, $%
\widehat{d}_{1}^{GS}$, $\widehat{d}_{1}^{PFSB}$ and $\widehat{d}^{MLE}$
estimators share the title of the least, or second-least biased estimator.%
\textbf{\ }The RMSE results in Table \ref{Table:MSE_ARFIMA(1,d,0)_b_I}
indicate the consistency of the feasible jackknife estimators.\ However, the
MLE estimator still exhibits the least RMSE of all estimators considered,
even when $\phi _{0}$ is estimated, with the LPR estimator taking second
place.

The results in Tables \ref{Table:Bias_ARFIMA(0,d,1)_b_I} and \ref%
{Table:MSE_ARFIMA(0,d,1)_b_I}, for the ARFIMA($0,d_{0},1$) process, tell a
broadly similar story to those for the ARFIMA($1,d_{0},0$) case, except for
the fact that $\widehat{d}_{J,2}^{NO}$\ is now the least biased estimator in
more cases than any other competing estimator, and $\widehat{d}_{J,8}^{NO}$\
is sometimes ranked second in terms of RMSE.

\subsection{Finite sample bias and RMSE: Model misspecification\label{FSP:MS}}

Misspecification occurs when the true DGP is ARFIMA($p_{0},d_{0},q_{0}$) and
the fitted model is ARFIMA($p,d,q$), where $p$ and $q$ are\textbf{\ }such
that $\left\{ p\neq p_{0}\cup q\neq q_{0}\right\} \backslash \left\{
p_{0}\leq p\cap q_{0}\leq q\right\} $. We consider three different forms of
misspecification: $\left( i\right) $ true\textbf{\ }DGP: ARFIMA($1,d_{0},0$%
); fitted model: ARFIMA($0,d,0$); $\left( ii\right) $ true\textbf{\ }DGP:
ARFIMA($0,d_{0},1$); fitted model: ARFIMA($0,d,0$); and\textbf{\ }$\left(
iii\right) $ true\textbf{\ }DGP: ARFIMA($1,d_{0},1$); fitted model: ARFIMA($%
2,d,0$). The first two forms of misspecification mimic a situation in which
short memory dynamics are present, but are ignored. In particular, these
scenarios allow us to assess the relative performance of the feasible
jackknife estimator when no aspect of the short memory specification is used
in the calculation of the weights. The third form of misspecification allows
for another type of error in the specification of the short memory
component. In all cases, we restrict both the DGP and the fitted model to be
within the stationary region. In order to reduce the number of results to be
tabulated and discussed, in Tables 10 to 13 we present results for the
reduced set of values: $d_{0}=$ $\left\{ -0.25,0.25,0.45\right\} $\ and $%
\phi _{0}/\theta _{0}$\ $=\left\{ -0.9,-0.4,0.4,0.9\right\} $. In Tables 14
and 15, we reduce the settings further by omitting results for $\phi
_{0}/\theta _{0}=0.9.$

Under misspecification, the feasible jackknife estimates are obtained by
using the fitted ARFIMA($p,d,q$) model in the Prerequisite step in Section %
\ref{Sec:Iterative method}, while the remaining steps are unchanged. The MLE
and PW estimators are produced as explained in Section \ref{mc}.\footnote{%
For more details on MLE under mis-specification of the short-memory dynamics see, \color{blue}\cite{NadarajahMartinPoskitt2017}\color{black}}\ The
estimators $\widehat{d}_{n}$,\ $\widehat{d}^{GS}$ and $\widehat{d}^{PFSB}$
are not affected by misspecification of the short memory dynamics, as
specification of that component of the model plays no role in their
construction. Hence, the results for these estimators in Tables 10 to 13
match the corresponding results in Tables 2 to 5. The results for all
estimators in Table 14, under the ARFIMA($1,d_{0},1$) DGP, are distinct from
results in all other tables.

Tables \ref{Table:Bias_ARFIMA(1,d,0)_mis} and \ref%
{Table:MSE_ARFIMA(1,d,0)_mis} display the bias and RMSE results of all
estimators under misspecification type $\left( i\right) $. The corresponding
results for the misspecification types $\left( ii\right) $ and $\left(
iii\right) $ are presented in Tables \ref{Table:Bias_ARFIMA(0,d,1)_mis} to %
\ref{Table:MSE_ARFIMA(1,d,1)_mis}. As previously,\textbf{\ }the minimum bias
and RMSE are shown in bold font, and the second lowest values are italicized.

\bigskip

\begin{center}
- Table \ref{Table:Bias_ARFIMA(1,d,0)_mis} here -

\medskip

- Table \ref{Table:MSE_ARFIMA(1,d,0)_mis} here -

\medskip

- Table \ref{Table:Bias_ARFIMA(0,d,1)_mis} here -

\medskip

- Table \ref{Table:MSE_ARFIMA(0,d,1)_mis} here -

\medskip

- Table \ref{Table:Bias_ARFIMA(1,d,1)_mis} here -

\medskip

- Table \ref{Table:MSE_ARFIMA(1,d,1)_mis} here -
\end{center}

\bigskip

From Tables \ref{Table:Bias_ARFIMA(1,d,0)_mis} and \ref%
{Table:MSE_ARFIMA(1,d,0)_mis}, we observe that under misspecification the
(various versions of the) feasible jackknife estimators show similar
characteristics to those\textbf{\ }observed under correct model
specification, although with larger bias and RMSE. This is not surprising,
given that the\textbf{\ }weights are now functions of estimates of $d$ only,
with information on the true or estimated autoregressive coefficient in the
DGP ignored. The increase in bias (relative to the correct specification
case) is particularly marked when $\phi _{0}=-0.9$. When $\phi _{0}>-0.9$,
the feasible jackknife estimators still tend to show reduced bias compared
to the LPR estimator, almost uniformly for $m=2$. For example, when $\phi_{0}=-0.4$ and $n=96$, the bias reduction of $\widehat{{\small d}}%
_{J,2}^{NO} $ compared to the raw LPR\ estimator is up to $12\%$, and when $%
n=576$, the bias reduction rises to $39\%$. Moreover when $\phi _{0}=0.9$, $\widehat{{\small d}}_{J,2}^{NO}$ is either the least, or second-least biased estimator of all estimators considered.

Under this form of misspecification, at least for\textbf{\ }$\phi _{0}>-0.9$, the\textbf{\ }MLE and PW estimators are more biased than the feasible
jackknife estimators, and much more so in some cases. This is expected, as
model misspecification has a\textit{\ direct} impact on the parametric
estimators. In contrast, for the jackknife estimators, misspecification
impinges more \textit{indirectly,} only via the choice of\textbf{\ }weights.
Overall, the PFSB estimator shows the least bias and the feasible\textbf{\ }%
GS estimator shows the second-least bias.

The RMSE results in Table \ref{Table:MSE_ARFIMA(1,d,0)_mis} demonstrate that
neither the feasible jackknife estimators, nor the parametric methods,
out-perform the raw LPR estimator, which\textbf{\ }shows the least RMSE; the
second lowest RMSE usually being observed with one of\textbf{\ }$\widehat{d}%
_{J,8}^{NO}$, $\widehat{d}_{1}^{GS}$, $\widehat{d}^{PFSB}$ or $\widehat{d}%
^{MLE}$.

The bias results for misspecification\ form $\left( ii\right) $ in Table \ref%
{Table:Bias_ARFIMA(0,d,1)_mis} display similar characteristics to those
described for misspecification form $\left( i\right) $, apart from the much
more distinct dominance of $\widehat{d}_{J,2}^{NO}$\ in this case. The RMSE
results in Table \ref{Table:MSE_ARFIMA(0,d,1)_mis} reveal that, once again, $%
\widehat{d}_{n}$ has the smallest RMSE values, with $\widehat{d}^{MLE}$
taking the second place.

Under the third form of misspecification, the bias estimates indicate that $%
\widehat{d}_{1}^{GS}$\textbf{\ }exhibits the least bias, with $\widehat{d}%
_{J,2}^{NO}$ taking the second place (refer Table \ref%
{Table:Bias_ARFIMA(1,d,1)_mis}). In terms of the\textbf{\ }RMSE results in
Table \ref{Table:MSE_ARFIMA(1,d,1)_mis}, once again the raw LPR has the
lowest values most frequently, followed by\textbf{\ }$\widehat{d}_{1}^{GS}$,
with the \textit{second} smallest RMSE values mostly observed for $\widehat{d%
}_{J,8}^{NO}$.

\subsection{A summary of the simulation results\label{summ}}

\begin{center}
\medskip

- Table \ref{rank} here -

\medskip
\end{center}

To assist the reader, in Table \ref{rank} we summarize all of the simulation
results tabulated in Tables \ref{Table:Bias_ARFIMA(1,d,0)_b} to \ref%
{Table:MSE_ARFIMA(1,d,1)_mis}, by ranking the estimators - from first to
third - under the different scenarios. Panel A in Table \ref{rank}
summarizes the results in Tables \ref{Table:Bias_ARFIMA(1,d,0)_b} to \ref%
{Table:MSE_ARFIMA(0,d,1)_b}; Panel B summarizes the results in Tables \ref%
{Table:Bias_ARFIMA(1,d,0)_b_I} to\ \ref{Table:MSE_ARFIMA(0,d,1)_b_I}; and
Panel C summarizes the results in Tables \ref{Table:Bias_ARFIMA(1,d,0)_mis}
to \ref{Table:MSE_ARFIMA(1,d,1)_mis}, with the three misspecification types
-- $\left( i\right) $ to $\left( iii\right) $ -- corresponding to those
described in the above section. An estimator is ranked first if it has the
smallest value (in bold font) for the relevant measure (bias or RMSE) the
largest number of times in a given table. The other ranks follow
accordingly. If needed to complete the ranking, the method with the largest
number of second-smallest values (in italic font) in a given Table is
referenced. And so on. The rankings accord with the narrative in the
preceding sections.

\section{Discussion}

With the fractionally integrated autoregressive moving-average model being
one of the key model classes for describing long memory processes, much
effort has been expended on producing accurate estimates of the fractional
differencing parameter, $d$, in particular. This quest has been hampered by
certain problems, for both parametric and semi-parametric approaches.
Specifically, the need to fully specify the model for parametric estimation
means that any incorrect specification of the short memory dynamics has
serious consequences, in terms of both finite sample and asymptotic
properties (see, for example, \color{blue}\citealp*{ChenDeo2006}\color{black}\ and \color{blue}\citealp*{NadarajahMartinPoskitt2017}
\color{black}). On the other hand, the semi-parametric estimators, whilst
not requiring explicit modelling of the short memory component, can suffer
substantial finite sample bias in the presence of unaccounted for short
memory dynamics. It is bias-correction of this latter class of estimator
that has been the focus of this paper.

A natural way of producing a bias-corrected version of the commonly used the
log-periodogram regression (LPR) estimator\ is suggested in this article,
based on the jackknife technique. Optimality is achieved by allocating
weights within the jackknife that are adjusted for the bias to a particular
order,\ and that minimize the increase in variance caused by the reduction
in bias. The construction of the optimally\ bias-corrected estimator
requires expressions for the dominant bias term and variance of the
unadjusted LPR estimator. We show that the statistical properties of the LPR
estimator, as originally established by \color{blue}\cite{HurvichDeoBrodshky1998}\color{black}, are valid for a more general class of
fractional process that is not necessarily Gaussian. Hence, the jackknife
estimator that we construct from the optimally weighted average of LPR
estimators also has proven optimality under this general form of process. In
addition to proving the consistency of the optimal jackknife estimator, we
have the important result that the asymptotic variance of the estimator is
equivalent to that of the unadjusted LPR estimator. That is, bias adjustment
is effected without any associated increase in asymptotic variance.

Our Monte Carlo study shows that, amongst the semi-parametric estimators,
the optimal jackknife estimator based on a small number of non-overlapping
sub-samples outperforms (in terms of bias reduction) both the pre-filtered sieve bootstrap estimator of \color{blue}\cite{PoskittMartinGrose2016}\color{black}\ and the weighted-average estimator of \color{blue}\cite{GuggenbergerSun2006}\color{black}, albeit in the somewhat artificial case in which the parameters of the DGP are correctly identified and known, for
the purpose of computing optimal weights.\textbf{\ }In the realistic case in
which these parameters are not known, we suggest an iterative procedure in
which the weights are constructed using consistent estimates. In this case
the method is not dominant overall, compared to alternative bias-corrected
methods, but is still the least biased in some cases. The relationship
between the semi-parametric methods and the two parametric methods is much
as anticipated. In particular, the semi-parametric methods dominate in terms
of both bias and RMSE when the short memory dynamics are misspecified. Once
again, a version of the feasible jackknife method is ranked highly under
certain misspecified settings, despite the fact that the misspecification
impacts on the construction of the jackknife weights.

Throughout the paper we assume that the number of sub-samples is fixed. One
may wish to allow the number of sub-samples to vary and explore the
characteristics of the resultant bias-adjusted estimators in this case.
Importantly, alternative methods of estimating the weights are to be
investigated, including the possible use of a non-parametric estimate of the
spectral density (see, \color{blue}\citealp*{MoulinesSoulier1999}\color{black}), rather than replacing the true values with their consistent estimates, or the use of an adaptive method in the spirit of that suggested by \color{blue}\cite{GuggenbergerSun2006}\color{black}.

Finally, although we focus on the LPR estimator, the jackknife procedure can
easily be applied to other estimators such as the local Whittle estimator of \color{blue}\cite{Kunsch1987}\color{black}, the local polynomial Whittle estimator of \color{blue}\cite{AndrewSun2004}\color{black}\ or even to the (already analytically) bias-reduced estimators of \color{blue}\cite{AndrewsGuggenberger2003}\color{black}\ and \color{blue}\cite{GuggenbergerSun2006}\color{black}. Another possible extension is to relax the assumption of stationarity of the process using the results \color{blue}\cite{Velasco1999b}\color{black}, and to derive the properties the optimal jackknife estimators in the nonstationary setting.

\bibliographystyle{appa}
\bibliography{My_library}

\begin{subappendices}

\section*{Appendix A: Proofs of Theorems and Lemmas\label{Appendix:A}}

\begin{proof}[Proof of Theorem 1]
	Under Assumptions $\left( A.1\right) -\left( A.4\right) $, the proof of the
	theorem follows immediately after applying the results of Corollary A.1 of \color{blue}\cite{NadarajahMartinPoskitt2017}\color{black}\ to Lemmas, 2, 5, 6 and 7 of \color{blue}\cite{HurvichDeoBrodshky1998}\color{black}. Hence we
	omit the proof.
\end{proof}

Prior to providing the proofs of the other theorems and lemmas, we will
introduce the following definition, and its properties, to be used
hereinafter.

Define $\Delta ^{\left( T\right) }\left( \lambda \right) =\sum_{t=1}^{T}\exp
\left( -\imath \lambda t\right) .$ Then,%
\begin{equation}
\Delta ^{\left( T\right) }\left( \lambda \right) =\exp \left( -\imath 
\frac{\lambda }{2}\left( T+1\right) \right) \frac{\sin \left( \frac{\lambda T%
	}{2}\right) }{\sin \left( \frac{\lambda }{2}\right) } \\
= \begin{cases}
0 & \quad \textit{if} \quad  \lambda \not \equiv 0 (mod \ \pi)  \\
T & \quad \textit{if} \quad  \lambda \equiv 0 (mod \ 2\pi)  \\ 
0 \quad \text{or} \quad T & \quad \textit{if} \quad  \lambda = \pm\pi, \pm 3\pi, \dots 
\end{cases} \tag{A.1} \label{delta}
\end{equation} 
where, $a\equiv b\left({mod} \alpha \right) $ means that the
difference $\left( a-b\right) $ is an integral multiple of $\alpha $ for $%
\alpha ,x,y\in 
\mathbb{R}
.$

Consider%
\begin{eqnarray*}
	\sum_{t=-T}^{T}\exp \left\{ -\imath \lambda t\right\} 
	&=&1+\sum_{t=1}^{T}\exp \left\{ -\imath \lambda t\right\}
	+\sum_{t=1}^{T}\exp \left\{ -\imath \left( -\lambda \right) t\right\}  \\
	&=&1+2\Delta ^{\left( T\right) }\left( \lambda \right) ,\text{ using (\ref%
		{delta}).}
\end{eqnarray*}%
This immediately gives that%
\begin{equation}
\lim_{T\rightarrow \infty }\dfrac{1}{2\pi }\sum_{t=-T}^{T}\exp \left\{
-\imath \lambda t\right\} =\eta \left( \lambda \right) .  \tag{A.2}
\label{limit of big delta1}
\end{equation}

We will derive the following two properties of $\Delta ^{\left( T\right)
}\left( \lambda \right) $.

\begin{enumerate}
	\item Sum:%
	\begin{align}
	\lim_{T\rightarrow \infty }\left[ \Delta ^{\left( T\right) }\left( \lambda
	\right) +\Delta ^{\left( T\right) }\left( -\lambda \right) \right] 
	&=\lim_{T\rightarrow \infty }%
	\Big(%
	\sum_{t=-T}^{T}\exp \left\{ \imath \lambda t\right\} -1%
	\Big)
	\notag \\
	&=2\pi \eta \left( \lambda \right) -1,\text{ by (\ref{limit of big delta1}).%
	}  \tag{A.3}  \label{Sum of deltas}
	\end{align}
	
	\item Product:%
	\begin{align}
	T^{-2}\Delta ^{\left( T\right) }\left( -\lambda \right) \Delta ^{\left(
		T\right) }\left( \lambda \right)  &=T^{-2}\sum_{t=1}^{T}\sum_{s=1}^{T}\exp
	\left\{ -\imath \lambda \left( t-s\right) \right\}   \notag \\
	&=T^{-2}\sum_{t=-\left( T-1\right) }^{T-1}\left( T-\left\vert t\right\vert
	\right) \exp \left\{ -\imath \lambda t\right\}   \notag \\
	&=T^{-1}\sum_{t=-\left( T-1\right) }^{T-1}\exp \left\{ -\imath \lambda
	t\right\} -\sum_{t=-\left( T-1\right) }^{T-1}\frac{\left\vert t\right\vert }{%
		T^{2}}\exp \left\{ -\imath \lambda t\right\} .  \tag{A.4}
	\label{Product of deltas step 1}
	\end{align}%
	Consider the second term in the above expression,%
	\begin{equation*}
	\Big|%
	\sum_{t=-\left( T-1\right) }^{T-1}\frac{\left\vert t\right\vert }{T^{2}}\exp
	\left\{ -\imath \lambda t\right\} 
	\Big|%
	\leq 
	\Big|%
	\sum_{t=-\left( T-1\right) }^{T-1}\frac{\left\vert t\right\vert }{T^{2}}%
	\Big|%
	\rightarrow 0\text{ as }T\rightarrow \infty .
	\end{equation*}%
	Hence the expression in (\ref{Product of deltas step 1}) is given by,%
	\begin{equation}
	T^{-2}\Delta ^{\left( T\right) }\left( -\lambda \right) \Delta ^{\left(
		T\right) }\left( \lambda \right) =T^{-1}2\pi \eta \left( \lambda \right)
	+o(1).  \tag{A.5}  \label{Product of deltas}
	\end{equation}
\end{enumerate}

\begin{lemma}
	\label{Lemma A.1}Let $\mathbf{W}_{t}$ be a stationary $h$ vector-valued time
	series with $n$ observations satisfying the spectral density given in (1).
	Suppose that Assumptions $\left( A.1\right) -\left( A.3\right) $ hold. The $%
	k^{th}$-order cumulant of the multivariate series, $\kappa \left\{
	D_{W_{a_{1}}}^{\left( n\right) }\left( \lambda _{1}\right)
	,...,D_{W_{a_{k}}}^{\left( n\right) }\left( \lambda _{k}\right) \right\} $ is%
	\begin{equation}
	n^{-\frac{k}{2}}\left( 2\pi \right) ^{\frac{k}{2}-1}\Delta ^{\left( n\right)
	}%
	\Bigg(%
	\sum_{j=1}^{k}\lambda _{j}%
	\Bigg)%
	f_{W_{a_{1}}\ldots W_{a_{k}}}\left( \lambda _{1},...,\lambda _{k-1}\right) +o%
	\big(%
	n^{1-2d-\frac{k}{2}}%
	\big)%
	.  \tag{A.6}  \label{Cum_DFT1}
	\end{equation}%
	where $f_{W_{a_{1}}\ldots W_{a_{k}}}\left( \lambda _{1},...,\lambda
	_{k-1}\right) $ is the $k^{th}$-order spectrum of the series $\mathbf{W}_{t}$%
	, with $a_{1},\ldots ,a_{k}=1,2,\ldots ,h,$ and $k=1,2,\ldots $.
\end{lemma}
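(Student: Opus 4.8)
The plan is to establish (A.6) by expanding the joint cumulant of the DFTs through the multilinearity of cumulants, reducing by stationarity to a sum over lag differences that factorizes approximately into the Dirichlet-type kernel $\Delta^{\left(n\right)}(\cdot)$ and the $k$th-order spectral density, and then controlling the remainder by means of the hyperbolic decay of the cumulant sequence that is forced by Assumptions $\left(A.1\right)$--$\left(A.3\right)$. Everything except that last step is a multivariate, long-memory-adapted rerun of \citet[Chapters~2 and~4]{Brillinger1981}.

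First I would use the multilinearity of joint cumulants together with the definition of the DFT in (\ref{Periodogram}) and Table~\ref{Table:Notations} to write
\[
\kappa\big\{D_{W_{a_1}}^{\left(n\right)}(\lambda_1),\ldots,D_{W_{a_k}}^{\left(n\right)}(\lambda_k)\big\}=(2\pi n)^{-k/2}\sum_{t_1=1}^{n}\cdots\sum_{t_k=1}^{n}\exp\Big(-\imath\sum_{j=1}^{k}\lambda_j t_j\Big)\,\mathrm{cum}\big(W_{a_1,t_1},\ldots,W_{a_k,t_k}\big).
\]
By strict stationarity the cumulant on the right depends only on the differences $u_j=t_j-t_k$, $j=1,\ldots,k-1$; write it as $c_{a_1\ldots a_k}(u_1,\ldots,u_{k-1})$. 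Substituting, the exponent splits as $\sum_{j=1}^{k}\lambda_j t_j=\sum_{j=1}^{k-1}\lambda_j u_j+t_k\sum_{j=1}^{k}\lambda_j$, so that summing the separated factor $\exp\big(-\imath t_k\sum_{j=1}^{k}\lambda_j\big)$ over $t_k$ would produce $\Delta^{\left(n\right)}\big(\sum_{j=1}^{k}\lambda_j\big)$ if the accompanying lag-sum were free of $t_k$; it is not, because the $u_j$ only range over the $t_k$-dependent box $\{1-t_k\le u_j\le n-t_k\}$.

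Second I would replace, for each fixed $t_k$, this truncated lag-sum by the full sum over $\mathbb{Z}^{k-1}$, which by the inversion formula (\ref{general relatioship of kth order spectrum}) equals $(2\pi)^{k-1}f_{W_{a_1}\ldots W_{a_k}}(\lambda_1,\ldots,\lambda_{k-1})$. The resulting leading contribution is $n^{-k/2}(2\pi)^{k/2-1}\Delta^{\left(n\right)}\big(\sum_{j=1}^{k}\lambda_j\big)f_{W_{a_1}\ldots W_{a_k}}(\lambda_1,\ldots,\lambda_{k-1})$, which is exactly the first term of (A.6). It then remains to show that the accumulated error — the difference between the doubly indexed sum above and this product — is $o\big(n^{1-2d-k/2}\big)$.

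Third, and this is the step that separates the present long-memory setting from the absolutely-summable case in \citet[Chapters~2 and~4]{Brillinger1981}, I would bound that remainder. The error has two sources: the tail of each lag-sum over the complement of the box, and the variation of the box with $t_k$. Under $\left(A.1\right)$--$\left(A.3\right)$ the linear representation in $\left(A.3\right)$ forces the autocovariances to decay at rate $|u|^{2d-1}$, and the higher-order cumulant sequences to be dominated by convolutions of the coefficients $b_j$, whose decay is governed by the bound $\left\vert db(\lambda)/d\lambda\right\vert =O(\lambda^{-1})$ in $\left(A.3\right)$; the quantitative input I would need is a tail estimate of the form $\sum_{|u|>M}|c_{a_1\ldots a_k}(\cdot)|=O\big(M^{2d}\big)$, with an analogous statement for the higher-dimensional lag arguments. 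Summing such tails over $t_k=1,\ldots,n$ and dividing by $n^{k/2}$, while accounting for the fact that $\Delta^{\left(n\right)}\big(\sum_j\lambda_j\big)$ can be as large as $n$ near $\sum_j\lambda_j\equiv 0\ (\mathrm{mod}\ 2\pi)$ and that $f_{W_{a_1}\ldots W_{a_k}}$ itself carries a singularity of order $2d$ at the origin by $\left(A.1\right)$, is what pins the target exponent at $1-2d-k/2$. The main obstacle is precisely this bookkeeping — showing that the interplay between the $O\big(M^{2d}\big)$ tails, the edge effects of the $t_k$-dependent window, and the $\Delta^{\left(n\right)}$-factor does not degrade the stated order; in Brillinger's treatment absolute summability makes the tail $o(1)$ uniformly and hence trivial, whereas here it must be quantified. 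I would therefore devote the bulk of the argument to these decay estimates, relying throughout on the linear-process representation $\left(A.3\right)$ and the derivative bounds it supplies, and treat the rest as routine manipulation of the elementary identities for $\Delta^{\left(n\right)}$ recorded above.
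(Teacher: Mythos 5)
Your skeleton coincides with the paper's: both start from the Brillinger-type representation of the joint cumulant of the DFTs (your multilinearity computation is exactly what the paper obtains by citing Lemma P4.2 of Brillinger, 1981), both substitute $u_{j}=t_{j}-t_{k}$ so that the factor $\exp\{-\imath t_{k}\sum_{j}\lambda_{j}\}$ produces $\Delta^{(n)}\big(\sum_{j}\lambda_{j}\big)$, and both then try to replace the remaining truncated lag sum by $(2\pi)^{k-1}f_{W_{a_{1}}\ldots W_{a_{k}}}(\lambda_{1},\ldots,\lambda_{k-1})$ and control the discrepancy. So this is not a different route; the issue is the one step you yourself flag as the bulk of the work, which is left unexecuted and, as formulated, would fail.

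The quantitative input you propose, $\sum_{|u|>M}\left\vert c_{a_{1}\ldots a_{k}}(\cdot)\right\vert =O\big(M^{2d}\big)$, cannot hold in the regime of interest: for $d\in(0,0.5)$ the second-order cumulants (autocovariances) decay like $|u|^{2d-1}$ and are \emph{not} absolutely summable -- that is the defining feature of long memory -- so the absolute tail sum diverges rather than being $O(M^{2d})$. For the same reason, your appeal to the inversion formula (18) to identify the full lag sum with $(2\pi)^{k-1}f$ is not available as stated, since Definition 1 gives that identity only under $\sum\left\vert \kappa\right\vert<\infty$; in the long-memory case the Fourier sum converges only conditionally, so the "replace truncated sum by full sum, bound the tail in absolute value" scheme breaks down exactly where the rate $o\big(n^{1-2d-k/2}\big)$ has to be produced. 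The paper proceeds differently in this detail: it keeps a single $t_{k}$-independent lag window ($-S\leq u_{j}\leq S$ with $S=2(n-1)$), so the $\Delta^{(n)}$ factorization is carried out before any tail issue arises, and it then measures the deviation of the truncated Fourier sum from $f$ by a \emph{weighted} device, inserting factors $\left(\left\vert u_{j}\right\vert/n\right)^{1-2d}$ so that the error is expressed as $n^{2d-1}$ times a weighted cumulant sum (a long-memory analogue of the $\sum|u||\kappa(u)|$ bound in Brillinger's Theorem 4.3.2); multiplying by the prefactor $n^{-k/2}\Delta^{(n)}(\cdot)$ then yields the stated $o\big(n^{1-2d-k/2}\big)$. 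To repair your argument you would need a substitute of that kind -- e.g. a weighted summability condition such as $\sum\big(\left\vert u_{1}\right\vert^{1-2d}+\cdots+\left\vert u_{k-1}\right\vert^{1-2d}\big)\left\vert\kappa(u)\right\vert<\infty$, or a conditional-convergence/regular-variation argument -- rather than an absolute tail bound; your treatment of the $t_{k}$-dependent box (the edge effects) is a legitimate refinement, but without a valid decay input it does not by itself deliver the remainder order in (A.6).
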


\begin{proof}
	By Lemma P4.2 of\ \color{blue}\cite{Brillinger1981}\color{black}, the
	cumulant, $\kappa 
	\big\{%
	D_{W_{a_{1}}}^{\left( n\right) }\left( \lambda _{1}\right) ,$ $...,$ $%
	D_{W_{a_{k}}}^{\left( n\right) }\left( \lambda _{k}\right) 
	\big\}%
	$ has the form%
	\begin{equation*}
	\sum\limits_{t_{1}=-\infty }^{\infty }...\sum\limits_{t_{k}=-\infty
	}^{\infty }\exp 
	\Bigg(%
	-\imath \sum\limits_{j=1}^{k}\lambda _{j}t_{j}%
	\Bigg)%
	\kappa _{W_{a_{1}}\ldots W_{a_{k}}}(t_{1}-t_{k},...,t_{k-1}-t_{k})
	\end{equation*}%
	Substituting, $u_{j}=t_{j}-t$ where $t=t_{k},$ and $-S\leq u_{j}\leq S,$ for 
	$j=1,\ldots ,k-1$ with $S=2\left( n-1\right) $ we have that%
	\begin{eqnarray*}
		&&\kappa 
		\big\{%
		D_{W_{a_{1}}}^{\left( n\right) }\left( \lambda _{1}\right)
		,D_{W_{a_{2}}}^{\left( n\right) }\left( \lambda _{2}\right)
		,...,D_{W_{a_{k}}}^{\left( n\right) }\left( \lambda _{k}\right) 
		\big\}
		\\
		&=&\left( 2\pi n\right) ^{-\frac{k}{2}}\sum_{t=-\infty }^{\infty
		}\sum_{u_{1}=-S}^{S}\cdots \sum_{u_{k}=-S}^{S}\exp 
		\Bigg(%
		-\imath \sum\limits_{j=1}^{k}\lambda _{j}\left( u_{j}+t\right) 
		\Bigg)%
		\kappa _{W_{a_{1}}\ldots W_{a_{k}}}\left( u_{1},...,u_{k-1}\right) \\
		&=&\left( 2\pi n\right) ^{-\frac{k}{2}}\sum_{u_{1}=-S}^{S}\cdots
		\sum_{u_{k}=-S}^{S}\exp 
		\Bigg(%
		-\imath \sum\limits_{j=1}^{k-1}\lambda _{j}u_{j}%
		\Bigg)%
		\kappa _{W_{a_{1}}\ldots W_{a_{k}}}\left( u_{1},...,u_{k-1}\right)
		\sum_{t=-\infty }^{\infty }\exp 
		\Bigg(%
		-\imath \sum\limits_{j=1}^{k}\lambda _{j}t%
		\Bigg)
		\\
		&=&\left( 2\pi \right) ^{-\frac{k}{2}+1}n^{-\frac{k}{2}}\Delta ^{\left(
			n\right) }\left( \sum\nolimits_{j=1}^{k}\lambda _{j}\right)
		\sum_{u_{1}=-S}^{S}\cdots \sum_{u_{k}=-S}^{S}\exp 
		\Bigg(%
		-\imath \sum\limits_{j=1}^{k-1}\lambda _{j}u_{j}%
		\Bigg)%
		\kappa _{W_{a_{1}}\ldots W_{a_{k}}}\left( u_{1},...,u_{k-1}\right) .
	\end{eqnarray*}
	
	The rapidity of the convergence of $\sum_{u_{1}=-S}^{S} \cdots
	\sum\nolimits_{u_{k}=-S}^{S}\exp (-\imath \sum\nolimits_{j=1}^{k-1}\lambda
	_{j}u_{j})\kappa _{W_{a_{1}}\ldots W_{a_{k}}}\left( u_{1},...,u_{k-1}\right) 
	$ to $f_{W_{a_{1}}\ldots W_{a_{k}}}\left( \lambda _{1},\ldots ,\lambda
	_{k-1}\right) $ as $n\rightarrow \infty $ is measured as follows.%
	\begin{eqnarray*}
		&&%
		\Bigg|%
		\sum_{u_{1}=-S}^{S}\cdots \sum_{u_{k}=-S}^{S}\exp 
		\Bigg(%
		-\imath \sum\limits_{j=1}^{k-1}\lambda _{j}u_{j}%
		\Bigg)%
		\kappa _{W_{a_{1}}\ldots W_{a_{k}}}\left( u_{1},...,u_{k-1}\right)
		-f_{W_{a_{1}}\ldots W_{a_{k}}}\left( \lambda _{1},\ldots ,\lambda
		_{k-1}\right) 
		\Bigg|
		\\
		&=&%
		\Bigg|%
		\sum_{\left\vert u_{1}\right\vert >S}\cdots \sum_{\left\vert
			u_{k}\right\vert >S}\exp 
		\Bigg(%
		-\imath \sum\limits_{j=1}^{k-1}\lambda _{j}u_{j}%
		\Bigg)%
		\kappa _{W_{a_{1}}\ldots W_{a_{k}}}\left( u_{1},...,u_{k-1}\right) 
		\Bigg|
		\\
		&\leq &\sum_{\left\vert u_{1}\right\vert >S}\cdots \sum_{\left\vert
			u_{k}\right\vert >S}\left\vert \kappa _{W_{a_{1}}\ldots W_{a_{k}}}\left(
		u_{1},...,u_{k-1}\right) \right\vert \\
		&\leq &n^{-1+2d}\sum_{\left\vert u_{1}\right\vert >S}\cdots \sum_{\left\vert
			u_{k}\right\vert >S}\left( \left\vert \frac{u_{1}}{n}\right\vert
		^{1-2d}+\cdots +\left\vert \frac{u_{k-1}}{n}\right\vert ^{1-2d}\right)
		\left\vert \kappa _{W_{a_{1}}\ldots W_{a_{k}}}\left(
		u_{1},...,u_{k-1}\right) \right\vert .
	\end{eqnarray*}%
	Hence the proof is completed since Assumption $\left( A.1\right) $ holds and 
	$n^{-1+2d}\left( \left\vert u_{1}\right\vert +\cdots +\left\vert
	u_{k-1}\right\vert \right) \rightarrow 0$ as $n\rightarrow \infty .$
\end{proof}

The above Lemma shows that when the DFTs correspond to multivariate time
series with the same number of observations in their sample, the $k^{th}$%
-order cumulant of the multivariate series can be approximated with the
expression given in (\ref{Cum_DFT1}). The only difference between this Lemma
and Proposition 1 is that the proposition deals with different sample sizes
for the time series in the multivariate set-up.

\begin{proof}[Proof of Proposition 1]
	The proof of the proposition can be established in a similar fashion to the
	above proof. Hence, we omit the proof here.
\end{proof}

\begin{proof}[Proof of Theorem 2]
	The expectation of the DFT of the full sample or the sub-sample is%
	\begin{eqnarray*}
		E\left( D_{X_{a_{i}}}^{\left( L_{i}\right) }\left( \lambda \right) \right) 
		&=&\tfrac{1}{\sqrt{2\pi n}}\sum\nolimits_{t=1}^{n}\exp \left( -\imath
		\lambda t\right) E\left( y_{t}\right)  \\
		&=&\tfrac{\mu _{Y}}{\sqrt{2\pi L_{i}}}\Delta ^{\left( L_{i}\right) }\left(
		\lambda \right)  \\
		&=&\left\{ 
		\begin{tabular}{lll}
			$0$ & $if$ & $\lambda 
			\not%
			\equiv 0\left( mod \ \pi \right) $ \\ 
			$\sqrt{\tfrac{L_{i}}{2\pi }}\mu _{Y}$ & $if$ & $\lambda \equiv \pi \left( 
			mod \ 2\pi \right) $ \\ 
			$0$ $or$ $\sqrt{\tfrac{L_{i}}{2\pi }}\mu _{Y}$ & $if$ & $\lambda =\pm \pi
			,\pm 3\pi ,\ldots $%
		\end{tabular}%
		\right. ,
	\end{eqnarray*}%
	where $E\left( y_{t}\right) =\mu _{Y}.$ Therefore, $D_{X_{a_{i}}}^{\left(
		L_{i}\right) }\left( \lambda \right) $ behaves in the manner required by the
	theorem as the first-order cumulant provides the mean of the random variable
	of interest.
	
	The covariance between $D_{X_{a_{i}}}^{\left( L_{i}\right) }\left( \lambda
	\right) $ and $D_{X_{a_{j}}}^{\left( L_{j}\right) }\left( \mu \right) $ is
	measured by the second-order cumulant and Proposition 1 gives that%
	\begin{equation*}
	Cov\left( D_{X_{a_{i}}}^{\left( L_{i}\right) }\left( \lambda \right)
	,D_{X_{a_{j}}}^{\left( L_{j}\right) }\left( \mu \right) \right) =\tfrac{1}{L}%
	\Delta ^{\left( L\right) }\left( \lambda +\mu \right)
	f_{X_{a_{i}},X_{a_{j}}}\left( \lambda \right) +o\left( L^{^{-2d}}\right) ,
	\end{equation*}%
	where $L=\min \left( L_{i},L_{j}\right) $. Thus, the covariance between the
	DFTs of the full sample and the sub-sample tends to $0$ as $n\rightarrow
	\infty .$
\end{proof}

\begin{proof}[Proof of Theorem 3]
	The covariance between $I_{X_{a_{i}}}^{\left( L_{i}\right) }\left( \lambda
	\right) $ and $I_{X_{a_{j}}}^{\left( L_{j}\right) }\left( \mu \right) $ is
	given by,%
	\begin{eqnarray*}
		Cov\left( I_{X_{a_{i}}}^{\left( L_{i}\right) }\left( \lambda \right)
		,I_{X_{a_{j}}}^{\left( L_{j}\right) }\left( \mu \right) \right)  &=&E\left(
		I_{X_{a_{i}}}^{\left( L_{i}\right) }\left( \lambda \right)
		I_{X_{a_{j}}}^{\left( L_{j}\right) }\left( \mu \right) \right) -E\left(
		I_{X_{a_{i}}}^{\left( L_{i}\right) }\left( \lambda \right) \right) E\left(
		I_{X_{a_{j}}}^{\left( L_{j}\right) }\left( \mu \right) \right)  \\
		&=&E\left( D_{X_{a_{i}}}^{\left( L_{i}\right) }\left( \lambda \right)
		D_{X_{a_{i}}}^{\left( L_{i}\right) }\left( -\lambda \right)
		D_{X_{a_{j}}}^{\left( L_{j}\right) }\left( \mu \right) D_{X_{a_{j}}}^{\left(
			L_{j}\right) }\left( -\mu \right) \right)  \\
		&&-E\left( D_{X_{a_{i}}}^{\left( L_{i}\right) }\left( \lambda \right)
		D_{X_{a_{i}}}^{\left( L_{i}\right) }\left( -\lambda \right) \right) E\left(
		D_{X_{a_{j}}}^{\left( L_{j}\right) }\left( \mu \right) D_{X_{a_{j}}}^{\left(
			L_{j}\right) }\left( -\mu \right) \right) .
	\end{eqnarray*}%
	Since the expectations can be expressed in terms of cumulants (see Appendix
	B for more details), we may express the covariance term as follows,%
	\begin{eqnarray*}
		Cov\left( I_{X_{a_{i}}}^{\left( L_{i}\right) }\left( \lambda \right)
		,I_{X_{a_{j}}}^{\left( L_{j}\right) }\left( \mu \right) \right)  &=&\kappa
		\left( D_{X_{a_{i}}}^{\left( L_{i}\right) }\left( \lambda \right)
		,D_{X_{a_{i}}}^{\left( L_{i}\right) }\left( -\lambda \right)
		,D_{X_{a_{j}}}^{\left( L_{j}\right) }\left( \mu \right)
		,D_{X_{a_{j}}}^{\left( L_{j}\right) }\left( -\mu \right) \right)  \\
		&&+ \ \kappa \left( D_{X_{a_{i}}}^{\left( L_{i}\right) }\left( -\lambda \right)
		,D_{X_{a_{j}}}^{\left( L_{j}\right) }\left( \mu \right) \right) \kappa
		\left( D_{X_{a_{i}}}^{\left( L_{i}\right) }\left( \lambda \right)
		,D_{X_{a_{j}}}^{\left( L_{j}\right) }\left( -\mu \right) \right)  \\
		&&+ \ \kappa \left( D_{X_{a_{i}}}^{\left( L_{i}\right) }\left( \lambda \right)
		,D_{X_{a_{j}}}^{\left( L_{j}\right) }\left( \mu \right) \right) \kappa
		\left( D_{X_{a_{i}}}^{\left( L_{i}\right) }\left( -\lambda \right)
		,D_{X_{a_{j}}}^{\left( L_{j}\right) }\left( -\mu \right) \right) .
	\end{eqnarray*}%
	Then Proposition 1 gives us that,%
	\begin{align}
	Cov\left( I_{X_{a_{i}}}^{\left( L_{i}\right) }\left( \lambda \right)
	,I_{X_{a_{j}}}^{\left( L_{j}\right) }\left( \mu \right) \right) 
	&=L^{-2}\left( 2\pi \right) \Delta ^{\left( L\right) }\left( \lambda +\mu
	-\lambda -\mu \right) f_{X_{a_{i}}X_{a_{i}}X_{a_{j}}X_{a_{j}}}\left( \lambda
	,-\lambda ,\mu \right) +o\left( L^{-1-2d}\right)   \notag \\
	& \ \ \ +\left( L^{-1}\Delta ^{\left( L\right) }\left( -\lambda +\mu \right)
	f_{X_{a_{i}}X_{a_{j}}}\left( -\lambda \right) +o\left( L^{-2d}\right)
	\right)   \notag \\
	& \ \ \ \times \left( L^{-1}\Delta ^{\left( L\right) }\left( \lambda -\mu \right)
	f_{X_{a_{i}}X_{a_{j}}}\left( \lambda \right) +o\left( L^{-2d}\right) \right) 
	\notag \\
	& \ \ \ +\left( L^{-1}\Delta ^{^{\left( L\right) }}\left( \lambda +\mu \right)
	f_{X_{a_{i}}X_{a_{j}}}\left( \lambda \right) +o\left( L^{-2d}\right) \right) 
	\notag \\
	& \ \ \ \times \left( L^{-1}\Delta ^{\left( L\right) }\left( -\lambda -\mu \right)
	f_{X_{a_{i}}X_{a_{j}}}\left( -\lambda \right) +o\left( L^{-2d}\right)
	\right)   \notag \\
	&=L^{-2}\left( 2\pi \right) \Delta ^{\left( L\right) }\left( 0\right)
	f_{X_{a_{i}}X_{a_{i}}X_{a_{j}}X_{a_{j}}}\left( \lambda ,-\lambda ,\mu
	\right) +o\left( L^{-1-2d}\right)   \notag \\
	& \ \ \ +L^{-2}\Delta ^{\left( L\right) }\left( -\lambda +\mu \right) \Delta
	^{\left( L\right) }\left( \lambda -\mu \right) \left(
	f_{X_{a_{i}}X_{a_{j}}}\left( \lambda \right) \right) ^{2}  \notag \\
	& \ \ \ +L^{-1}\left( \Delta ^{\left( L\right) }\left( -\lambda +\mu \right)
	+\Delta ^{\left( L\right) }\left( \lambda -\mu \right) \right)
	f_{X_{a_{i}}X_{a_{j}}}\left( \lambda \right) o\left( L^{-2d}\right)   \notag
	\\
	& \ \ \ +L^{-2}\Delta ^{\left( L\right) }\left( \lambda +\mu \right) \Delta
	^{\left( L\right) }\left( -\lambda -\mu \right) \left(
	f_{X_{a_{i}}X_{a_{j}}}\left( \lambda \right) \right) ^{2}  \notag \\
	& \ \ \ +L^{-1}\Delta ^{\left( L\right) }\left( \lambda +\mu \right)
	f_{X_{a_{i}}X_{a_{j}}}\left( -\lambda \right) +\Delta ^{\left( L\right)
	}\left( -\lambda -\mu \right) f_{X_{a_{i}}X_{a_{j}}}\left( -\lambda \right)
	o\left( L^{-2d}\right)   \notag \\
	&=L^{-1}\left( 2\pi \right) f_{X_{a_{i}}X_{a_{i}}X_{a_{j}}X_{a_{j}}}\left(
	\lambda ,-\lambda ,\mu \right) +L^{-2}\left[ \Delta ^{\left( L\right)
	}\left( -\lambda +\mu \right) \Delta ^{\left( L\right) }\left( \lambda -\mu
	\right) \right.   \notag \\
	& \ \ \ +\left. \Delta ^{(L)}\left( \lambda +\mu \right) \Delta ^{\left( L\right)
	}\left( -\lambda -\mu \right) \right] \left( f_{X_{a_{i}}X_{a_{j}}}\left(
	\lambda \right) \right) ^{2}+\left[ \Delta ^{\left( L\right) }\left(
	-\lambda +\mu \right) \right.   \notag \\
	& \ \ \ +\left. \Delta ^{\left( L\right) }\left( \lambda -\mu \right) +\Delta
	^{\left( L\right) }\left( \lambda +\mu \right) +\Delta ^{\left( L\right)
	}\left( -\lambda -\mu \right) \right] f_{X_{a_{i}}X_{a_{j}}}\left( \lambda
	\right) o\left( L^{-2d}\right)   \notag \\
	& \ \ \ +o\left( L^{-1-2d}\right) +o\left( L^{-4d}\right) .  \tag{A.7}
	\label{Covariance between periodograms step1}
	\end{align}%
	Using the two properties in (\ref{Sum of deltas}) and (\ref{Product of
		deltas}), the covariance in (\ref{Covariance between periodograms step1}) is
	simplified further as follows,%
	\begin{eqnarray*}
		Cov\left( I_{X_{a_{i}}}^{\left( L_{i}\right) }\left( \lambda \right)
		,I_{X_{a_{j}}}^{\left( L_{j}\right) }\left( \mu \right) \right)  &=&\frac{%
			2\pi }{L}\left[ \eta \left( \lambda -\mu \right) +\eta \left( \lambda +\mu
		\right) \right] \left\{ f_{X_{a_{i}}X_{a_{j}}}\left( \lambda \right)
		\right\} ^{2}+\frac{2\pi }{l^{\dag }}%
		f_{X_{a_{i}}X_{a_{i}}X_{a_{j}}X_{a_{j}}}\left( \lambda ,-\lambda ,\mu
		\right)  \\
		&&+ \ 2\pi \left[ \eta \left( \lambda -\mu \right) +\eta \left( \lambda +\mu
		\right) \right] f_{X_{a_{i}}X_{a_{j}}}\left( \lambda \right) o\left( l^{\dag
			^{-2d}}\right) +o\left( L^{-1-2d}\right) .
	\end{eqnarray*}
	
	Now let us consider the asymptotic distribution of $I_{X_{a_{i}}}^{\left(
		L_{i}\right) }\left( \lambda \right) .$ We may re-write the periodogram as
	follows,%
	\begin{equation*}
	I_{X_{a_{i}}}^{\left( L_{i}\right) }\left( \lambda \right) =\left[ \textit{Re}%
	D_{X_{a_{i}}}^{\left( L_{i}\right) }\left( \lambda \right) \right] ^{2}+%
	\left[ \textit{Im}D_{X_{a_{i}}}^{\left( L_{i}\right) }\left( \lambda \right) %
	\right] ^{2},
	\end{equation*}%
	where%
	\begin{equation*}
	\textit{Re}D_{X_{a_{i}}}^{\left( L_{i}\right) }\left( \lambda \right) =\tfrac{1%
	}{\sqrt{2\pi L_{i}}}\sum_{t=1}^{L_{i}}y_{t}\cos \left( \lambda t\right) ,%
	\text{ and, }\textit{Im}D_{X_{a_{i}}}^{\left( L_{i}\right) }\left( \lambda
	\right) =\tfrac{1}{\sqrt{2\pi L_{i}}}\sum_{t=1}^{L_{i}}y_{t}\sin \left(
	\lambda t\right) .
	\end{equation*}%
	Following Theorem $2.1$ of\ \color{blue}\cite{Lahiri2003}\color{black}, we
	have that%
	\begin{equation*}
	\left[ 
	\begin{array}{c}
	\dfrac{\textit{Re}D_{X_{a_{i}}}^{\left( L_{i}\right) }\left( \lambda \right)
		-E\left( \textit{Re}D_{X_{a_{i}}}^{\left( L_{i}\right) }\left( \lambda \right)
		\right) }{\sqrt{L_{i}f_{X_{a_{i}}X_{a_{i}}}\left( \lambda \right) }} \\ 
	\dfrac{\textit{Im}D_{X_{a_{i}}}^{\left( L_{i}\right) }\left( \lambda \right)
		-E\left( \textit{Im}D_{X_{a_{i}}}^{\left( L_{i}\right) }\left( \lambda \right)
		\right) }{\sqrt{L_{i}f_{X_{a_{i}}X_{a_{i}}}\left( \lambda \right) }}%
	\end{array}%
	\right] \rightarrow ^{D}N\left( \mathbf{0,I}_{2}\right) .
	\end{equation*}%
	Hence the result.
\end{proof}

\begin{proof}[Proof of Theorem 4]
	Recall that $x_{j}=\ln (2\sin \left( \lambda _{j}/2\right) ),$ $a_{j}=x_{j}-%
	\overline{x}$ and $S_{xx}=\sum\limits_{j=1}^{N_{n}}\left( X_{j}-\overline{X}%
	\right) ^{2}.$ From\ \color{blue}\cite{HurvichDeoBrodshky1998}\color{black}\
	we have that $S_{xx}=N_{n}\left( 1+o\left( 1\right) \right) $ and $%
	a_{j}=\log j-\log N_{n}+1+o\left( 1\right) +o\left( \frac{N_{n}^{2}}{n^{2}}%
	\right) ,$ $j=1,\ldots ,N_{n}.$ Thus,%
	\begin{equation*}
	\sup_{j}\left\vert a_{j}\right\vert =1+o\left( 1\right) +O\left( \frac{%
		N_{n}^{2}}{n^{2}}\right) .
	\end{equation*}%
	Using Appendix B we have that%
	\begin{eqnarray*}
		Cov\left( \log I_{X_{a_{i}}}^{\left( L_{i}\right) }\left( \lambda
		_{j}\right) ,\log I_{X_{a_{j}}}^{\left( L_{j}\right) }\left( \mu _{k}\right)
		\right)  &=&\left( 1-\rho ^{2}\right) ^{\frac{1}{2}}\sum_{k=1}^{\infty
		}\left( \Psi \left( \frac{1}{2}+k\right) +\Psi \left( \frac{1}{2}\right)
		\right) ^{2}\frac{\Gamma \left( \frac{1}{2}+k\right) }{\Gamma \left( \frac{1%
			}{2}\right) }\frac{\left( \rho ^{2}\right) ^{k}}{k!} \\
		&&-\left( 1-\rho ^{2}\right) \left( \sum_{k=1}^{\infty }\left( \Psi \left( 
		\frac{1}{2}+k\right) +\Psi \left( \frac{1}{2}\right) \right) \frac{\Gamma
			\left( \frac{1}{2}+k\right) }{\Gamma \left( \frac{1}{2}\right) }\frac{\left(
			\rho ^{2}\right) ^{k}}{k!}\right) ^{2} \\
		&\leq &\left( 1-\rho ^{2}\right) ^{\frac{1}{2}}\sum_{k=1}^{\infty }\left(
		\Psi \left( \frac{1}{2}+k\right) +\Psi \left( \frac{1}{2}\right) \right) ^{2}%
		\frac{\Gamma \left( \frac{1}{2}+k\right) }{\Gamma \left( \frac{1}{2}\right) }%
		\frac{\left( \rho ^{2}\right) ^{k}}{k!},
	\end{eqnarray*}%
	where $\rho =Corr\left( I_{X_{a_{i}}}^{\left( L_{i}\right) }\left( \lambda
	_{j}\right) ,I_{X_{a_{j}}}^{\left( L_{j}\right) }\left( \mu _{k}\right)
	\right) =o\left( n^{-1}\right) $ by Theorem 3. Thus,%
	\begin{equation*}
	Cov%
	\Big(%
	\log I_{X_{a_{i}}}^{\left( L_{i}\right) }\left( \lambda _{j}\right) ,\log
	I_{X_{a_{j}}}^{\left( L_{j}\right) }\left( \mu _{k}\right) 
	\Big)%
	=o\left( n^{-1}\right) .
	\end{equation*}%
	This leads to%
	\begin{eqnarray*}
		Cov\left( \widehat{d}_{n},\widehat{d}_{i}\right)  &=&\frac{1}{4S_{xx}}\frac{1%
		}{S_{xx}^{^{\prime }}}\sum\limits_{j=1}^{N_{n}}\sum%
		\limits_{k=1}^{N_{l}}a_{j}a_{k}^{(i)}Cov\left( \log I_{X_{a_{i}}}^{\left(
			L_{i}\right) }\left( \lambda _{j}\right) ,\log I_{X_{a_{j}}}^{\left(
			L_{j}\right) }\left( \mu _{k}\right) \right)  \\
		&\leq &\sup_{j,k}\frac{1}{4S_{xx}}\frac{1}{S_{xx}^{^{\prime }}}%
		N_{n}N_{l}\left\vert a_{j}a_{k}^{(i)}Cov\left( \log I_{X_{a_{i}}}^{\left(
			L_{i}\right) }\left( \lambda _{j}\right) ,\log I_{X_{a_{j}}}^{\left(
			L_{j}\right) }\left( \mu _{k}\right) \right) \right\vert  \\
		&=&\frac{\left( 1+o\left( 1\right) \right) ^{-2}}{4}\sup_{j,k}\left\vert
		a_{j}\right\vert \left\vert a_{k}^{(i)}\right\vert \left\vert Cov\left( \log
		I_{X_{a_{i}}}^{\left( L_{i}\right) }\left( \lambda _{j}\right) ,\log
		I_{X_{a_{j}}}^{\left( L_{j}\right) }\left( \mu _{k}\right) \right)
		\right\vert  \\
		&=&\frac{\left( 1+o\left( 1\right) \right) ^{-2}}{4}\left( 1+o\left(
		1\right) +O\left( \frac{N_{n}^{2}}{n^{2}}\right) \right)
		^{2}\sup_{j,k}\left\vert Cov\left( \log I_{X_{a_{i}}}^{\left( L_{i}\right)
		}\left( \lambda _{j}\right) ,\log I_{X_{a_{j}}}^{\left( L_{j}\right) }\left(
		\mu _{k}\right) \right) \right\vert  \\
		&=&o\left( n^{-1}\right) .
	\end{eqnarray*}%
	Similarly, we can prove that $Cov\left( \widehat{d}_{i},\widehat{d}%
	_{j}\right) =o\left( n^{-1}\right) .$ Hence the result.
\end{proof}

\begin{proof}[Proof of Theorem 5]
	Consider,%
	\begin{equation}
	\left( \widehat{d}_{J,m}^{Opt}-d_{0}\right) =w_{n}^{\ast }\left( \widehat{d}%
	_{n}-d_{0}\right) -\sum_{i=1}^{m}w_{i}^{\ast }\left( \widehat{d}%
	_{i,m}-d_{0}\right) .  \tag{A.8}  \label{dhat-d}
	\end{equation}%
	Recall that $w_{n}^{\ast }=\left[ 1-\left( \frac{1}{m}\frac{N_{n}}{n}\frac{l%
	}{N_{l}}\right) ^{2}\right] ^{-1}$ and $\sum_{i=1}^{m}w_{i}^{\ast
	}=w_{n}^{\ast }-1;$ for $i=1,\ldots ,m$. Let us firstly consider $%
	w_{n}^{\ast }$. For fixed $m$ and for the choice of $N_{n}$ such that $%
	N_{n}\log N_{n}/n\rightarrow 0$,%
	\begin{equation}
	w_{n}^{\ast }=\frac{1}{1-\left( n^{-1}ln^{-1+\alpha }l^{1-\alpha }\right)
		^{2}}=1+o\left( 1\right) ,  \tag{A.9}  \label{limit of wn}
	\end{equation}%
	and hence%
	\begin{equation}
	\sum_{i=1}^{m}w_{i}^{\ast }=o\left( 1\right) ,  \tag{A.10}
	\label{limit of wi,m}
	\end{equation}%
	with $w_{i}^{\ast }\rightarrow 0$ as $n\rightarrow \infty $ (see the proof
	of Theorem 4).
	
	By virtue of the consistency of $\widehat{d}_{n},$ we have that the first
	term in (\ref{dhat-d}) such that $w_{n}^{\ast }\left( \widehat{d}%
	_{n}-d\right) =o_{p}\left( 1\right) $, using (\ref{limit of wn}).
	
	Now, we show that the second term in (\ref{dhat-d}) is $o_{p}\left( 1\right)
	.$%
	\begin{eqnarray*}
		\lim_{n\rightarrow \infty }\Pr \left[ \left\vert \sum_{i=1}^{m}w_{i}^{\ast
		}\left( \widehat{d}_{i}-d_{0}\right) \right\vert \geq \varepsilon \right] 
		&\leq &\lim_{n\rightarrow \infty }\frac{E\left( \sum_{i=1}^{m}w_{i}^{\ast
			}\left( \widehat{d}_{i}-d_{0}\right) \right) ^{2}}{\varepsilon ^{2}} \\
		&=&\lim_{n\rightarrow \infty }\frac{Var\left( \widehat{d}_{i}\right) }{%
			\varepsilon ^{2}}\sum_{i=1}^{m}\left( w_{i}^{\ast }\right) ^{2} \\
		&&+\frac{2}{\varepsilon ^{2}}\lim_{n\rightarrow \infty
		}\sum_{i=1}^{m}\sum_{j=i+1}^{m}w_{i}^{\ast }w_{j}^{\ast }Cov\left( \widehat{d%
		}_{i},\widehat{d}_{j}\right)  \\
		&=&0,
	\end{eqnarray*}%
	since $\lim_{n\rightarrow \infty }Var\left( \widehat{d}_{i}\right) =0$ from
	Theorem 1, $\lim_{n\rightarrow \infty }Cov\left( \widehat{d}_{i},\widehat{d}%
	_{j}\right) =0$ directly from Theorem 2 and the limit of $%
	\sum_{i=1}^{m}w_{i}^{\ast }$ given in (\ref{limit of wi,m}). This completes
	the proof of consistency.
	
	The proof of asymptotic normality of the optimal jackknife estimator depends
	on the joint convergence of $\widehat{d}_{n}$ and $\widehat{d}_{i,m}.$
	Firstly, let us consider the following standardized optimal jackknife
	estimator,%
	\begin{equation}
	\sqrt{N_{n}}\left( \widehat{d}_{J,m}^{Opt}-d_{0}\right) =w_{n}^{\ast }\sqrt{%
		N_{n}}\left( \widehat{d}_{n}-d_{0}\right) -\sum_{i=1}^{m}w_{i}^{\ast }\sqrt{%
		N_{n}}\left( \widehat{d}_{i}-d_{0}\right) .  \tag{A.11}
	\label{asymptotic normality}
	\end{equation}%
	Using Theorem 1 we have that $\sqrt{N_{n}}\left( \widehat{d}%
	_{n}-d_{0}\right) \rightarrow ^{D}N\left( 0,\frac{\pi ^{2}}{24}\right) .$
	Therefore, regarding the first component in (\ref{asymptotic normality}), it
	immediately follows that%
	\begin{equation*}
	w_{n}^{\ast }\sqrt{N_{n}}\left( \widehat{d}_{n}-d_{0}\right) \rightarrow
	^{d}N\left( 0,\frac{\pi ^{2}}{24}\right) ,\text{ using (\ref{limit of wn})}.
	\end{equation*}
	
	Now, let us consider the second term in (\ref{asymptotic normality}):%
	\begin{align}
	\lim_{n\rightarrow \infty }\Pr \left[ \left\vert \sum_{i=1}^{m}w_{i}^{\ast }%
	\sqrt{N_{n}}\left( \widehat{d}_{i}-d_{0}\right) \right\vert \geq \varepsilon %
	\right]  &\leq \lim_{n\rightarrow \infty }\frac{E\left(
		\sum_{i=1}^{m}w_{i}^{\ast }\left( \widehat{d}_{i}-d_{0}\right) \right) ^{2}}{%
		\varepsilon ^{2}}N_{n}  \notag \\
	&=\lim_{n\rightarrow \infty }\frac{Var\left( \widehat{d}_{i}\right) }{%
		\varepsilon ^{2}}N_{n}\sum_{i=1}^{m}\left( w_{i}^{\ast }\right) ^{2}  \notag
	\\
	& \ \ \ +\lim_{n\rightarrow \infty }\frac{2N_{n}}{\varepsilon ^{2}}%
	\sum_{i=1}^{m}\sum_{j=i+1}^{m}w_{i}^{\ast }w_{j}^{\ast }Cov\left( \widehat{d}%
	_{i},\widehat{d}_{j}\right) .  \tag{A.12}
	\label{asymptotic normality 2nd term}
	\end{align}%
	By considering the first term in (\ref{asymptotic normality 2nd term}), for
	fixed $m$ we have that%
	\begin{equation*}
	\lim_{n\rightarrow \infty }\frac{Var\left( \widehat{d}_{i}\right) }{%
		\varepsilon ^{2}}N_{n}\sum_{i=1}^{m}\left( w_{i}^{\ast }\right)
	^{2}=\lim_{n\rightarrow \infty }\frac{\sum_{i=1}^{m}\left( w_{i}^{\ast
		}\right) ^{2}}{\varepsilon ^{2}}\left[ \frac{\pi ^{2}}{24}+o\left( 1\right) %
	\right] =0,
	\end{equation*}%
	using Theorem 1 and (\ref{limit of wn}). The second term in (\ref{asymptotic
		normality 2nd term}) would give us that,%
	\begin{equation*}
	\lim_{n\rightarrow \infty }\frac{2N_{n}}{\varepsilon ^{2}}%
	\sum_{i=1}^{m}\sum_{j=i+1}^{m}w_{i}^{\ast }w_{j}^{\ast }Cov\left( \widehat{d}%
	_{i},\widehat{d}_{j}\right) =0,
	\end{equation*}%
	immediately from (\ref{limit of wn}). Therefore, $\Pr \left[ \left\vert
	\sum_{i=1}^{m}w_{i}^{\ast }\sqrt{N_{l}}\left( \widehat{d}_{i}-d_{0}\right)
	\right\vert \geq \varepsilon \right] \rightarrow 0$ as $n\rightarrow \infty .
	$ Hence the proof completes.
\end{proof}

\section*{Appendix B: Additional technical results\label{Appendix:B}}

Recall that the covariance between the full-sample LPR estimator and each
sub-sample LPR estimator, $Cov\left( \widehat{d}_{n},\widehat{d}_{i}\right) ,
$ and the covariances\textbf{\ }between the different sub-sample LPR
estimators, $Cov\left( \widehat{d}_{i},\widehat{d}_{j}\right) ,$ for $i\neq
j,$ $i,j=1,2,...,m,$ are given respectively by,%
\begin{align}
Cov%
\big(%
\widehat{d}_{n},\widehat{d}_{i}%
\big)
&=\frac{1}{4S_{xx}}\frac{1}{S_{xx}^{^{\prime }}}\sum\limits_{j=1}^{N_{n}}%
\sum\limits_{k=1}^{N_{l}}a_{j}a_{k}^{(i)}Cov%
\big(%
\log I_{Y}^{\left( n\right) }\left( \lambda _{j}\right) ,\log
I_{Y_{i}}^{\left( l\right) }\left( \mu _{k}\right) 
\big)
\tag{B.1}  \label{Covariance(dhat,dhat_i)} \\
Cov%
\big(%
\widehat{d}_{i},\widehat{d}_{i^{\prime }}%
\big)
&=\frac{1}{4}\frac{1}{\left( S_{xx}^{^{\prime }}\right) ^{2}}%
\sum\limits_{j=1}^{N_{l}}\sum\limits_{k=1}^{N_{l}}a_{j}^{\prime
}a_{k}^{\prime }Cov%
\big(%
\log I_{Y_{i}}^{\left( l\right) }\left( \mu _{j}\right) ,\log
I_{Y_{i^{\prime }}}^{\left( l\right) }\left( \mu _{k}\right) 
\big)%
,  \tag{B.2}  \label{Covariance(dhat_i,dhat_j)}
\end{align}%
with all notation as defined in Table 1.

\subsection*{Evaluation of the covariance terms in (\protect\ref{Covariance(dhat,dhat_i)}) and (\protect\ref{Covariance(dhat_i,dhat_j)})}

The main purpose of this exercise is to calculate the covariances between
the full-sample and sub-sample LPR estimators (refer to (\ref%
{Covariance(dhat,dhat_i)})) and the covariance between two distinct
sub-sample LPR estimators (refer to (\ref{Covariance(dhat_i,dhat_j)})).
These covariance terms depend on the covariance between the log-periodograms
associated with either the full sample and a given sub-sample or two
different sub-samples.

To obtain the covariance between the log-periodograms associated with the
full sample and a given sub-sample, or between sub-samples, we follow the
method stated below.
\begin{enumerate}
	\item[Step 1:] Write down the joint distribution of the periodograms $%
	(I_{X_{a_{i}}}^{\left( L_{i}\right) }\left( \lambda \right)
	,I_{X_{a_{j}}}^{\left( L_{j}\right) }\left( \mu \right) )$.
	\item[Step 2:] Write down the joint distribution of the log transformed
	periodograms $(\log I_{X_{a_{i}}}^{\left( L_{i}\right) }\left( \lambda
	\right) ,$ $\log I_{X_{a_{j}}}^{\left( L_{j}\right) }\left( \mu \right) )$
	using the expression of the covariance between the two different
	periodograms.	
	\item[Step 3:] Find the expression for the covariance between the above
	mentioned log-periodograms, $Cov(\log I_{X_{a_{i}}}^{\left( L_{i}\right)
	}\left( \lambda \right) ,$ $\log I_{X_{a_{j}}}^{\left( L_{j}\right) }\left(
	\mu \right) )$, using the moment generating function.
\end{enumerate}

\textbf{In relation to Step 1: }Using the results of Theorem 3, we can say
that the periodograms associated with the full sample and the sub-sample
have a limiting distribution of the form $f_{X_{1}X_{1}}(\lambda )\left.
\chi _{\left( 2\right) }^{2}\right/ 2$. For notational convenience, let us
denote by $\left( U,V\right) $ the bivariate $\chi _{k}^{2}$ random
variables, $(I_{X_{a_{i}}}^{\left( L_{i}\right) }\left( \lambda \right)
,I_{X_{a_{j}}}^{\left( L_{j}\right) }\left( \mu \right) )$. Although $k=2$,
we use the generic notation for the degrees of freedom, $k$. Note that we
ignore the constant term $\left. f_{X_{1}X_{1}}(\lambda )\right/ 2$ for
convenience, as these terms will disappear in the calculation of the
covariance between two different LPR estimators (either the full- and
sub-sample LPR estimators or two distinct sub-sample LPR estimators).

The joint probability density function (pdf), $f_{U,V}\left( u,v\right) ,$
is defined by %
\citep[see,\color{blue}][]{Krishnaiah1963}
\begin{equation*}
f_{U,V}\left( u,v\right) =\left( 1-\rho ^{2}\right) ^{\frac{k-1}{2}%
}\sum_{i=0}^{\infty }\frac{\Gamma \left( \frac{k-1}{2}+i\right) \rho
	^{2i}\left( uv\right) ^{\frac{k-3+2i}{2}}\exp \left[ -\frac{u+v}{2\left(
		1-\rho ^{2}\right) }\right] }{\Gamma \left( \frac{k-1}{2}\right) i!\left[ 2^{%
		\frac{k-1}{2}+i}\Gamma \left( \frac{k-1}{2}+i\right) \left( 1-\rho
	^{2}\right) ^{\frac{k-1}{2}+i}\right] ^{2}},
\end{equation*}%
where $\rho =\dfrac{\sigma _{uv}}{\sigma _{u}\sigma _{v}}.$ Here, $\sigma
_{uv}=cov\left( U,V\right) .$ Then, the marginal densities of $U$ and $V,$ $%
f_{U}\left( u\right) $ and $f_{V}\left( v\right) $, are respectively given
by,%
\begin{equation*}
f_{U}\left( u\right) =\frac{1}{2^{\frac{k}{2}}\Gamma \left( \frac{k}{2}%
	\right) }u^{\frac{k}{2}}\exp \left\{ -\frac{u}{2}\right\} ,\text{ and, }%
f_{V}\left( v\right) =\frac{1}{2^{\frac{k}{2}}\Gamma \left( \frac{k}{2}%
	\right) }v^{\frac{k}{2}}\exp \left\{ -\frac{v}{2}\right\} .
\end{equation*}

\textbf{In relation to Step 2: }Let $W=\log U=$ $\log I_{X_{a_{i}}}^{\left(
	L_{i}\right) }\left( \lambda \right) $ and $Z=\log V=\log
I_{X_{a_{j}}}^{\left( L_{j}\right) }\left( \mu \right) .$ Then, the joint
pdf of $W$ and $Z$ is given by,%
\begin{eqnarray*}
	f_{W,Z}\left( w,z\right) &=&f_{U,V}\left( \exp w,\exp z\right) \left\vert 
	\begin{array}{cc}
		\frac{\partial \exp w}{\partial w} & \frac{\partial \exp w}{\partial z} \\ 
		\frac{\partial \exp z}{\partial w} & \frac{\partial \exp z}{\partial z}%
	\end{array}%
	\right\vert \\
	&=&\left( 1-\rho ^{2}\right) ^{\frac{k-1}{2}}\sum_{i=0}^{\infty }\frac{%
		\Gamma \left( \frac{k-1}{2}+i\right) \rho ^{2i}\left( \exp w\exp z\right) ^{%
			\frac{k-3+2i}{2}}\exp \left[ -\frac{\exp w+\exp z}{2\left( 1-\rho
			^{2}\right) }\right] }{\Gamma \left( \frac{k-1}{2}\right) i!\left[ 2^{\frac{%
				k-1}{2}+i}\Gamma \left( \frac{k-1}{2}+i\right) \left( 1-\rho ^{2}\right) ^{%
			\frac{k-1}{2}+i}\right] ^{2}}\exp w\exp z \\
	&=&\left( 1-\rho ^{2}\right) ^{\frac{k-1}{2}}\sum_{i=0}^{\infty }\frac{%
		\Gamma \left( \frac{k-1}{2}+i\right) \rho ^{2i}\exp \left( \frac{k-1}{2}%
		+i\right) \left( w+z\right) \exp \left[ -\frac{\exp w+\exp z}{2\left( 1-\rho
			^{2}\right) }\right] }{\Gamma \left( \frac{k-1}{2}\right) i!\left[ 2^{\frac{%
				k-1}{2}+i}\Gamma \left( \frac{k-1}{2}+i\right) \left( 1-\rho ^{2}\right) ^{%
			\frac{k-1}{2}+i}\right] ^{2}}.
\end{eqnarray*}

\textbf{In relation to Step 3: }The moment generating function (MGF) of $%
(W,Z)$ is given by,%
\begin{align}
M_{W,Z}\left( t_{1},t_{2}\right)  &=E\left( \exp \left(
t_{1}W+t_{2}Z\right) \right) =\int_{0}^{\infty }\int_{0}^{\infty }\exp
\left( t_{1}w+t_{2}z\right) f_{W,Z}\left( w,z\right) dwdz  \notag \\
&=\left( 1-\rho ^{2}\right) ^{\frac{k-1}{2}}\sum_{i=0}^{\infty }\frac{%
	\Gamma \left( \frac{k-1}{2}+i\right) \rho ^{2i}}{\Gamma \left( \frac{k-1}{2}%
	\right) i!\left[ 2^{\frac{k-1}{2}+i}\Gamma \left( \frac{k-1}{2}+i\right)
	\left( 1-\rho ^{2}\right) ^{\frac{k-1}{2}+i}\right] ^{2}}  \notag \\
& \ \ \ \times \int_{0}^{\infty }\int_{0}^{\infty }\exp \left(
t_{1}w+t_{2}z\right) \exp \left( \tfrac{k-1}{2}+i\right) \left( w+z\right)
\exp \left[ -\tfrac{\exp w+\exp z}{2\left( 1-\rho ^{2}\right) }\right] dwdz 
\notag \\
&=\left( 1-\rho ^{2}\right) ^{\frac{k-1}{2}}\sum_{i=0}^{\infty }\frac{%
	\Gamma \left( \frac{k-1}{2}+i\right) \rho ^{2i}}{\Gamma \left( \frac{k-1}{2}%
	\right) i!\left[ 2^{\frac{k-1}{2}+i}\Gamma \left( \frac{k-1}{2}+i\right)
	\left( 1-\rho ^{2}\right) ^{\frac{k-1}{2}+i}\right] ^{2}}  \notag \\
& \ \ \ \times \int_{0}^{\infty }\exp \left( \tfrac{k-1}{2}+t_{1}+i\right) w\exp %
\left[ -\tfrac{\exp w}{2\left( 1-\rho ^{2}\right) }\right] dw  \notag \\
& \ \ \ \times \int_{0}^{\infty }\exp \left( \tfrac{k-1}{2}+t_{2}+i\right) z\exp %
\left[ -\tfrac{\exp z}{2\left( 1-\rho ^{2}\right) }\right] dz.
\tag{B.3} \label{MGF step}
\end{align}

Now let us consider the form of the last expression in (\ref{MGF step}). Let 
$\alpha _{1}=\tfrac{k-1}{2}+t_{2}+i$ and $\alpha _{2}=\frac{1}{2\left(
	1-\rho ^{2}\right) }.$ Then, substituting $x=\exp z$ would give us that 
\begin{equation}
\int_{0}^{\infty }\exp \alpha _{1}z\exp \left[ -\alpha _{2}\exp z\right]
dz=\int_{0}^{\infty }x^{\alpha _{1}-1}\exp \left[ -\alpha _{2}x\right] dx=%
\frac{\Gamma \left( \alpha _{1}\right) }{\alpha _{2}^{\alpha _{1}}}. 
\tag{B.4}  \label{Integral of exp of exp function}
\end{equation}

Therefore, using (\ref{Integral of exp of exp function}), the MGF given in (%
\ref{MGF step}) may be re-arranged as follows,%
\begin{eqnarray*}
	M_{W,Z}\left( t_{1},t_{2}\right) &=&\left[ 2\left( 1-\rho ^{2}\right) \right]
	^{t_{1}+t_{2}}\left( 1-\rho ^{2}\right) ^{\frac{k-1}{2}}\sum_{i=0}^{\infty }%
	\frac{\Gamma \left( \frac{k-1}{2}+i\right) \rho ^{2i}\Gamma \left( \frac{k-1%
		}{2}+t_{2}+i\right) \Gamma \left( \frac{k-1}{2}+t_{1}+i\right) }{i!\Gamma
		\left( \frac{k-1}{2}\right) \left[ \Gamma \left( \frac{k-1}{2}+i\right) %
		\right] ^{2}} \\
	&=&\left[ 2\left( 1-\rho ^{2}\right) \right] ^{t_{1}+t_{2}}\left( 1-\rho
	^{2}\right) ^{\frac{k-1}{2}}\frac{\Gamma \left( \frac{k-1}{2}+t_{1}\right)
		\Gamma \left( \frac{k-1}{2}+t_{2}\right) }{\left[ \Gamma \left( \frac{k-1}{2}%
		\right) \right] ^{2}} \\
	&&\times \sum_{i=0}^{\infty }\frac{\Gamma \left( \frac{k-1}{2}%
		+t_{1}+i\right) \Gamma \left( \frac{k-1}{2}+t_{2}+i\right) \Gamma \left( 
		\frac{k-1}{2}\right) }{\Gamma \left( \frac{k-1}{2}+t_{1}\right) \Gamma
		\left( \frac{k-1}{2}+t_{2}\right) \Gamma \left( \frac{k-1}{2}+i\right) }%
	\frac{\left( \rho ^{2}\right) ^{i}}{i!} \\
	&=&\left[ 2\left( 1-\rho ^{2}\right) \right] ^{t_{1}+t_{2}}\left( 1-\rho
	^{2}\right) ^{\frac{k-1}{2}}\frac{\Gamma \left( \frac{k-1}{2}+t_{1}\right)
		\Gamma \left( \frac{k-1}{2}+t_{2}\right) }{\left[ \Gamma \left( \frac{k-1}{2}%
		\right) \right] ^{2}} \\
	&&\times _{2}F_{1}\left( \tfrac{k-1}{2}+t_{1},\tfrac{k-1}{2}+t_{2};\tfrac{k-1%
	}{2};\rho ^{2}\right) .
\end{eqnarray*}

Setting $k=2$ gives,%
\begin{equation*}
M_{W,Z}\left( t_{1},t_{2}\right) =\left[ 2\left( 1-\rho ^{2}\right) \right]
^{t_{1}+t_{2}}\left( 1-\rho ^{2}\right) ^{\frac{1}{2}}\frac{\Gamma \left( 
	\frac{1}{2}+t_{1}\right) \Gamma \left( \frac{1}{2}+t_{2}\right) }{\left[
	\Gamma \left( \frac{1}{2}\right) \right] ^{2}}\text{ }_{2}F_{1}\left( \tfrac{%
	1}{2}+t_{1},\tfrac{1}{2}+t_{2};\tfrac{1}{2};\rho ^{2}\right) .
\end{equation*}%
Therefore the cumulant generating function is given by $K\left(
t_{1},t_{2}\right) =\log M_{W,Z}\left( t_{1},t_{2}\right) $ and 
\begin{eqnarray*}
	K\left( t_{1},t_{2}\right) &=&\left( t_{1}+t_{2}\right) \log \left[ 2\left(
	1-\rho ^{2}\right) \right] +\tfrac{1}{2}\log \left( 1-\rho ^{2}\right) +\log
	\Gamma \left( \tfrac{1}{2}+t_{1}\right) \\
	&&+\log \Gamma \left( \tfrac{1}{2}+t_{2}\right) -2\log \left[ \Gamma \left( 
	\tfrac{1}{2}\right) \right] +\log _{2}F_{1}\left( \tfrac{1}{2}+t_{1},\tfrac{1%
	}{2}+t_{2};\tfrac{1}{2};\rho ^{2}\right) .
\end{eqnarray*}

The covariance between $W$ and $Z$ when $k=2,$ is given by, $cov\left(
W,Z\right) =\left. \frac{\partial ^{2}K\left( t_{1},t_{2}\right) }{\partial
	t_{1}\partial t_{2}}\right\vert _{t_{1}=0,t_{2}=0}.$ Therefore, let us
firstly evaluate $\left. \partial K\left( t_{1},t_{2}\right) \right/
\partial t_{1}$, as%
\begin{align}
\dfrac{\partial K\left( t_{1},t_{2}\right) }{\partial t_{1}} &=\log \left[
2\left( 1-\rho ^{2}\right) \right] +\Psi \left( \tfrac{1}{2}+t_{1}\right)  
\notag \\
& \ \ \ +\left( _{2}F_{1}\left( \tfrac{1}{2}+t_{1},\tfrac{1}{2}+t_{2};\tfrac{1}{2}%
;\rho ^{2}\right) \right) ^{-1}\frac{\partial _{2}F_{1}\left( \frac{1}{2}%
	+t_{1},\frac{1}{2}+t_{2};\frac{1}{2};\rho ^{2}\right) }{\partial t_{1}}, 
\tag{B.5}  \label{derivative of K wrt t1}
\end{align}%
where $\Psi \left( .\right) $ is the digamma function and $\left. \partial
_{2}F_{1}\left( \frac{1}{2}+t_{1},\frac{1}{2}+t_{2};\frac{1}{2};\rho
^{2}\right) \right/ \partial t_{1}$ is given by,%
\begin{align}
&\sum_{i=1}^{\infty }\frac{\partial \left. \Gamma \left( \frac{1}{2}%
	+t_{1}+i\right) \right/ \Gamma \left( \frac{1}{2}+t_{1}\right) }{\partial
	t_{1}}\frac{\Gamma \left( \frac{1}{2}+t_{2}+i\right) \Gamma \left( \frac{1}{2%
	}\right) }{\Gamma \left( \frac{1}{2}+t_{2}\right) \Gamma \left( \frac{1}{2}%
	+i\right) }\frac{\left( \rho ^{2}\right) ^{i}}{i!}  \notag \\
&=\sum_{i=1}^{\infty }\left( \frac{\Gamma \left( \frac{1}{2}+t_{1}\right)
	\Gamma \left( \frac{1}{2}+t_{1}+i\right) \Psi \left( \frac{1}{2}%
	+t_{1}+i\right) }{\left( \Gamma \left( \frac{1}{2}+t_{1}\right) \right) ^{2}}%
+\frac{\Gamma \left( \frac{1}{2}+t_{1}+i\right) \Psi \left( \frac{1}{2}%
	+t_{1}\right) \Gamma \left( \frac{1}{2}+t_{1}\right) }{\left( \Gamma \left( 
	\frac{1}{2}+t_{1}\right) \right) ^{2}}\right)   \notag \\
& \ \ \ \times \ \frac{\Gamma \left( \frac{1}{2}+t_{2}+i\right) \Gamma \left( \frac{1%
	}{2}\right) }{\Gamma \left( \frac{1}{2}+t_{2}\right) \Gamma \left( \frac{1}{2%
	}+i\right) }\frac{\left( \rho ^{2}\right) ^{i}}{i!}  \notag \\
&=\sum_{i=1}^{\infty }\left( \frac{\Gamma \left( \frac{1}{2}+t_{1}+i\right)
	\Psi \left( \frac{1}{2}+t_{1}+i\right) +\Gamma \left( \frac{1}{2}%
	+t_{1}+i\right) \Psi \left( \frac{1}{2}+t_{1}\right) }{\Gamma \left( \frac{1%
	}{2}+t_{1}\right) }\right) \frac{\Gamma \left( \frac{1}{2}+t_{2}+i\right)
	\Gamma \left( \frac{1}{2}\right) }{\Gamma \left( \frac{1}{2}+t_{2}\right)
	\Gamma \left( \frac{1}{2}+i\right) }\frac{\left( \rho ^{2}\right) ^{i}}{i!}.
\notag \\
&  \tag{B.6}  \label{deri_t1}
\end{align}%
This leads to,%
\begin{equation*}
\left. \dfrac{\partial _{2}F_{1}\left( \frac{1}{2}+t_{1},\frac{1}{2}+t_{2};%
	\frac{1}{2};\rho ^{2}\right) }{\partial t_{1}}\right\vert
_{t_{1}=0,t_{2}=0}=\sum_{i=1}^{\infty }\left( \Psi \left( \tfrac{1}{2}%
+i\right) +\Psi \left( \tfrac{1}{2}\right) \right) \frac{\Gamma \left( \frac{%
		1}{2}+i\right) }{\Gamma \left( \frac{1}{2}\right) }\frac{\left( \rho
	^{2}\right) ^{i}}{i!}.
\end{equation*}%
The first derivative of $_{2}F_{1}\left( \frac{1}{2}+t_{1},\frac{1}{2}+t_{2};%
\frac{1}{2};\rho ^{2}\right) $ with respect to $t_{2}$ is also given by (\ref%
{deri_t1}).

Now let us evaluate the second order derivative of $K\left(
t_{1},t_{2}\right) ,$%
\begin{eqnarray*}
	\dfrac{\partial ^{2}K\left( t_{1},t_{2}\right) }{\partial t_{1}\partial t_{2}%
	} &=&\dfrac{\partial \left( _{2}F_{1}\left( \frac{1}{2}+t_{1},\frac{1}{2}%
		+t_{2};\frac{1}{2};\rho ^{2}\right) \right) ^{-1}\frac{\partial
			_{2}F_{1}\left( \frac{1}{2}+t_{1},\frac{1}{2}+t_{2};\frac{1}{2};\rho
			^{2}\right) }{\partial t_{1}}}{\partial t_{2}} \\
	&=&\left( _{2}F_{1}\left( \tfrac{1}{2}+t_{1},\tfrac{1}{2}+t_{2};\tfrac{1}{2}%
	;\rho ^{2}\right) \right) ^{-1}\dfrac{\partial ^{2}\text{ }_{2}F_{1}\left( 
		\frac{1}{2}+t_{1},\frac{1}{2}+t_{2};\frac{1}{2};\rho ^{2}\right) }{\partial
		t_{1}\partial t_{2}} \\
	&&-\left( _{2}F_{1}\left( \tfrac{1}{2}+t_{1},\tfrac{1}{2}+t_{2};\tfrac{1}{2}%
	;\rho ^{2}\right) \right) ^{-2}\frac{\partial _{2}F_{1}\left( \frac{1}{2}%
		+t_{1},\frac{1}{2}+t_{2};\frac{1}{2};\rho ^{2}\right) }{\partial t_{2}} \\
	&&\times \frac{\partial _{2}F_{1}\left( \frac{1}{2}+t_{1},\frac{1}{2}+t_{2};%
		\frac{1}{2};\rho ^{2}\right) }{\partial t_{1}},
\end{eqnarray*}%
where $\left. \partial ^{2}\text{ }_{2}F_{1}\left( \frac{1}{2}+t_{1},\frac{1%
}{2}+t_{2};\frac{1}{2};\rho ^{2}\right) \right/ \partial t_{1}\partial t_{2}$
is given by,%
\begin{eqnarray*}
	&&\sum_{i=1}^{\infty }\left( \frac{\Gamma \left( \frac{1}{2}+t_{1}+i\right)
		\Psi \left( \frac{1}{2}+t_{1}+i\right) }{\Gamma \left( \frac{1}{2}%
		+t_{1}\right) }+\frac{\Gamma \left( \frac{1}{2}+t_{1}+i\right) \Psi \left( 
		\frac{1}{2}+t_{1}\right) }{\Gamma \left( \frac{1}{2}+t_{1}\right) }\right) 
	\frac{\Gamma \left( \frac{1}{2}\right) }{\Gamma \left( \frac{1}{2}+i\right) }%
	\frac{\left( \rho ^{2}\right) ^{i}}{i!} \\
	&&\times \left( \frac{\Gamma \left( \frac{1}{2}+t_{2}+i\right) \Psi \left( 
		\frac{1}{2}+t_{2}+i\right) }{\Gamma \left( \frac{1}{2}+t_{2}\right) }+\frac{%
		\Gamma \left( \frac{1}{2}+t_{2}+i\right) \Psi \left( \frac{1}{2}%
		+t_{2}\right) }{\Gamma \left( \frac{1}{2}+t_{2}\right) }\right) ,
\end{eqnarray*}%
with%
\begin{equation*}
\left. \dfrac{\partial ^{2}\text{ }_{2}F_{1}\left( \frac{1}{2}+t_{1},\frac{1%
	}{2}+t_{2};\frac{1}{2};\rho ^{2}\right) }{\partial t_{1}\partial t_{2}}%
\right\vert _{t_{1}=0,t_{2}=0}=\sum_{i=1}^{\infty }\left( \Psi \left( \tfrac{%
	1}{2}+i\right) +\Psi \left( \tfrac{1}{2}\right) \right) ^{2}\frac{\Gamma
	\left( \frac{1}{2}+i\right) }{\Gamma \left( \frac{1}{2}\right) }\frac{\left(
	\rho ^{2}\right) ^{i}}{i!}.
\end{equation*}

Hence $cov\left( W,Z\right) $ is given by,%
\begin{align}
&\left( 1-\rho ^{2}\right) ^{\frac{1}{2}}\left. \dfrac{\partial ^{2}\text{ }%
	_{2}F_{1}\left( \frac{1}{2}+t_{1},\frac{1}{2}+t_{2};\frac{1}{2};\rho
	^{2}\right) }{\partial t_{1}\partial t_{2}}\right\vert _{t_{1}=0,t_{2}=0} 
\notag \\
& \ \ \ -\left( 1-\rho ^{2}\right) \left. \frac{\partial _{2}F_{1}\left( \frac{1}{2%
	}+t_{1},\frac{1}{2}+t_{2};\frac{1}{2};\rho ^{2}\right) }{\partial t_{1}}%
\frac{\partial _{2}F_{1}\left( \frac{1}{2}+t_{1},\frac{1}{2}+t_{2};\frac{1}{2%
	};\rho ^{2}\right) }{\partial t_{2}}\right\vert _{t_{1}=0,t_{2}=0}  \notag \\
&=\left( 1-\rho ^{2}\right) ^{\frac{1}{2}}\sum_{i=1}^{\infty }\left( \Psi
\left( \tfrac{1}{2}+i\right) +\Psi \left( \tfrac{1}{2}\right) \right) ^{2}%
\frac{\Gamma \left( \frac{1}{2}+i\right) }{\Gamma \left( \frac{1}{2}\right) }%
\frac{\left( \rho ^{2}\right) ^{i}}{i!}  \notag \\
& \ \ \ -\left( 1-\rho ^{2}\right) \left( \sum_{i=1}^{\infty }\left( \Psi \left( 
\tfrac{1}{2}+i\right) +\Psi \left( \tfrac{1}{2}\right) \right) \frac{\Gamma
	\left( \frac{1}{2}+i\right) }{\Gamma \left( \frac{1}{2}\right) }\frac{\left(
	\rho ^{2}\right) ^{i}}{i!}\right) ^{2},  \tag{B.7}
\label{Covariance(logX,logY)}
\end{align}%
using the fact $_{1}F_{0}\left( a;;z\right) =\left( 1-z\right) ^{-a}.$

Let us now provide the expression for $\rho $ in (\ref{Covariance(logX,logY)}). For example, consider calculating the correlation between the full- and
sub-sample periodograms. Using the similar arguments, the correlation
between two sub-samples periodograms, $\rho =corr\left( I_{Y}^{\left(
	n\right) }\left( \lambda \right) ,I_{Y_{i}}^{\left( l\right) }\left( \mu
\right) \right) $ can be derived using%
\begin{align}
Cov\left( I_{Y}^{\left( n\right) }\left( \lambda \right) ,I_{Y_{i}}^{\left(
	l\right) }\left( \mu \right) \right)  &\approx \frac{2\pi }{l}%
f_{YYY_{i}Y_{i}}\left( \lambda ,-\lambda ,\mu \right) +l^{-2}\left[ \Delta
^{\left( l\right) }\left( -\lambda +\mu \right) \Delta ^{\left( l\right)
}\left( \lambda -\mu \right) \right.   \notag \\
& \ \ \ +\left. \Delta ^{\left( l\right) }\left( \lambda +\mu \right) \Delta
^{\left( l\right) }\left( -\lambda -\mu \right) \right] \left\vert
f_{YY_{i}}\left( \lambda \right) \right\vert ^{2},  \tag{B.8}  \label{Cov}
\end{align}%
and $Var\left( I_{Y}^{\left( n\right) }\left( \lambda \right) \right) $ and $%
Var\left( I_{Y_{i}}^{\left( l\right) }\left( \mu \right) \right) $ can be
calculated from the above given covariance formula. The covariance and
variance terms rely upon certain joint spectral densities. Those spectral
densities can be expressed in closed form as follows. Let us firstly
consider the cross spectrum corresponding to the full sample and $j^{th}$
sub-sample, $f_{YY_{j}}\left( \lambda \right) .$~Suppose we consider the
jackknife approach using non-overlapping sub-samples. Then, the general
definition of spectral density gives that%
\begin{eqnarray*}
	f_{YY_{j}}\left( \lambda \right)  &=&\frac{1}{2\pi }\sum\limits_{k=-\infty
	}^{\infty }\exp \left( -ik\lambda \right) \kappa \left( Y_{t+k},Y_{t+\left(
		j-1\right) l}\right)  \\
	&=&\frac{1}{2\pi }\sum\limits_{k=-\infty }^{\infty }\exp \left( -ik\lambda
	\right) \gamma \left( k-\left( j-1\right) l\right)  \\
	&=&\frac{\exp \left( -i\left( j-1\right) l\lambda \right) }{2\pi }%
	\sum\limits_{k=-\infty }^{\infty }\exp \left( -i\left( k-\left( j-1\right)
	l\right) \lambda \right) \gamma \left( k-\left( j-1\right) l\right)  \\
	&=&\exp \left( -i\left( j-1\right) l\lambda \right) f_{YY}\left( \lambda
	\right) .
\end{eqnarray*}%
Similarly, for moving-block sub-samples we have the relationship $%
f_{YY_{j}}\left( \lambda \right) =\exp \left( -i\left( j+l-1\right) \lambda
\right) f_{YY}\left( \lambda \right) $ and $f_{Y_{j}Y_{k}}\left( \lambda
\right) =\exp \left( -i\left( j-k\right) l\lambda \right) f_{YY}\left(
\lambda \right) .$

Lemma 2 of \color{blue}\cite{Yajima1989}\color{black}\ immediately gives
that,%
\begin{equation*}
f_{YYYY}\left( \lambda ,-\lambda ,\mu \right) =\frac{1}{\left( 2\pi \right)
	^{3}}b\left( \lambda \right) b\left( -\lambda \right) b\left( \mu \right)
b\left( -\mu \right) f_{\varepsilon \varepsilon \varepsilon \varepsilon
}\left( \lambda ,-\lambda ,\mu \right) ,
\end{equation*}%
where $b\left( \lambda \right) =\sum_{j=0}^{\infty }b_{j}\exp \left( \imath
j\omega \right) $ with $b_{j}=\sum_{r=0}^{j}\dfrac{k\left( j-r\right) \Gamma
	\left( r+d\right) }{\Gamma \left( r+1\right) \Gamma \left( d\right) }$, and $%
k\left( z\right) $\ is the transfer function of a stable and invertible
autoregressive moving average (ARMA) process such that $\sum_{j=0}^{\infty
}\left\vert k\left( j\right) \right\vert <\infty .$ Here, 
\begin{equation*}
f_{\varepsilon \varepsilon \varepsilon \varepsilon }\left( \lambda ,-\lambda
,\mu \right) =\sum_{u_{1}=-\infty }^{\infty }\sum_{u_{2}=-\infty }^{\infty
}\sum_{u_{3}=-\infty }^{\infty }\exp \left( -i\left( \lambda u_{1}-\lambda
u_{2}+\mu u_{3}\right) \right) \kappa _{\varepsilon \varepsilon \varepsilon
	\varepsilon }\left( u_{1},u_{2},u_{3}\right) ,
\end{equation*}%
where%
\begin{eqnarray*}
	\kappa _{\varepsilon \varepsilon \varepsilon \varepsilon }\left(
	u_{1},u_{2},u_{3}\right)  &=&\kappa \left( \varepsilon
	_{t+u_{1}},\varepsilon _{t+u_{2}},\varepsilon _{t+u_{3}},\varepsilon
	_{t}\right)  \\
	&=&E\left( \varepsilon _{t+u_{1}}\varepsilon _{t+u_{2}}\varepsilon
	_{t+u_{3}}\varepsilon _{t}\right) -E\left( \varepsilon _{t+u_{1}}\varepsilon
	_{t+u_{2}}\right) E\left( \varepsilon _{t+u_{3}}\varepsilon _{t}\right)  \\
	&&- \ E\left( \varepsilon _{t+u_{2}}\varepsilon _{t+u_{3}}\right) E\left(
	\varepsilon _{t+u_{1}}\varepsilon _{t}\right) -E\left( \varepsilon
	_{t+u_{1}}\varepsilon _{t+u_{3}}\right) E\left( \varepsilon
	_{t+u_{2}}\varepsilon _{t}\right) .
\end{eqnarray*}%
Suppose the errors are $i.i.d$ normal random variables with zero mean and a
constant variance $\sigma ^{2},$%
\begin{eqnarray*}
	\kappa _{\varepsilon \varepsilon \varepsilon \varepsilon }\left(
	u_{1},u_{2},u_{3}\right)  &=&\left\{ 
	\begin{array}{cl}
		E\left( \varepsilon _{t}^{4}\right) -3\left( E\left( \varepsilon
		_{t}^{2}\right) \right) ^{2}, & if\text{ }u_{1}=u_{2}=u_{3}=0 \\ 
		0, & otherwise%
	\end{array}%
	\right.  \\
	&=&\left\{ 
	\begin{array}{cl}
		3\sigma ^{4}, & if\text{ }u_{1}=u_{2}=u_{3}=0 \\ 
		0, & otherwise%
	\end{array}%
	\right. .
\end{eqnarray*}%
Then $f_{YYYY}\left( \lambda ,-\lambda ,\mu \right) $ is simplified as
follows using the fact that $f_{YY}\left( \lambda \right) =\dfrac{\sigma ^{2}%
}{2\pi }b\left( \lambda \right) b\left( -\lambda \right) $.%
\begin{equation*}
f_{YYYY}\left( \lambda ,-\lambda ,\mu \right) =\frac{3\sigma ^{4}}{\left(
	2\pi \right) ^{3}}b\left( -\lambda \right) b\left( \lambda \right) b\left(
\mu \right) b\left( -\mu \right) =\frac{3}{2\pi }f_{YY}\left( \lambda
\right) f_{YY}\left( \mu \right) .
\end{equation*}

Now let us consider $f_{YYY_{j}Y_{j}}\left( \lambda ,-\lambda ,\mu \right) $.%
\begin{eqnarray*}
	f_{YYY_{j}Y_{j}}\left( \lambda ,-\lambda ,\mu \right) &=&\frac{1}{\left(
		2\pi \right) ^{3}}\sum\limits_{u_{1}=-\infty }^{\infty
	}\sum\limits_{u_{2}=-\infty }^{\infty }\sum\limits_{u_{3}=-\infty }^{\infty
	}\exp \left( -\imath \left( \lambda u_{1}-\lambda u_{2}+\mu u_{3}\right)
	\right) \\
	&&\times \kappa \left( Y_{t+u_{1}},Y_{t+u_{2}},Y_{t+\left( j-1\right)
		l+u_{3}},Y_{t+\left( j-1\right) l}\right) \\
	&=&\frac{1}{\left( 2\pi \right) ^{3}}\sum\limits_{u_{1}=-\infty }^{\infty
	}\sum\limits_{u_{2}=-\infty }^{\infty }\sum\limits_{u_{3}=-\infty }^{\infty
	}\exp \left( -\imath \left( \lambda \left( u_{1}-\left( j-1\right) l\right)
	-\lambda \left( u_{2}-\left( j-1\right) l\right) +\mu u_{3}\right) \right) \\
	&&\times \kappa \left( Y_{t-\left( j-1\right) l+u_{1}},Y_{t-\left(
		j-1\right) l+u_{2}},Y_{t+u_{3}},Y_{t}\right) \\
	&=&f_{YYYY}\left( \lambda ,-\lambda ,\mu \right) .
\end{eqnarray*}%
The covariance and variance terms in (\ref{Cov}) can thus be simplified as
follows.%
\begin{eqnarray*}
	Cov\left( I_{Y}^{\left( n\right) }\left( \lambda \right) ,I_{Y_{i}}^{\left(
		l\right) }\left( \mu \right) \right) &\approx &\frac{3}{l}f_{YY}\left(
	\lambda \right) f_{YY}\left( \mu \right) +\frac{1}{l^{2}}\left[ \Delta
	^{\left( l\right) }\left( -\lambda +\mu \right) \Delta ^{\left( l\right)
	}\left( \lambda -\mu \right) \right. \\
	&&+\left. \Delta ^{\left( l\right) }\left( \lambda +\mu \right) \Delta
	^{\left( l\right) }\left( -\lambda -\mu \right) \right] \left( f_{YY}\left(
	\lambda \right) \right) ^{2}, \\
	Var\left( I_{Y}^{\left( n\right) }\left( \lambda \right) \right) &\approx & 
	\left[ 1+\frac{3}{l}+\frac{1}{l^{2}}\Delta ^{\left( l\right) }\left(
	2\lambda \right) \Delta ^{\left( l\right) }\left( -2\lambda \right) \right]
	\left( f_{YY}\left( \lambda \right) \right) ^{2}.
\end{eqnarray*}%
Hence, the correlation is given by,%
\begin{equation*}
\rho \approx \frac{\frac{3}{l}+\frac{1}{l^{2}}\left[ \Delta ^{\left(
		l\right) }\left( -\lambda +\mu \right) \Delta ^{\left( l\right) }\left(
	\lambda -\mu \right) +\Delta ^{\left( l\right) }\left( \lambda +\mu \right)
	\Delta ^{\left( l\right) }\left( -\lambda -\mu \right) \right] \frac{%
		f_{YY}\left( \lambda \right) }{f_{YY}\left( \mu \right) }}{\sqrt{\left( 1+%
		\frac{3}{l}+\frac{1}{l^{2}}\Delta ^{\left( l\right) }\left( 2\lambda \right)
		\Delta ^{\left( l\right) }\left( -2\lambda \right) \right) }\sqrt{\left( 1+%
		\frac{3}{l}+\frac{1}{l^{2}}\Delta ^{\left( l\right) }\left( 2\mu \right)
		\Delta ^{\left( l\right) }\left( -2\mu \right) \right) }}.
\end{equation*}

\subsection*{Positiveness of the principle minors of the bordered Hessian
	matrix}

Here we show that for every $m\in 
\mathbb{N}
$, $\left\vert \mathbf{H}_{\left( m+3\right) \times \left( m+3\right)
}^{B}\right\vert >0$ using mathematical induction. For our convenience, we
assume that%
\begin{equation*}
\varphi _{\min }\left( \mathbf{H}_{\left( m+3\right) \times \left(
	m+3\right) }^{B}\right) >\left( m+3\right) ^{2}\frac{12N_{l}}{\pi ^{2}},
\end{equation*}%
where $\varphi _{\min }\left( \mathbf{A}\right) $ is the minimum eigenvalue
corresponding to the matrix $\mathbf{A.}$

Let us start with $m=1.$ The first minor of the bordered Hessian matrix, $%
\mathbf{H}_{4\times 4}^{B},$ is,%
\begin{eqnarray*}
	\left\vert \mathbf{H}_{4\times 4}^{B}\right\vert &=&\left\vert 
	\begin{array}{ccc}
		0 & 0 & -m^{2}\frac{N_{l}^{2}}{l^{2}} \\ 
		1 & \frac{N_{n}^{2}}{n^{2}} & -2c_{n,1}^{\ast } \\ 
		-1 & -m^{2}\frac{N_{l}^{2}}{l^{2}} & \frac{\pi ^{2}}{12N_{l}}%
	\end{array}%
	\right\vert +\left\vert 
	\begin{array}{ccc}
		0 & 0 & \frac{N_{n}^{2}}{n^{2}} \\ 
		1 & \frac{N_{n}^{2}}{n^{2}} & \frac{\pi ^{2}}{12N_{n}} \\ 
		-1 & -m^{2}\frac{N_{l}^{2}}{l^{2}} & -2c_{n,1}^{\ast }%
	\end{array}%
	\right\vert \\
	&=&-m^{2}\frac{N_{l}^{2}}{l^{2}}\left( -m^{2}\frac{N_{l}^{2}}{l^{2}}+\frac{%
		N_{n}^{2}}{n^{2}}\right) +\frac{N_{n}^{2}}{n^{2}}\left( -m^{2}\frac{N_{l}^{2}%
	}{l^{2}}+\frac{N_{n}^{2}}{n^{2}}\right) =\left( \frac{N_{n}^{2}}{n^{2}}-m^{2}%
	\frac{N_{l}^{2}}{l^{2}}\right) ^{2}>0.
\end{eqnarray*}%
That is, $\left\vert \mathbf{H}_{\left( m+3\right) \times \left( m+3\right)
}^{B}\right\vert >0$ for $m=1.$

Suppose that $\left\vert \mathbf{H}_{\left( m+3\right) \times \left(
	m+3\right) }^{B}\right\vert >0$ is true for $m=k,$ then we need to show that
it is true for $m=k+1$. To do so, we consider the partition of $\mathbf{H}%
_{\left( k+4\right) \times \left( k+4\right) }^{B}$ is as follows: 
\begin{equation*}
\mathbf{H}_{\left( k+4\right) \times \left( k+4\right) }^{B}=\left( 
\begin{array}{cc}
\mathbf{H}_{\left( k+3\right) \times \left( k+3\right) }^{B} & \mathbf{U} \\ 
\mathbf{U}^{T} & \frac{\pi ^{2}}{12N_{l}}%
\end{array}%
\right) ,
\end{equation*}%
where $\mathbf{U}^{\top }=\left[ 
\begin{array}{cccccc}
-1 & -\left( k+1\right) ^{2}\tfrac{N_{l}^{2}}{l^{2}} & -2c_{n,k+1}^{\ast } & 
2c_{1,k+1}^{\dagger } & \ldots  & 2c_{k,k+1}^{\dagger }%
\end{array}%
\right] $. Then,%
\begin{equation*}
\left\vert \mathbf{H}_{\left( k+4\right) \times \left( k+4\right)
}^{B}\right\vert =\left\vert \mathbf{H}_{\left( k+3\right) \times \left(
	k+3\right) }^{B}\right\vert \left( \frac{\pi ^{2}}{12N_{l}}-\mathbf{U}^{\top
}\left( \mathbf{H}_{\left( k+3\right) \times \left( k+3\right) }^{B}\right)
^{-1}\mathbf{U}\right) .
\end{equation*}%
Since $\left\vert \mathbf{H}_{\left( k+3\right) \times \left( k+3\right)
}^{B}\right\vert >0,$%
\begin{equation*}
0<\mathbf{U}^{\top }\left( \mathbf{H}_{\left( k+3\right) \times \left(
	k+3\right) }^{B}\right) ^{-1}\mathbf{U}\leq \tfrac{1}{\varphi _{\min }\left( 
	\mathbf{H}_{\left( k+3\right) \times \left( k+3\right) }^{B}\right) }\max_{%
	\mathbf{U}\in 
	\mathbb{R}
	^{k+3}\backslash \left\{ \mathbf{0}\right\} }\mathbf{U^{\top }U}<\frac{\pi
	^{2}}{12N_{l}},\text{ as }\max_{\mathbf{U}\in 
	\mathbb{R}
	^{k+3}\backslash \left\{ \mathbf{0}\right\} }\mathbf{U^{\top }U}=1.
\end{equation*}%
Hence this completes the proof.

\section*{Appendix C: Monte Carlo results: Tables 2 to 16}

\begin{table}[h]
\vspace{-4cm}%
\caption{{\small Bias estimates of the unadjusted LPR estimator, the optimal
jackknife estimator based on 2,3,4,6,8 non-overlapping (NO) sub-samples, the
optimal jackknife estimator based on 2 moving block (MB) sub-samples, both
versions of the GS estimator, the pre-filtered sieve bootstrap estimator,
the maximum likelihood estimator (MLE) and the pre-whitened (PW) estimator, for the
DGP: ARFIMA(}${\small 1,d}_{{\small 0}}{\small ,0}${\small ) with Gaussian
innovations. The optimal jackknife estimates are evaluated as described in
Section 5.1. The estimates are obtained by setting }${\small \alpha }${\small \ = 0.65 and assuming the model is correctly specified. The lowest values
are \textbf{bold-faced} and the second lowest values are \textit{italicized}.}}%
\label{Table:Bias_ARFIMA(1,d,0)_b}%
\renewcommand{\arraystretch}{0.9}%
\setlength{\tabcolsep}{2.3pt}%

\medskip

\hspace{-0.6in}%


\end{table}%

\newpage

\begin{table}[h]\
\vspace{-7.0cm}%
\caption{{\small RMSE estimates of the unadjusted LPR estimator, the optimal
jackknife estimator based on 2,3,4,6,8 non-overlapping (NO) sub-samples, the
optimal jackknife estimator based on 2 moving block (MB) sub-samples, both
versions of the GS estimator, the pre-filtered sieve bootstrap estimator,
the maximum likelihood estimator (MLE) and the pre-whitened (PW) estimator, for the
DGP: ARFIMA(}${\small 1,d}_{{\small 0}}{\small ,0}${\small ) with Gaussian
innovations. The optimal jackknife estimates are evaluated as described in
Section 5.1. The estimates are obtained by setting }${\small \alpha }${\small \ = 0.65 and assuming the model is correctly specified. The lowest values
are \textbf{bold-faced} and the second lowest values are \textit{italicized}.}}%
\label{Table:MSE_ARFIMA(1,d,0)_b}%
\renewcommand{\arraystretch}{0.9}%
\setlength{\tabcolsep}{2.3pt}%

\medskip

\hspace{-0.35in}%
\begin{tabular}{lllcccccccccccc}
\hline\hline
&  &  &  &  &  &  &  &  &  &  &  &  &  &  \\ 
${\small \phi }_{{\small 0}}$ & ${\small d}_{{\small 0}}$ & ${\small n}$ & $%
\widehat{{\small d}}_{{\small n}}$ & $\widehat{{\small d}}_{{\small J,2}}^{%
{\small Opt}\text{{\small -}}{\small NO}}$ & $\widehat{{\small d}}_{{\small %
J,3}}^{{\small Opt}\text{{\small -}}{\small NO}}$ & $\widehat{{\small d}}_{%
{\small J,4}}^{{\small Opt}\text{{\small -}}{\small NO}}$ & $\widehat{%
{\small d}}_{{\small J,6}}^{{\small Opt}\text{{\small -}}{\small NO}}$ & $%
\widehat{{\small d}}_{{\small J,8}}^{{\small Opt}\text{{\small -}}{\small NO}%
}$ & $\widehat{{\small d}}_{{\small J,2}}^{{\small Opt}\text{{\small -}}%
{\small MB}}$ & $\widehat{{\small d}}_{{\small 1}}^{{\small GS}}$ & $%
\widehat{{\small d}}_{{\small 1}}^{Opt\text{-}GS}$ & $\widehat{{\small d}}^{%
{\small PFSB}}$ & $\widehat{{\small d}}^{{\small MLE}}$ & $\widehat{{\small d%
}}^{{\small PW}}$ \\ \hline
\multicolumn{15}{c}{} \\ 
\multicolumn{1}{c}{\small -0.9} & \multicolumn{1}{c}{\small -0.25} & 
\multicolumn{1}{c}{\small 96} & {\small 1.0359} & {\small 1.0627} & {\small %
1.0532} & {\small 1.0596} & {\small 1.0358} & {\small 1.0286} & {\small %
1.1837} & {\small 1.3386} & {\small 1.1864} & {\small 1.2885} & {\small 
\small \textbf{0.7257}%
} & {\small 
\small \textit{0.9158}%
} \\ 
\multicolumn{1}{c}{} & \multicolumn{1}{c}{} & \multicolumn{1}{c}{\small 576}
& {\small 0.7398} & {\small 0.7490} & {\small 0.7403} & {\small 0.7372} & 
{\small 0.7325} & {\small 0.7299} & {\small 0.7382} & {\small 0.7371} & 
{\small 0.7200} & {\small 0.7359} & {\small 
\small \textbf{0.6353}%
} & {\small 
\small \textit{0.6994}%
} \\ 
\multicolumn{1}{c}{} & \multicolumn{1}{c}{\small 0} & \multicolumn{1}{c}%
{\small 96} & {\small 1.1148} & {\small 1.1398} & {\small 1.1275} & {\small %
1.1158} & {\small 1.1080} & {\small 1.0966} & {\small 1.1576} & {\small %
1.1819} & {\small 1.1120} & {\small 1.2167} & {\small 
\small \textbf{0.7380}%
} & {\small 
\small \textit{0.9181}%
} \\ 
\multicolumn{1}{c}{} & \multicolumn{1}{c}{} & \multicolumn{1}{c}{\small 576}
& {\small 0.8288} & {\small 0.8370} & {\small 0.8311} & {\small 0.8294} & 
{\small 0.8216} & {\small 0.8157} & {\small 0.8215} & {\small 0.8173} & 
{\small 0.8173} & {\small 0.8053} & {\small 
\small \textbf{0.5261}%
} & {\small 
\small \textit{0.5429}%
} \\ 
\multicolumn{1}{c}{} & \multicolumn{1}{c}{\small 0.25} & \multicolumn{1}{c}%
{\small 96} & {\small 1.1618} & {\small 1.1857} & {\small 1.1066} & {\small %
1.0971} & {\small 1.0944} & {\small 1.0913} & {\small 1.1162} & {\small %
1.1484} & {\small 1.1285} & {\small 1.2299} & {\small 
\small \textbf{0.7492}%
} & {\small 
\small \textit{0.9726}%
} \\ 
\multicolumn{1}{c}{} & \multicolumn{1}{c}{} & \multicolumn{1}{c}{\small 576}
& {\small 0.9175} & {\small 0.9250} & {\small 0.9203} & {\small 0.9186} & 
{\small 0.9128} & {\small 0.9076} & {\small 0.9115} & {\small 1.1171} & 
{\small 1.0172} & {\small 1.1130} & {\small 
\small \textbf{0.5258}%
} & {\small 
\small \textit{0.5530}%
} \\ 
\multicolumn{1}{c}{} & \multicolumn{1}{c}{\small 0.45} & \multicolumn{1}{c}%
{\small 96} & {\small 1.1286} & {\small 1.1552} & {\small 1.1325} & {\small %
1.1294} & {\small 1.1200} & {\small 1.1168} & {\small 1.1132} & {\small %
1.4331} & {\small 1.3331} & {\small 1.5385} & {\small 
\small \textbf{0.6482}%
} & {\small 
\small \textit{0.9438}%
} \\ 
\multicolumn{1}{c}{} & \multicolumn{1}{c}{} & \multicolumn{1}{c}{\small 576}
& {\small 0.9708} & {\small 0.9781} & {\small 0.9732} & {\small 0.9650} & 
{\small 0.9558} & {\small 0.9546} & {\small 0.9687} & {\small 1.1124} & 
{\small 1.0524} & {\small 1.1647} & {\small 
\small \textbf{0.5263}%
} & {\small 
\small \textit{0.5492}%
} \\ 
\multicolumn{1}{c}{\small -0.4} & \multicolumn{1}{c}{\small -0.25} & 
\multicolumn{1}{c}{\small 96} & {\small 0.2568} & {\small 0.2292} & {\small %
0.2568} & {\small 0.2422} & {\small 0.2384} & {\small 0.2376} & {\small %
0.2576} & {\small 0.2594} & {\small 0.2441} & {\small 0.3028} & {\small 
\small%
}\textbf{0.1308} & {\small 
\small \textit{0.1953}%
} \\ 
\multicolumn{1}{c}{} & \multicolumn{1}{c}{} & \multicolumn{1}{c}{\small 576}
& {\small 0.1098} & {\small 0.0978} & {\small 0.0974} & {\small 0.0884} & 
{\small 
\small \textit{0.0873}%
} & {\small 0.0896} & {\small 0.1096} & {\small 0.1118} & 
{\small 0.0995} & {\small 0.1272} & {\small 
\small \textbf{0.0662}%
} & {\small 0.0948} \\ 
\multicolumn{1}{c}{} & \multicolumn{1}{c}{\small 0} & \multicolumn{1}{c}%
{\small 96} & {\small 0.2498} & {\small 0.2395} & {\small 0.2284} & {\small %
0.2146} & {\small 0.2138} & {\small 0.2117} & {\small 0.2517} & {\small %
0.2560} & {\small 0.2416} & {\small 0.2930} & {\small 
\small \textbf{0.1309}%
} & {\small 
\small \textit{0.1999}%
} \\ 
\multicolumn{1}{c}{} & \multicolumn{1}{c}{} & \multicolumn{1}{c}{\small 576}
& {\small 0.1069} & {\small 0.0837} & {\small 0.0879} & {\small 0.0819} & 
{\small 0.0787} & {\small 
\small \textit{0.0778}%
} & {\small 0.1078} & {\small 0.1104} & {\small 0.0967} & 
{\small 0.1247} & {\small 
\small \textbf{0.0530}%
} & {\small 0.1065} \\ 
\multicolumn{1}{c}{} & \multicolumn{1}{c}{\small 0.25} & \multicolumn{1}{c}%
{\small 96} & {\small 0.2490} & {\small 0.2678} & {\small 0.2574} & {\small %
0.2435} & {\small 0.2354} & {\small 0.2254} & {\small 0.3254} & {\small %
0.2580} & {\small 0.2404} & {\small 0.2879} & {\small 
\small \textbf{0.1382}%
} & {\small 
\small \textit{0.1896}%
} \\ 
\multicolumn{1}{c}{} & \multicolumn{1}{c}{} & \multicolumn{1}{c}{\small 576}
& {\small 0.1079} & {\small 0.1036} & {\small 0.0965} & {\small 0.0901} & 
{\small 0.0819} & {\small 
\small \textit{0.0797}%
} & {\small 0.1097} & {\small 0.1115} & {\small 0.1029} & 
{\small 0.1239} & {\small 
\small \textbf{0.0528}%
} & {\small 0.1047} \\ 
\multicolumn{1}{c}{} & \multicolumn{1}{c}{\small 0.45} & \multicolumn{1}{c}%
{\small 96} & {\small 0.2506} & {\small 0.2615} & {\small 0.2563} & {\small %
0.2434} & {\small 0.2390} & {\small 0.2243} & {\small 0.2544} & {\small %
0.2616} & {\small 0.2511} & {\small 0.2506} & {\small 
\small \textbf{0.1371}%
} & {\small 
\small \textit{0.1966}%
} \\ 
\multicolumn{1}{c}{} & \multicolumn{1}{c}{} & \multicolumn{1}{c}{\small 576}
& {\small 0.1115} & {\small 0.0963} & {\small 0.0878} & {\small 0.0808} & 
{\small 0.0777} & {\small 
\small \textit{0.0742}%
} & {\small 0.1142} & {\small 0.1143} & {\small 0.1005} & 
{\small 0.1230} & {\small 
\small \textbf{0.0593}%
} & {\small 0.1028} \\ 
\multicolumn{1}{c}{\small 0.4} & \multicolumn{1}{c}{\small -0.25} & 
\multicolumn{1}{c}{\small 96} & {\small 0.1917} & {\small 0.1721} & {\small %
0.1654} & {\small 0.1629} & {\small 0.1544} & {\small 0.1529} & {\small %
0.1929} & {\small 0.2212} & {\small 0.2157} & {\small 0.2717} & {\small 
\small \textbf{0.0904}%
} & {\small 
\small \textit{0.1445}%
} \\ 
\multicolumn{1}{c}{} & \multicolumn{1}{c}{} & \multicolumn{1}{c}{\small 576}
& {\small 0.0919} & {\small 0.0762} & {\small 0.0747} & {\small 0.0665} & 
{\small 0.0632} & {\small 
\small \textit{0.0624}%
} & {\small 0.0924} & {\small 0.1081} & {\small 0.0695} & 
{\small 0.1198} & {\small 
\small \textbf{0.0335}%
} & {\small 0.0764} \\ 
\multicolumn{1}{c}{} & \multicolumn{1}{c}{\small 0} & \multicolumn{1}{c}%
{\small 96} & {\small 0.1946} & {\small 0.1726} & {\small 0.1717} & {\small %
0.1631} & {\small 0.1569} & {\small 0.1557} & {\small 0.1957} & {\small %
0.2203} & {\small 0.2162} & {\small 0.2546} & {\small 
\small \textbf{0.0872}%
} & {\small 
\small%
}\textit{0.1439} \\ 
\multicolumn{1}{c}{} & \multicolumn{1}{c}{} & \multicolumn{1}{c}{\small 576}
& {\small 0.0920} & {\small 0.0890} & {\small 0.0793} & {\small 0.0751} & 
{\small 0.0730} & 0.0724 & {\small 0.0924} & {\small 0.1073} & {\small 
\small \textit{0.0684}%
} & {\small 0.1166} & {\small 
\small \textbf{0.0434}%
} & {\small 0.0753} \\ 
\multicolumn{1}{c}{} & \multicolumn{1}{c}{\small 0.25} & \multicolumn{1}{c}%
{\small 96} & {\small 0.1960} & {\small 0.2107} & {\small 0.2063} & {\small %
0.2008} & {\small 0.1913} & {\small 0.1966} & {\small 0.1966} & {\small %
0.2209} & {\small 0.2091} & {\small 0.2482} & {\small 
\small \textbf{0.0912}%
} & {\small 
\small \textit{0.1535}%
} \\ 
\multicolumn{1}{c}{} & \multicolumn{1}{c}{} & \multicolumn{1}{c}{\small 576}
& {\small 0.0922} & {\small 0.0705} & {\small 0.0696} & {\small 0.0644} & 
{\small 0.0627} & {\small 
\small \textit{0.0624}%
} & {\small 0.0924} & {\small 0.1076} & {\small 0.0688} & 
{\small 0.1158} & {\small 
\small \textbf{0.0381}%
} & {\small 0.0736} \\ 
\multicolumn{1}{c}{} & \multicolumn{1}{c}{\small 0.45} & \multicolumn{1}{c}%
{\small 96} & {\small 0.1955} & {\small 0.2178} & {\small 0.2140} & {\small %
0.2085} & {\small 0.2061} & {\small 0.2058} & {\small 0.1958} & {\small %
0.2218} & {\small 0.2143} & {\small 0.2453} & {\small 
\small \textbf{0.0944}%
} & {\small 
\small \textit{0.1538}%
} \\ 
\multicolumn{1}{c}{} & \multicolumn{1}{c}{} & \multicolumn{1}{c}{\small 576}
& {\small 0.0926} & {\small 0.0710} & {\small 0.0684} & {\small 0.0667} & 
{\small 0.0634} & {\small 
\small \textit{0.0569}%
} & {\small 0.0929} & {\small 0.1089} & {\small 0.0701} & 
{\small 0.1149} & {\small 
\small \textbf{0.0499}%
} & {\small 0.0752} \\ 
\multicolumn{1}{c}{\small 0.9} & \multicolumn{1}{c}{\small -0.25} & 
\multicolumn{1}{c}{\small 96} & {\small 0.1115} & {\small 0.1039} & {\small %
0.1006} & {\small 0.0994} & {\small 0.0913} & {\small 0.0886} & {\small %
0.0932} & {\small 0.1365} & {\small 0.1132} & {\small 0.1266} & {\small 
\small \textbf{0.0482}%
} & {\small 
\small \textit{0.0872}%
} \\ 
\multicolumn{1}{c}{} & \multicolumn{1}{c}{} & \multicolumn{1}{c}{\small 576}
& {\small 0.0624} & {\small 0.0522} & {\small 0.0513} & {\small 0.0482} & 
{\small 0.0440} & {\small 0.0402} & {\small 0.0399} & {\small 0.0708} & 
{\small 0.0659} & {\small 0.0600} & {\small 
\small \textbf{0.0127}%
} & {\small 
\small \textit{0.0331}%
} \\ 
\multicolumn{1}{c}{} & \multicolumn{1}{c}{\small 0} & \multicolumn{1}{c}%
{\small 96} & {\small 0.1010} & {\small 0.1012} & {\small 0.0954} & {\small %
0.0911} & {\small 0.0827} & {\small 0.0813} & {\small 0.0955} & {\small %
0.1121} & {\small 0.0992} & {\small 0.1093} & {\small 
\small \textbf{0.0438}%
} & {\small 
\small \textit{0.0838}%
} \\ 
\multicolumn{1}{c}{} & \multicolumn{1}{c}{} & \multicolumn{1}{c}{\small 576}
& {\small 0.0602} & {\small 0.0504} & {\small 0.0486} & {\small 0.0455} & 
{\small 0.0422} & {\small 0.0391} & {\small 0.0400} & {\small 0.0698} & 
{\small 0.0632} & {\small 0.0705} & {\small 
\small \textbf{0.0121}%
} & {\small 
\small \textit{0.0323}%
} \\ 
\multicolumn{1}{c}{} & \multicolumn{1}{c}{\small 0.25} & \multicolumn{1}{c}%
{\small 96} & {\small 0.1114} & {\small 0.1053} & {\small 0.1011} & {\small %
0.0942} & {\small 0.0930} & {\small 0.0913} & {\small 0.1106} & {\small %
0.1328} & {\small 0.1179} & {\small 0.1282} & {\small 
\small \textbf{0.0463}%
} & {\small 
\small \textit{0.0880}%
} \\ 
\multicolumn{1}{c}{} & \multicolumn{1}{c}{} & \multicolumn{1}{c}{\small 576}
& {\small 0.0518} & {\small 0.0500} & {\small 0.0482} & {\small 0.0438} & 
{\small 0.0419} & {\small 0.0374} & {\small 0.0491} & {\small 0.0626} & 
{\small 0.0573} & {\small 0.0581} & {\small 
\small \textbf{0.0139}%
} & {\small 
\small \textit{0.0341}%
} \\ 
\multicolumn{1}{c}{} & \multicolumn{1}{c}{\small 0.45} & \multicolumn{1}{c}%
{\small 96} & {\small 0.1053} & {\small 0.0992} & {\small 0.0914} & {\small %
0.0824} & {\small 0.0862} & {\small 
\small \textit{0.0801}%
} & {\small 0.0937} & {\small 0.1253} & {\small 0.1188} & 
{\small 0.1215} & {\small 
\small \textbf{0.0418}%
} & {\small 0.0868} \\ 
\multicolumn{1}{c}{} & \multicolumn{1}{c}{} & \multicolumn{1}{c}{\small 576}
& {\small 0.0526} & {\small 0.0518} & {\small 0.0583} & {\small 0.0503} & 
{\small 0.0455} & {\small 0.0412} & {\small 0.0527} & {\small 0.0769} & 
{\small 0.0600} & {\small 0.0684} & {\small 
\small \textbf{0.0122}%
} & {\small 
\small \textit{0.0351}%
} \\ \hline\hline
\end{tabular}

\end{table}%

\newpage

\begin{table}[h]\
\vspace{-7.0cm}%
\caption{{\small Bias estimates of the unadjusted LPR estimator, the optimal
jackknife estimator based on 2,3,4,6,8 non-overlapping (NO) sub-samples, the
optimal jackknife estimator based on 2 moving block (MB) sub-samples, both
versions of the GS estimator, the pre-filtered sieve bootstrap estimator,
the maximum likelihood estimator (MLE) and the pre-whitened (PW) estimator, for the
DGP: ARFIMA(}${\small 0,d}_{{\small 0}}{\small ,1}${\small ) with Gaussian
innovations. The optimal jackknife estimates are evaluated as described in
Section 5.1. The estimates are obtained by setting }${\small \alpha }${\small \ = 0.65 and assuming the model is correctly specified. The lowest values
are \textbf{bold-faced} and the second lowest values are \textit{italicized}.}}%
\label{Table:Bias_ARFIMA(0,d,1)_b}%
\renewcommand{\arraystretch}{0.9}%
\setlength{\tabcolsep}{2.3pt}%

\medskip

\hspace{-0.55in}%


\end{table}%

\newpage

\begin{table}[h]\
\vspace{-7.0cm}%
\caption{{\small RMSE Bias estimates of the unadjusted LPR estimator, the optimal
jackknife estimator based on 2,3,4,6,8 non-overlapping (NO) sub-samples, the
optimal jackknife estimator based on 2 moving block (MB) sub-samples, both
versions of the GS estimator, the pre-filtered sieve bootstrap estimator,
the maximum likelihood estimator (MLE) and the pre-whitened (PW) estimator, for the
DGP: ARFIMA(}${\small 0,d}_{{\small 0}}{\small ,1}${\small ) with Gaussian
innovations. The optimal jackknife estimates are evaluated as described in
Section 5.1. The estimates are obtained by setting }${\small \alpha }${\small \ = 0.65 and assuming the model is correctly specified. The lowest values
are \textbf{bold-faced} and the second lowest values are \textit{italicized}.}}%
\label{Table:MSE_ARFIMA(0,d,1)_b}%
\renewcommand{\arraystretch}{0.9}%
\setlength{\tabcolsep}{2.3pt}%

\medskip

\hspace{-0.35in}%
\begin{tabular}{lllcccccccccccc}
\hline\hline
&  &  &  &  &  &  &  &  &  &  &  &  &  &  \\ 
${\small \theta }_{{\small 0}}$ & ${\small d}_{{\small 0}}$ & ${\small n}$ & 
$\widehat{{\small d}}_{{\small n}}$ & $\widehat{{\small d}}_{{\small J,2}}^{%
{\small Opt}\text{{\small -}}{\small NO}}$ & $\widehat{{\small d}}_{{\small %
J,3}}^{{\small Opt}\text{{\small -}}{\small NO}}$ & $\widehat{{\small d}}_{%
{\small J,4}}^{{\small Opt}\text{{\small -}}{\small NO}}$ & $\widehat{%
{\small d}}_{{\small J,6}}^{{\small Opt}\text{{\small -}}{\small NO}}$ & $%
\widehat{{\small d}}_{{\small J,8}}^{{\small Opt}\text{{\small -}}{\small NO}%
}$ & $\widehat{{\small d}}_{{\small J,2}}^{{\small Opt}\text{{\small -}}%
{\small MB}}$ & $\widehat{{\small d}}_{{\small 1}}^{{\small GS}}$ & $%
\widehat{{\small d}}_{{\small 1}}^{Opt\text{-}GS}$ & $\widehat{{\small d}}^{%
{\small PFSB}}$ & $\widehat{{\small d}}^{{\small MLE}}$ & $\widehat{{\small d%
}}^{{\small PW}}$ \\ \hline
\multicolumn{15}{c}{} \\ 
\multicolumn{1}{c}{\small -0.9} & \multicolumn{1}{c}{\small -0.25} & 
\multicolumn{1}{c}{\small 96} & {\small 0.6233} & {\small 0.6345} & {\small %
0.6275} & {\small 0.6177} & {\small 0.6112} & {\small 0.6020} & {\small %
0.6284} & {\small 0.6385} & {\small 0.6086} & {\small 0.8247} & {\small 
\small \textbf{0.3671}%
} & {\small 
\small \textit{0.3729}%
} \\ 
\multicolumn{1}{c}{} & \multicolumn{1}{c}{} & \multicolumn{1}{c}{\small 576}
& {\small 0.4794} & {\small 0.4812} & {\small 0.4723} & {\small 0.4662} & 
{\small 0.4553} & {\small 0.4492} & {\small 0.4671} & {\small 0.4885} & 
{\small 0.4686} & {\small 0.4977} & {\small 
\small \textbf{0.1352}%
} & {\small 
\small \textit{0.1945}%
} \\ 
\multicolumn{1}{c}{} & \multicolumn{1}{c}{\small 0} & \multicolumn{1}{c}%
{\small 96} & {\small 0.7361} & {\small 0.8081} & {\small 0.7972} & {\small %
0.7875} & {\small 0.7726} & {\small 0.7642} & {\small 0.7815} & {\small %
0.8413} & {\small 0.7214} & {\small 0.8510} & {\small 
\small \textbf{0.6705}%
} & {\small 
\small \textit{0.6938}%
} \\ 
\multicolumn{1}{c}{} & \multicolumn{1}{c}{} & \multicolumn{1}{c}{\small 576}
& {\small 0.5687} & {\small 0.5919} & {\small 0.5822} & {\small 0.5719} & 
{\small 0.5641} & {\small 
\small \textit{0.5527}%
} & {\small 0.5637} & {\small 0.5838} & {\small 0.5639} & 
{\small 0.5942} & {\small 
\small \textbf{0.5426}%
} & {\small 0.5941} \\ 
\multicolumn{1}{c}{} & \multicolumn{1}{c}{\small 0.25} & \multicolumn{1}{c}%
{\small 96} & {\small 0.7996} & {\small 0.8096} & {\small 0.7918} & {\small %
0.7872} & {\small 0.7716} & {\small 
\small \textit{0.7615}%
} & {\small 0.7715} & {\small 0.8268} & {\small 0.7869} & 
{\small 0.8430} & {\small 
\small \textbf{0.7592}%
} & {\small 0.8081} \\ 
\multicolumn{1}{c}{} & \multicolumn{1}{c}{} & \multicolumn{1}{c}{\small 576}
& {\small 0.5951} & {\small 0.6193} & {\small 0.6022} & {\small 0.5976} & 
{\small 0.5843} & {\small 
\small \textit{0.5693}%
} & {\small 0.5826} & {\small 0.6219} & {\small 0.6019} & 
{\small 0.6590} & {\small 
\small \textbf{0.5513}%
} & {\small 0.5993} \\ 
\multicolumn{1}{c}{} & \multicolumn{1}{c}{\small 0.45} & \multicolumn{1}{c}%
{\small 96} & {\small 0.8219} & {\small 0.8410} & {\small 0.8325} & {\small %
0.8224} & {\small 0.8135} & {\small 
\small \textit{0.8064}%
} & {\small 0.8231} & {\small 0.8590} & {\small 0.8190} & 
{\small 0.8327} & {\small 
\small \textbf{0.7883}%
} & {\small 0.8184} \\ 
\multicolumn{1}{c}{} & \multicolumn{1}{c}{} & \multicolumn{1}{c}{\small 576}
& {\small 0.5950} & {\small 0.6066} & {\small 0.5953} & {\small 0.5871} & 
{\small 0.5763} & {\small 
\small \textit{0.5642}%
} & {\small 0.5783} & {\small 0.6298} & {\small 0.6198} & 
{\small 0.6487} & {\small 
\small \textbf{0.5609}%
} & {\small 0.6063} \\ 
\multicolumn{1}{c}{\small -0.4} & \multicolumn{1}{c}{\small -0.25} & 
\multicolumn{1}{c}{\small 96} & {\small 0.2376} & {\small 0.2253} & {\small %
0.2218} & {\small 0.2198} & {\small 0.2133} & {\small 0.2102} & {\small %
0.2401} & {\small 0.2488} & {\small 0.2255} & {\small 0.3103} & {\small 
\small \textbf{0.1682}%
} & {\small 
\small \textit{0.2094}%
} \\ 
\multicolumn{1}{c}{} & \multicolumn{1}{c}{} & \multicolumn{1}{c}{\small 576}
& {\small 0.1037} & {\small 0.0923} & {\small 0.0895} & {\small 0.0745} & 
{\small 0.0672} & {\small 
\small \textit{0.0652}%
} & {\small 0.1052} & {\small 0.1098} & {\small 0.1004} & 
{\small 0.1254} & {\small 
\small \textbf{0.0526}%
} & {\small 0.1027} \\ 
\multicolumn{1}{c}{} & \multicolumn{1}{c}{\small 0} & \multicolumn{1}{c}%
{\small 96} & {\small 0.2497} & {\small 0.2385} & {\small 0.2278} & {\small %
0.2142} & {\small 0.2136} & {\small 
\small \textit{0.2015}%
} & {\small 0.2514} & {\small 0.2559} & {\small 0.2512} & 
{\small 0.2883} & {\small 
\small \textbf{0.1644}%
} & {\small 0.2043} \\ 
\multicolumn{1}{c}{} & \multicolumn{1}{c}{} & \multicolumn{1}{c}{\small 576}
& {\small 0.1070} & {\small 0.0936} & {\small 0.0979} & {\small 0.0819} & 
{\small 0.0887} & {\small 
\small \textit{0.0778}%
} & {\small 0.1078} & {\small 0.1105} & {\small 0.0845} & 
{\small 0.1215} & {\small 
\small \textbf{0.0511}%
} & {\small 0.1011} \\ 
\multicolumn{1}{c}{} & \multicolumn{1}{c}{\small 0.25} & \multicolumn{1}{c}%
{\small 96} & {\small 0.2527} & {\small 0.2451} & {\small 0.2425} & {\small %
0.2379} & {\small 0.2343} & {\small 0.2335} & {\small 0.2535} & {\small %
0.2560} & {\small 0.2495} & {\small 0.2782} & {\small 
\small \textbf{0.1679}%
} & {\small 
\small \textit{0.2087}%
} \\ 
\multicolumn{1}{c}{} & \multicolumn{1}{c}{} & \multicolumn{1}{c}{\small 576}
& {\small 0.1068} & {\small 0.0987} & {\small 0.1052} & {\small 0.1057} & 
{\small 0.0964} & {\small 
\small \textit{0.0867}%
} & {\small 0.1067} & {\small 0.1103} & {\small 0.0934} & 
{\small 0.1199} & {\small 
\small \textbf{0.0518}%
} & {\small 0.1128} \\ 
\multicolumn{1}{c}{} & \multicolumn{1}{c}{\small 0.45} & \multicolumn{1}{c}%
{\small 96} & {\small 0.2496} & {\small 0.2524} & {\small 0.2459} & {\small %
0.2476} & {\small 0.2493} & {\small 0.2495} & {\small 0.2495} & {\small %
0.2518} & {\small 0.2441} & {\small 0.2725} & {\small 
\small \textbf{0.1682}%
} & {\small 
\small \textit{0.2093}%
} \\ 
\multicolumn{1}{c}{} & \multicolumn{1}{c}{} & \multicolumn{1}{c}{\small 576}
& {\small 0.1047} & {\small 0.0928} & {\small 0.0900} & {\small 0.0855} & 
{\small 0.0830} & {\small 
\small \textit{0.0740}%
} & {\small 0.1040} & {\small 0.1098} & {\small 0.0991} & 
{\small 0.1188} & {\small 
\small \textbf{0.0566}%
} & {\small 0.1066} \\ 
\multicolumn{1}{c}{\small 0.4} & \multicolumn{1}{c}{\small -0.25} & 
\multicolumn{1}{c}{\small 96} & {\small 0.1982} & {\small 0.1894} & {\small %
0.1875} & {\small 0.1825} & {\small 0.1793} & {\small 0.1687} & {\small %
0.1987} & {\small 0.2212} & {\small 0.2153} & {\small 0.2809} & {\small 
\small \textbf{0.1083}%
} & {\small 
\small \textit{0.1422}%
} \\ 
\multicolumn{1}{c}{} & \multicolumn{1}{c}{} & \multicolumn{1}{c}{\small 576}
& {\small 0.0932} & {\small 0.0858} & {\small 0.0988} & {\small 0.0947} & 
{\small 0.0935} & {\small 0.0933} & {\small 0.0933} & {\small 0.1078} & 
{\small 0.0812} & {\small 0.1268} & {\small 
\small \textbf{0.0594}%
} & {\small 
\small \textit{0.0739}%
} \\ 
\multicolumn{1}{c}{} & \multicolumn{1}{c}{\small 0} & \multicolumn{1}{c}%
{\small 96} & {\small 0.1944} & {\small 0.1826} & {\small 0.1815} & {\small %
0.1729} & {\small 0.1666} & {\small 0.1654} & {\small 0.1955} & {\small %
0.2203} & {\small 0.2146} & {\small 0.2701} & {\small 
\small \textbf{0.1042}%
} & {\small 
\small \textit{0.1492}%
} \\ 
\multicolumn{1}{c}{} & \multicolumn{1}{c}{} & \multicolumn{1}{c}{\small 576}
& {\small 0.0919} & {\small 0.0890} & {\small 0.0893} & {\small 0.0850} & 
{\small 0.0829} & {\small 0.0824} & {\small 0.0924} & {\small 0.1072} & 
{\small 0.0930} & {\small 0.1243} & {\small 
\small \textbf{0.0518}%
} & {\small 
\small \textit{0.0725}%
} \\ 
\multicolumn{1}{c}{} & \multicolumn{1}{c}{\small 0.25} & \multicolumn{1}{c}%
{\small 96} & {\small 0.1947} & {\small 0.1945} & {\small 0.1918} & {\small %
0.1878} & {\small 0.1780} & {\small 
\small \textit{0.1762}%
} & {\small 0.1962} & {\small 0.2213} & {\small 0.2048} & 
{\small 0.2663} & {\small 
\small \textbf{0.1015}%
} & {\small 0.1786} \\ 
\multicolumn{1}{c}{} & \multicolumn{1}{c}{} & \multicolumn{1}{c}{\small 576}
& {\small 0.0925} & {\small 0.0942} & {\small 0.1079} & {\small 0.0983} & 
{\small 0.0942} & {\small 0.0832} & {\small 0.0932} & {\small 0.1077} & 
{\small 0.0924} & {\small 0.1238} & {\small 
\small \textbf{0.0539}%
} & {\small 
\small \textit{0.0731}%
} \\ 
\multicolumn{1}{c}{} & \multicolumn{1}{c}{\small 0.45} & \multicolumn{1}{c}%
{\small 96} & {\small 0.1964} & {\small 0.1769} & {\small 0.1649} & {\small %
0.1544} & {\small 
\small \textit{0.1407}%
} & {\small 0.1483} & {\small 0.1984} & {\small 0.2223} & 
{\small 0.2175} & {\small 0.2643} & {\small 
\small \textbf{0.1028}%
} & {\small 0.1818} \\ 
\multicolumn{1}{c}{} & \multicolumn{1}{c}{} & \multicolumn{1}{c}{\small 576}
& {\small 0.0943} & {\small 0.0902} & {\small 0.0831} & {\small 0.0846} & 
{\small 0.0772} & {\small 
\small \textit{0.0756}%
} & {\small 0.0955} & {\small 0.1090} & {\small 0.0939} & 
{\small 0.1229} & {\small 
\small \textbf{0.0541}%
} & {\small 0.0857} \\ 
\multicolumn{1}{c}{\small 0.9} & \multicolumn{1}{c}{\small -0.25} & 
\multicolumn{1}{c}{\small 96} & {\small 0.0886} & {\small 0.0983} & {\small %
0.0944} & {\small 0.0907} & {\small 0.0883} & {\small 
\small \textit{0.0864}%
} & {\small 0.0938} & {\small 0.1105} & {\small 0.1073} & 
{\small 0.1253} & {\small 
\small \textbf{0.0543}%
} & {\small 0.0912} \\ 
\multicolumn{1}{c}{} & \multicolumn{1}{c}{} & \multicolumn{1}{c}{\small 576}
& {\small 
\small \textit{0.0344}%
} & {\small 0.0518} & {\small 0.0504} & {\small 0.0493} & 
{\small 0.0426} & {\small 0.0376} & {\small 0.0467} & {\small 0.0561} & 
{\small 0.0538} & {\small 0.0589} & {\small 
\small \textbf{0.0215}%
} & {\small 0.0421} \\ 
\multicolumn{1}{c}{} & \multicolumn{1}{c}{\small 0} & \multicolumn{1}{c}%
{\small 96} & {\small 0.0863} & {\small 0.1086} & {\small 0.1011} & {\small %
0.0977} & {\small 0.0945} & {\small 0.0912} & {\small 0.0975} & {\small %
0.1209} & {\small 0.1158} & {\small 0.1288} & {\small 
\small \textbf{0.0607}%
} & {\small 
\small \textit{0.0832}%
} \\ 
\multicolumn{1}{c}{} & \multicolumn{1}{c}{} & \multicolumn{1}{c}{\small 576}
& {\small 
\small \textit{0.0312}%
} & {\small 0.0572} & {\small 0.0542} & {\small 0.0519} & 
{\small 0.0482} & {\small 0.0460} & {\small 0.0493} & {\small 0.0674} & 
{\small 0.0625} & {\small 0.0729} & {\small 
\small \textbf{0.0202%
}} & {\small 0.0404} \\ 
\multicolumn{1}{c}{} & \multicolumn{1}{c}{\small 0.25} & \multicolumn{1}{c}%
{\small 96} & {\small 
\small \textit{0.0865}%
} & {\small 0.1113} & {\small 0.1086} & {\small 0.1012} & 
{\small 0.0974} & {\small 0.0928} & {\small 0.0972} & {\small 0.1287} & 
{\small 0.1176} & {\small 0.1286} & {\small 
\small \textbf{0.0653}%
} & {\small 0.0926} \\ 
\multicolumn{1}{c}{} & \multicolumn{1}{c}{} & \multicolumn{1}{c}{\small 576}
& {\small 
\small \textit{0.0304}%
} & {\small 0.0574} & {\small 0.0548} & {\small 0.0519} & 
{\small 0.0496} & {\small 0.0433} & {\small 0.0487} & {\small 0.0692} & 
{\small 0.0614} & {\small 0.0706} & {\small 
\small \textbf{0.0195}%
} & {\small 0.0451} \\ 
\multicolumn{1}{c}{} & \multicolumn{1}{c}{\small 0.45} & \multicolumn{1}{c}%
{\small 96} & {\small 
\small \textit{0.0885}%
} & {\small 0.1268} & {\small 0.1206} & {\small 0.1158} & 
{\small 0.1107} & {\small 0.1073} & {\small 0.1069} & {\small 0.1181} & 
{\small 0.1093} & {\small 0.1197} & {\small 
\small \textbf{0.0654}%
} & {\small 0.0933} \\ 
\multicolumn{1}{c}{} & \multicolumn{1}{c}{} & \multicolumn{1}{c}{\small 576}
& {\small 
\small \textit{0.0378}%
} & {\small 0.0592} & {\small 0.0541} & {\small 0.0517} & 
{\small 0.0482} & {\small 0.0459} & {\small 0.0528} & {\small 0.0647} & 
{\small 0.0580} & {\small 0.0695} & {\small 
\small \textbf{0.0122}%
} & {\small 0.0424} \\ \hline\hline
\end{tabular}

\end{table}%

\newpage

\begin{table}[h]\
\vspace{-7.0cm}%
\caption{{\small Bias estimates of the unadjusted LPR estimator, the feasible
jackknife estimator based on 2,3,4,6,8 non-overlapping (NO) sub-samples, the
feasible jackknife estimator based on 2 moving block (MB) sub-samples, both
versions of the GS estimator, the pre-filtered sieve bootstrap estimator,
the maximum likelihood estimator (MLE) and the pre-whitened (PW) estimator, for the
DGP: ARFIMA(}${\small 1,d}_{{\small 0}}{\small ,0}${\small ) with Gaussian
innovations. The feasible jackknife estimates are evaluated using the iterative procedure described in
Section 5.2.2. The estimates are obtained by setting }${\small \alpha }${\small \ = 0.65 and assuming the model is correctly specified. The lowest values
are \textbf{bold-faced} and the second lowest values are \textit{italicized}.}}%
\label{Table:Bias_ARFIMA(1,d,0)_b_I}%
\renewcommand{\arraystretch}{0.9}%
\setlength{\tabcolsep}{2.3pt}%

\medskip



\end{table}%

\newpage

\begin{table}[h]\
\vspace{-7.0cm}%
\caption{{\small RMSE estimates of the unadjusted LPR estimator, the feasible
jackknife estimator based on 2,3,4,6,8 non-overlapping (NO) sub-samples, the
feasible jackknife estimator based on 2 moving block (MB) sub-samples, both
versions of the GS estimator, the pre-filtered sieve bootstrap estimator,
the maximum likelihood estimator (MLE) and the pre-whitened (PW) estimator, for the
DGP: ARFIMA(}${\small 1,d}_{{\small 0}}{\small ,0}${\small ) with Gaussian
innovations. The feasible jackknife estimates are evaluated using the iterative procedure described in
Section 5.2.2. The estimates are obtained by setting }${\small \alpha }${\small \ = 0.65 and assuming the model is correctly specified. The lowest values
are \textbf{bold-faced} and the second lowest values are \textit{italicized}.}}%
\label{Table:MSE_ARFIMA(1,d,0)_b_I}%
\renewcommand{\arraystretch}{0.9}%
\setlength{\tabcolsep}{4pt}%

\medskip

\begin{tabular}{lllccccccccccc}
\hline\hline
&  &  &  &  &  &  &  &  &  &  &  &  &  \\ 
${\small \phi }_{{\small 0}}$ & ${\small d}_{{\small 0}}$ & ${\small n}$ & $%
\widehat{{\small d}}_{{\small n}}$ & $\widehat{{\small d}}_{{\small J,2}}^{%
{\small NO}}$ & $\widehat{{\small d}}_{{\small J,3}}^{{\small NO}}$ & $%
\widehat{{\small d}}_{{\small J,4}}^{{\small NO}}$ & $\widehat{{\small d}}_{%
{\small J,6}}^{{\small NO}}$ & $\widehat{{\small d}}_{{\small J,8}}^{{\small %
NO}}$ & $\widehat{{\small d}}_{{\small J,2}}^{{\small MB}}$ & $\widehat{%
{\small d}}_{{\small 1}}^{{\small GS}}$ & $\widehat{{\small d}}^{{\small PFSB%
}}$ & $\widehat{{\small d}}^{{\small MLE}}$ & $\widehat{{\small d}}^{{\small %
PW}}$ \\ \hline
\multicolumn{1}{c}{} &  &  &  &  &  &  &  &  &  &  &  &  &  \\ 
\multicolumn{1}{c}{\small -0.9} & \multicolumn{1}{c}{\small -0.25} & 
\multicolumn{1}{c}{\small 96} & {\small 
\small \textit{1.0359}%
} & {\small 1.2162} & {\small 1.2053} & {\small 1.1907} & 
{\small 1.1853} & {\small 1.1814} & {\small 1.2193} & {\small 1.3386} & 
{\small 1.2885} & {\small 
\small \textbf{0.9538}%
} & {\small 1.0562} \\ 
\multicolumn{1}{c}{} & \multicolumn{1}{c}{} & \multicolumn{1}{c}{\small 576}
& {\small 0.7398} & {\small 0.7688} & {\small 0.7621} & {\small 0.7586} & 
{\small 0.7549} & {\small 0.7451} & {\small 0.7618} & {\small 0.7371} & 
{\small 
\small \textit{0.7359}%
} & {\small 
\small \textbf{0.6315}%
} & {\small 0.7365} \\ 
\multicolumn{1}{c}{} & \multicolumn{1}{c}{\small 0} & \multicolumn{1}{c}%
{\small 96} & {\small 1.1148} & {\small 1.1491} & {\small 1.1400} & {\small %
1.1365} & {\small 1.1334} & {\small 1.1240} & {\small 1.1391} & {\small %
1.1819} & {\small 1.2167} & {\small 
\small \textbf{0.9243}%
} & {\small 
\small \textit{0.9847}%
} \\ 
\multicolumn{1}{c}{} & \multicolumn{1}{c}{} & \multicolumn{1}{c}{\small 576}
& {\small 0.8288} & {\small 0.8418} & {\small 0.8397} & {\small 0.8322} & 
{\small 0.8282} & {\small 0.8239} & {\small 0.8306} & {\small 0.8173} & 
{\small 0.8053} & {\small 
\small \textbf{0.6221}%
} & {\small 
\small \textit{0.6648}%
} \\ 
\multicolumn{1}{c}{} & \multicolumn{1}{c}{\small 0.25} & \multicolumn{1}{c}%
{\small 96} & {\small 1.1618} & {\small 1.1835} & {\small 1.1773} & {\small %
1.1666} & {\small 1.1537} & {\small 1.1428} & {\small 1.1588} & {\small %
1.1484} & {\small 1.2299} & {\small 
\small \textbf{0.9428}%
} & {\small 
\small \textit{1.0275}%
} \\ 
\multicolumn{1}{c}{} & \multicolumn{1}{c}{} & \multicolumn{1}{c}{\small 576}
& {\small 0.9175} & {\small 0.9409} & {\small 0.9348} & {\small 0.9275} & 
{\small 0.9212} & {\small 0.9187} & {\small 0.9334} & {\small 1.1171} & 
{\small 1.1130} & {\small 
\small \textbf{0.6385}%
} & {\small 
\small \textit{0.6611}%
} \\ 
\multicolumn{1}{c}{} & \multicolumn{1}{c}{\small 0.45} & \multicolumn{1}{c}%
{\small 96} & {\small 1.1286} & {\small 1.2150} & {\small 1.2061} & {\small %
1.1982} & {\small 1.1933} & {\small 1.1869} & {\small 1.2186} & {\small %
1.4331} & {\small 1.5385} & {\small 
\small \textbf{0.9382}%
} & {\small 
\small \textit{1.0335}%
} \\ 
\multicolumn{1}{c}{} & \multicolumn{1}{c}{} & \multicolumn{1}{c}{\small 576}
& {\small 0.9708} & {\small 0.9842} & {\small 0.9711} & {\small 0.9672} & 
{\small 0.9627} & {\small 0.9574} & {\small 0.9775} & {\small 1.1124} & 
{\small 1.1647} & {\small 
\small \textbf{0.6415}%
} & {\small 
\small \textit{0.7069}%
} \\ 
\multicolumn{1}{c}{\small -0.4} & \multicolumn{1}{c}{\small -0.25} & 
\multicolumn{1}{c}{\small 96} & {\small 
\small \textit{0.2568}%
} & {\small 0.2841} & {\small 0.2726} & {\small 0.2699} & 
{\small 0.2606} & {\small 0.2515} & {\small 0.2635} & {\small 0.2594} & 
{\small 0.3028} & {\small 
\small \textbf{0.1863}%
} & {\small 0.2671} \\ 
\multicolumn{1}{c}{} & \multicolumn{1}{c}{} & \multicolumn{1}{c}{\small 576}
& {\small 
\small \textit{0.1098}%
} & {\small 0.1249} & {\small 0.1168} & {\small 0.1134} & 
{\small 0.1121} & {\small 0.1276} & {\small 0.1149} & {\small 0.1118} & 
{\small 0.1272} & {\small 
\small \textbf{0.0946}%
} & {\small 0.1339} \\ 
\multicolumn{1}{c}{} & \multicolumn{1}{c}{\small 0} & \multicolumn{1}{c}%
{\small 96} & {\small 0.2498} & {\small 0.2772} & {\small 0.2724} & {\small %
0.2685} & {\small 0.2576} & {\small 
\small \textit{0.2418}%
} & {\small 0.2643} & {\small 0.2560} & {\small 0.2930} & 
{\small 
\small \textbf{0.1792}%
} & {\small 0.2496} \\ 
\multicolumn{1}{c}{} & \multicolumn{1}{c}{} & \multicolumn{1}{c}{\small 576}
& {\small 0.1069} & {\small 0.1278} & {\small 0.1218} & {\small 0.1106} & 
{\small 0.1073} & {\small 
\small \textit{0.1005}%
} & {\small 0.1055} & {\small 0.1104} & {\small 0.1247} & 
{\small 
\small \textbf{0.0867}%
} & {\small 0.1348} \\ 
\multicolumn{1}{c}{} & \multicolumn{1}{c}{\small 0.25} & \multicolumn{1}{c}%
{\small 96} & {\small 0.2490} & {\small 0.2835} & {\small 0.2782} & {\small %
0.2737} & {\small 0.2688} & {\small 0.2630} & {\small 0.3108} & {\small %
0.2580} & {\small 0.2879} & {\small 
\small \textbf{0.1814}%
} & {\small 
\small \textit{0.2473}%
} \\ 
\multicolumn{1}{c}{} & \multicolumn{1}{c}{} & \multicolumn{1}{c}{\small 576}
& {\small 0.1079} & {\small 0.1374} & {\small 0.1326} & {\small 0.1248} & 
{\small 0.1160} & {\small 
\small \textit{0.1053}%
} & {\small 0.1164} & {\small 0.1115} & {\small 0.1239} & 
{\small 
\small \textbf{0.0992}%
} & {\small 0.1340} \\ 
\multicolumn{1}{c}{} & \multicolumn{1}{c}{\small 0.45} & \multicolumn{1}{c}%
{\small 96} & {\small 0.2506} & {\small 0.2833} & {\small 0.2761} & {\small %
0.2619} & {\small 0.2598} & {\small 0.2541} & {\small 0.2759} & {\small %
0.2616} & {\small 0.2506} & {\small 
\small \textbf{0.1836}%
} & {\small 
\small \textit{0.2497}%
} \\ 
\multicolumn{1}{c}{} & \multicolumn{1}{c}{} & \multicolumn{1}{c}{\small 576}
& {\small 
\small \textit{0.1115}%
} & {\small 0.1428} & {\small 0.1411} & {\small 0.1337} & 
{\small 0.1276} & {\small 0.1128} & {\small 0.1221} & {\small 0.1143} & 
{\small 0.1230} & {\small 
\small \textbf{0.0934}%
} & {\small 0.1306} \\ 
\multicolumn{1}{c}{\small 0.4} & \multicolumn{1}{c}{\small -0.25} & 
\multicolumn{1}{c}{\small 96} & {\small 
\small \textit{0.1917}%
} & {\small 0.2350} & {\small 0.2335} & {\small 0.2278} & 
{\small 0.2210} & {\small 0.2172} & {\small 0.2266} & {\small 0.2212} & 
{\small 0.2717} & {\small 
\small \textbf{0.1244}%
} & {\small 0.1984} \\ 
\multicolumn{1}{c}{} & \multicolumn{1}{c}{} & \multicolumn{1}{c}{\small 576}
& {\small 0.0919} & {\small 0.1229} & {\small 0.1189} & {\small 0.1144} & 
{\small 0.1075} & {\small 0.1035} & {\small 0.1020} & {\small 0.1081} & 
{\small 0.1198} & {\small 
\small \textbf{0.0531}%
} & {\small 
\small \textit{0.0909}%
} \\ 
\multicolumn{1}{c}{} & \multicolumn{1}{c}{\small 0} & \multicolumn{1}{c}%
{\small 96} & {\small 0.1946} & {\small 0.2295} & {\small 0.2251} & {\small %
0.2177} & {\small 0.2114} & {\small 0.2001} & {\small 0.2163} & {\small %
0.2203} & {\small 0.2546} & {\small 
\small \textbf{0.1194}%
} & {\small 
\small \textit{0.1939}%
} \\ 
\multicolumn{1}{c}{} & \multicolumn{1}{c}{} & \multicolumn{1}{c}{\small 576}
& {\small 
\small \textit{0.0920}%
} & {\small 0.1246} & {\small 0.1208} & {\small 0.1145} & 
{\small 0.1185} & {\small 0.1099} & {\small 0.1176} & {\small 0.1073} & 
{\small 0.1166} & {\small 
\small \textbf{0.0617}%
} & {\small 0.0946} \\ 
\multicolumn{1}{c}{} & \multicolumn{1}{c}{\small 0.25} & \multicolumn{1}{c}%
{\small 96} & {\small 
\small \textit{0.1960}%
} & {\small 0.2281} & {\small 0.2219} & {\small 0.2163} & 
{\small 0.2267} & {\small 0.2296} & {\small 0.2225} & {\small 0.2209} & 
{\small 0.2482} & {\small 
\small \textbf{0.1273}%
} & {\small 0.2088} \\ 
\multicolumn{1}{c}{} & \multicolumn{1}{c}{} & \multicolumn{1}{c}{\small 576}
& {\small 
\small \textit{0.0922}%
} & {\small 0.1168} & {\small 0.1113} & {\small 0.1087} & 
{\small 0.1055} & {\small 0.1019} & {\small 0.1150} & {\small 0.1076} & 
{\small 0.1158} & {\small 
\small \textbf{0.0566}%
} & {\small 0.0999} \\ 
\multicolumn{1}{c}{} & \multicolumn{1}{c}{\small 0.45} & \multicolumn{1}{c}%
{\small 96} & {\small 
\small \textit{0.1955}%
} & {\small 0.2379} & {\small 0.2318} & {\small 0.2206} & 
{\small 0.2284} & {\small 0.2178} & {\small 0.2174} & {\small 0.2218} & 
{\small 0.2453} & {\small 
\small \textbf{0.1282}%
} & {\small 0.2094} \\ 
\multicolumn{1}{c}{} & \multicolumn{1}{c}{} & \multicolumn{1}{c}{\small 576}
& {\small 0.0926} & {\small 0.1241} & {\small 0.1241} & {\small 0.1179} & 
{\small 0.1055} & {\small 0.1013} & {\small 0.1084} & {\small 0.1089} & 
{\small 0.1149} & {\small 
\small \textbf{0.0476}%
} & {\small 
\small \textit{0.0913}%
} \\ 
\multicolumn{1}{c}{\small 0.9} & \multicolumn{1}{c}{\small -0.25} & 
\multicolumn{1}{c}{\small 96} & {\small 
\small \textit{0.1115}%
} & {\small 0.1385} & {\small 0.1306} & {\small 0.1282} & 
{\small 0.1243} & {\small 0.1210} & {\small 0.1284} & {\small 0.1365} & 
{\small 0.1266} & {\small 
\small \textbf{0.0712}%
} & {\small 0.1160} \\ 
\multicolumn{1}{c}{} & \multicolumn{1}{c}{} & \multicolumn{1}{c}{\small 576}
& {\small 0.0624} & {\small 0.0687} & {\small 0.0660} & {\small 0.0616} & 
{\small 0.0579} & {\small 
\small \textit{0.0548}%
} & {\small 0.0599} & {\small 0.0708} & {\small 0.0600} & 
{\small 
\small \textbf{0.0369}%
} & {\small 0.0649} \\ 
\multicolumn{1}{c}{} & \multicolumn{1}{c}{\small 0} & \multicolumn{1}{c}%
{\small 96} & {\small 
\small \textit{0.1010}%
} & {\small 0.1162} & {\small 0.1123} & {\small 0.1105} & 
{\small 0.1088} & {\small 0.1023} & {\small 0.1187} & {\small 0.1121} & 
{\small 0.1093} & {\small 
\small \textbf{0.0681}%
} & {\small 0.1097} \\ 
\multicolumn{1}{c}{} & \multicolumn{1}{c}{} & \multicolumn{1}{c}{\small 576}
& {\small 0.0602} & {\small 0.0629} & {\small 0.0609} & {\small 0.0549} & 
{\small 
\small \textit{0.0533}%
} & {\small 0.0521} & {\small 0.0577} & {\small 0.0698} & 
{\small 0.0705} & {\small 
\small \textbf{0.0344}%
} & {\small 0.0539} \\ 
\multicolumn{1}{c}{} & \multicolumn{1}{c}{\small 0.25} & \multicolumn{1}{c}%
{\small 96} & {\small 
\small \textit{0.1114}%
} & {\small 0.1324} & {\small 0.1318} & {\small 0.1268} & 
{\small 0.1222} & {\small 0.1198} & {\small 0.1229} & {\small 0.1328} & 
{\small 0.1282} & {\small 
\small \textbf{0.0771}%
} & {\small 0.1142} \\ 
\multicolumn{1}{c}{} & \multicolumn{1}{c}{} & \multicolumn{1}{c}{\small 576}
& {\small 
\small \textit{0.0518}%
} & {\small 0.0695} & {\small 0.0641} & {\small 0.0616} & 
{\small 0.0589} & {\small 0.0552} & {\small 0.0572} & {\small 0.0626} & 
{\small 0.0581} & {\small 
\small \textbf{0.0322}%
} & {\small 0.0530} \\ 
\multicolumn{1}{c}{} & \multicolumn{1}{c}{\small 0.45} & \multicolumn{1}{c}%
{\small 96} & {\small 0.1053} & {\small 0.1284} & {\small 0.1307} & {\small %
0.1284} & {\small 0.1229} & {\small 0.1216} & {\small 0.1179} & {\small %
0.1253} & {\small 0.1215} & {\small 
\small \textbf{0.0725}%
} & {\small 
\small \textit{0.1031}%
} \\ 
\multicolumn{1}{c}{} & \multicolumn{1}{c}{} & \multicolumn{1}{c}{\small 576}
& {\small 
\small \textit{0.0526}%
} & {\small 0.0681} & {\small 0.0635} & {\small 0.0610} & 
{\small 0.0595} & {\small 0.0549} & {\small 0.0581} & {\small 0.0769} & 
{\small 0.0684} & {\small 
\small \textbf{0.0349}%
} & {\small 0.0528} \\ \hline\hline
\end{tabular}

\end{table}%

\newpage

\begin{table}[h]\
\vspace{-7.0cm}%
\caption{{\small Bias estimates of the unadjusted LPR estimator, the feasible
jackknife estimator based on 2,3,4,6,8 non-overlapping (NO) sub-samples, the
feasible jackknife estimator based on 2 moving block (MB) sub-samples, both
versions of the GS estimator, the pre-filtered sieve bootstrap estimator,
the maximum likelihood estimator (MLE) and the pre-whitened (PW) estimator, for the
DGP: ARFIMA(}${\small 0,d}_{{\small 0}}{\small ,1}${\small ) with Gaussian
innovations. The feasible jackknife estimates are evaluated using the iterative procedure described in
Section 5.2.2. The estimates are obtained by setting }${\small \alpha }${\small \ = 0.65 and assuming the model is correctly specified. The lowest values
are \textbf{bold-faced} and the second lowest values are \textit{italicized}.}}%
\label{Table:Bias_ARFIMA(0,d,1)_b_I}%
\renewcommand{\arraystretch}{0.9}%
\setlength{\tabcolsep}{2.3pt}%

\medskip



\end{table}%

\newpage

\begin{table}[h]\
\vspace{-7.0cm}%
\caption{{\small RMSE estimates of the unadjusted LPR estimator, the feasible
jackknife estimator based on 2,3,4,6,8 non-overlapping (NO) sub-samples, the
feasible jackknife estimator based on 2 moving block (MB) sub-samples, both
versions of the GS estimator, the pre-filtered sieve bootstrap estimator,
the maximum likelihood estimator (MLE) and the pre-whitened (PW) estimator, for the
DGP: ARFIMA(}${\small 0,d}_{{\small 0}}{\small ,1}${\small ) with Gaussian
innovations. The feasible jackknife estimates are evaluated using the iterative procedure described in
Section 5.2.2. The estimates are obtained by setting }${\small \alpha }${\small \ = 0.65 and assuming the model is correctly specified. The lowest values
are \textbf{bold-faced} and the second lowest values are \textit{italicized}.}}%
\label{Table:MSE_ARFIMA(0,d,1)_b_I}%
\renewcommand{\arraystretch}{0.9}%
\setlength{\tabcolsep}{4pt}%

\medskip

\begin{tabular}{lllccccccccccc}
\hline\hline
&  &  &  &  &  &  &  &  &  &  &  &  &  \\ 
${\small \theta }_{{\small 0}}$ & ${\small d}_{{\small 0}}$ & ${\small n}$ & 
$\widehat{{\small d}}_{{\small n}}$ & $\widehat{{\small d}}_{{\small J,2}}^{%
{\small NO}}$ & $\widehat{{\small d}}_{{\small J,3}}^{{\small NO}}$ & $%
\widehat{{\small d}}_{{\small J,4}}^{{\small NO}}$ & $\widehat{{\small d}}_{%
{\small J,6}}^{{\small NO}}$ & $\widehat{{\small d}}_{{\small J,8}}^{{\small %
NO}}$ & $\widehat{{\small d}}_{{\small J,2}}^{{\small MB}}$ & $\widehat{%
{\small d}}_{{\small 1}}^{{\small GS}}$ & $\widehat{{\small d}}^{{\small PFSB%
}}$ & $\widehat{{\small d}}^{{\small MLE}}$ & $\widehat{{\small d}}^{{\small %
PW}}$ \\ \hline
\multicolumn{1}{c}{} &  &  &  &  &  &  &  &  &  &  &  &  &  \\ 
\multicolumn{1}{c}{\small -0.9} & \multicolumn{1}{c}{\small -0.25} & 
\multicolumn{1}{c}{\small 96} & {\small 
\small \textit{0.6233}%
} & {\small 0.6561} & {\small 0.6463} & {\small 0.6405} & 
{\small 0.6348} & {\small 0.6319} & {\small 0.6653} & {\small 0.6385} & 
{\small 0.8247} & {\small 
\small \textbf{0.5982}%
} & {\small 0.6663} \\ 
\multicolumn{1}{c}{} & \multicolumn{1}{c}{} & \multicolumn{1}{c}{\small 576}
& {\small 0.4794} & {\small 0.4980} & {\small 0.4927} & {\small 0.4854} & 
{\small 0.4771} & {\small 
\small \textit{0.4727}%
} & {\small 0.4785} & {\small 0.4888} & {\small 0.4977} & 
{\small 
\small \textbf{0.4406}%
} & {\small 0.4858} \\ 
\multicolumn{1}{c}{} & \multicolumn{1}{c}{\small 0} & \multicolumn{1}{c}%
{\small 96} & {\small 
\small \textit{0.7361}%
} & {\small 0.8327} & {\small 0.8291} & {\small 0.8247} & 
{\small 0.8189} & {\small 0.8006} & {\small 0.8114} & {\small 0.8413} & 
{\small 0.8510} & {\small 
\small \textbf{0.7234} %
}& {\small 0.7604} \\ 
\multicolumn{1}{c}{} & \multicolumn{1}{c}{} & \multicolumn{1}{c}{\small 576}
& {\small 
\small \textit{0.5687}%
} & {\small 0.6371} & {\small 0.6034} & {\small 0.6152} & 
{\small 0.6038} & 0{\small .5972} & {\small 0.6241} & {\small 0.5838} & 
{\small 0.5942} & {\small 
\small \textbf{0.5621}%
} & {\small 0.6329} \\ 
\multicolumn{1}{c}{} & \multicolumn{1}{c}{\small 0.25} & \multicolumn{1}{c}%
{\small 96} & {\small 0.7996} & {\small 0.8238} & {\small 0.8186} & {\small %
0.8013} & {\small 0.7926} & {\small 
\small \textit{0.7884}%
} & {\small 0.8108} & {\small 0.8268} & {\small 0.8430} & 
{\small 
\small \textbf{0.7429}%
} & {\small 0.8215} \\ 
\multicolumn{1}{c}{} & \multicolumn{1}{c}{} & \multicolumn{1}{c}{\small 576}
& {\small 0.5951} & {\small 0.6339} & {\small 0.6257} & {\small 0.6108} & 
{\small 0.6075} & {\small 
\small \textit{0.5922}%
} & {\small 0.6249} & {\small 0.6219} & {\small 0.6590} & 
{\small 
\small \textbf{0.5513}%
} & {\small 0.6359} \\ 
\multicolumn{1}{c}{} & \multicolumn{1}{c}{\small 0.45} & \multicolumn{1}{c}%
{\small 96} & {\small 0.8219} & {\small 0.8562} & {\small 0.8414} & {\small %
0.8393} & {\small 0.8242} & {\small 
\small \textit{0.8107}%
} & {\small 0.8233} & {\small 0.8590} & {\small 0.8327} & 
{\small 
\small \textbf{0.8107}%
} & {\small 0.8353} \\ 
\multicolumn{1}{c}{} & \multicolumn{1}{c}{} & \multicolumn{1}{c}{\small 576}
& {\small 
\small \textit{0.5950}%
} & {\small 0.6384} & {\small 0.6279} & {\small 0.6211} & 
{\small 0.6184} & {\small 0.6124} & {\small 0.6589} & {\small 0.6298} & 
{\small 0.6487} & {\small 
\small \textbf{0.5918}%
} & {\small 0.6337} \\ 
\multicolumn{1}{c}{\small -0.4} & \multicolumn{1}{c}{\small -0.25} & 
\multicolumn{1}{c}{\small 96} & {\small 0.2376} & {\small 0.2517} & {\small %
0.2441} & {\small 0.2384} & {\small 0.2300} & {\small 
\small \textit{0.2283}%
} & {\small 0.2566} & {\small 0.2488} & {\small 0.3103} & 
{\small 
\small \textbf{0.2173}%
} & {\small 0.2923} \\ 
\multicolumn{1}{c}{} & \multicolumn{1}{c}{} & \multicolumn{1}{c}{\small 576}
& {\small 
\small \textit{0.1037}%
} & {\small 0.1352} & {\small 0.1239} & {\small 0.1192} & 
{\small 0.1116} & {\small 0.1085} & {\small 0.1232} & {\small 0.1098} & 
{\small 0.1254} & {\small 
\small \textbf{0.0954}%
} & {\small 0.1326} \\ 
\multicolumn{1}{c}{} & \multicolumn{1}{c}{\small 0} & \multicolumn{1}{c}%
{\small 96} & {\small 0.2497} & {\small 0.2743} & {\small 0.2662} & {\small %
0.2545} & {\small 0.2449} & {\small 
\small \textit{0.2406}%
} & {\small 0.2457} & {\small 0.2559} & {\small 0.2883} & 
{\small 
\small \textbf{0.2284}%
} & {\small 0.2873} \\ 
\multicolumn{1}{c}{} & \multicolumn{1}{c}{} & \multicolumn{1}{c}{\small 576}
& {\small 0.1070} & {\small 0.1366} & {\small 0.1245} & {\small 0.1184} & 
{\small 0.1105} & {\small 
\small \textit{0.1044}%
} & {\small 0.1193} & {\small 0.1105} & {\small 0.1215} & 
{\small 
\small \textbf{0.1020}%
} & {\small 0.1478} \\ 
\multicolumn{1}{c}{} & \multicolumn{1}{c}{\small 0.25} & \multicolumn{1}{c}%
{\small 96} & {\small 0.2527} & {\small 0.2719} & {\small 0.2636} & {\small %
0.2591} & {\small 0.2530} & {\small 
\small \textit{0.2418}%
} & {\small 0.2495} & {\small 0.2648} & {\small 0.2782} & 
{\small 
\small \textbf{0.2305}%
} & {\small 0.2995} \\ 
\multicolumn{1}{c}{} & \multicolumn{1}{c}{} & \multicolumn{1}{c}{\small 576}
& {\small 
\small \textit{0.1068}%
} & {\small 0.1384} & {\small 0.1315} & {\small 0.1251} & 
{\small 0.1144} & {\small 0.1084} & {\small 0.1267} & {\small 0.1103} & 
{\small 0.1199} & {\small 
\small \textbf{0.1032}%
} & {\small 0.1526} \\ 
\multicolumn{1}{c}{} & \multicolumn{1}{c}{\small 0.45} & \multicolumn{1}{c}%
{\small 96} & {\small 
\small \textit{0.2496}%
} & {\small 0.2737} & {\small 0.2608} & {\small 0.2549} & 
{\small 0.2521} & {\small 0.2512} & {\small 0.2528} & {\small 0.2518} & 
{\small 0.2725} & {\small 
\small \textbf{0.2311}%
} & {\small 0.2880} \\ 
\multicolumn{1}{c}{} & \multicolumn{1}{c}{} & \multicolumn{1}{c}{\small 576}
& {\small 
\small \textit{0.1047}%
} & {\small 0.1392} & {\small 0.1384} & {\small 0.1300} & 
{\small 0.1243} & {\small 0.1154} & {\small 0.1276} & {\small 0.1098} & 
{\small 0.1188} & {\small 
\small \textbf{0.1044}%
} & {\small 0.1539} \\ 
\multicolumn{1}{c}{\small 0.4} & \multicolumn{1}{c}{\small -0.25} & 
\multicolumn{1}{c}{\small 96} & {\small 
\small \textit{0.1982}%
} & {\small 0.2348} & {\small 0.2217} & {\small 0.2106} & 
{\small 0.2058} & {\small 0.2007} & {\small 0.2245} & {\small 0.2212} & 
{\small 0.2809} & {\small 
\small \textbf{0.1729}%
} & {\small 0.2035} \\ 
\multicolumn{1}{c}{} & \multicolumn{1}{c}{} & \multicolumn{1}{c}{\small 576}
& {\small 
\small \textit{0.0932}%
} & {\small 0.1079} & {\small 0.1155} & {\small 0.1249} & 
{\small 0.1163} & {\small 0.1096} & {\small 0.0972} & {\small 0.1078} & 
{\small 0.1268} & {\small 
\small \textbf{0.0883}%
} & {\small 0.1185} \\ 
\multicolumn{1}{c}{} & \multicolumn{1}{c}{\small 0} & \multicolumn{1}{c}%
{\small 96} & {\small 
\small \textit{0.1944}%
} & {\small 0.2315} & {\small 0.2242} & {\small 0.2215} & 
{\small 0.2194} & {\small 0.2072} & {\small 0.2159} & {\small 0.2203} & 
{\small 0.2701} & {\small 
\small \textbf{0.1637}%
} & {\small 0.1927} \\ 
\multicolumn{1}{c}{} & \multicolumn{1}{c}{} & \multicolumn{1}{c}{\small 576}
& {\small 
\small \textit{0.0919}%
} & {\small 0.1232} & {\small 0.1119} & {\small 0.1076} & 
{\small 0.1026} & {\small 0.0944} & {\small 0.1036} & {\small 0.1072} & 
{\small 0.1243} & {\small 
\small \textbf{0.0906}%
} & {\small 0.1053} \\ 
\multicolumn{1}{c}{} & \multicolumn{1}{c}{\small 0.25} & \multicolumn{1}{c}%
{\small 96} & {\small 0.1947} & {\small 0.2224} & {\small 0.2153} & {\small %
0.2018} & {\small 0.1982} & {\small 
\small \textit{0.1902}%
} & {\small 0.2247} & {\small 0.2213} & {\small 0.2663} & 
{\small 
\small \textbf{0.1625}%
} & {\small 0.1901} \\ 
\multicolumn{1}{c}{} & \multicolumn{1}{c}{} & \multicolumn{1}{c}{\small 576}
& {\small 
\small \textit{0.0925}%
} & {\small 0.1105} & {\small 0.1151} & {\small 0.1172} & 
{\small 0.1069} & {\small 0.1010} & {\small 0.1108} & {\small 0.1077} & 
{\small 0.1238} & {\small 
\small \textbf{0.0853}%
} & {\small 0.1136} \\ 
\multicolumn{1}{c}{} & \multicolumn{1}{c}{\small 0.45} & \multicolumn{1}{c}%
{\small 96} & {\small 0.1964} & {\small 0.2247} & {\small 0.2172} & {\small %
0.2033} & {\small 0.1946} & {\small 
\small \textit{0.1916}%
} & {\small 0.2265} & {\small 0.2223} & {\small 0.2643} & 
{\small 
\small \textbf{0.1639}%
} & {\small 0.2084} \\ 
\multicolumn{1}{c}{} & \multicolumn{1}{c}{} & \multicolumn{1}{c}{\small 576}
& {\small 
\small \textit{0.0943}%
} & {\small 0.1221} & {\small 0.1134} & {\small 0.1076} & 
{\small 0.1016} & {\small 0.0992} & {\small 0.1157} & {\small 0.1090} & 
{\small 0.1229} & {\small 
\small \textbf{0.0927}%
} & {\small 0.1120} \\ 
\multicolumn{1}{c}{\small 0.9} & \multicolumn{1}{c}{\small -0.25} & 
\multicolumn{1}{c}{\small 96} & {\small 
\small \textit{0.0886}%
} & {\small 0.1213} & {\small 0.1189} & {\small 0.1083} & 
{\small 0.1035} & {\small 0.0953} & {\small 0.1166} & {\small 0.1105} & 
{\small 0.1253} & {\small 
\small \textbf{0.0834}%
} & {\small 0.1117} \\ 
\multicolumn{1}{c}{} & \multicolumn{1}{c}{} & \multicolumn{1}{c}{\small 576}
& {\small 
\small \textit{0.0344}%
} & {\small 0.0695} & {\small 0.0637} & {\small 0.0559} & 
{\small 0.0521} & {\small 0.0486} & {\small 0.0578} & {\small 0.0561} & 
{\small 0.0589} & {\small 
\small \textbf{0.0315}%
} & {\small 0.0538} \\ 
\multicolumn{1}{c}{} & \multicolumn{1}{c}{\small 0} & \multicolumn{1}{c}%
{\small 96} & {\small 
\small \textit{0.0863}%
} & {\small 0.1136} & {\small 0.1120} & {\small 0.1084} & 
{\small 0.1043} & {\small 0.0997} & {\small 0.1045} & {\small 0.1209} & 
{\small 0.1288} & {\small 
\small \textbf{0.0807}%
} & {\small 0.1013} \\ 
\multicolumn{1}{c}{} & \multicolumn{1}{c}{} & \multicolumn{1}{c}{\small 576}
& {\small 
\small \textit{0.0312}%
} & {\small 0.0613} & {\small 0.0595} & {\small 0.0546} & 
{\small 0.0504} & {\small 0.0488} & {\small 0.0491} & {\small 0.0674} & 
{\small 0.0729} & {\small 
\small \textbf{0.0299}%
} & {\small 0.0526} \\ 
\multicolumn{1}{c}{} & \multicolumn{1}{c}{\small 0.25} & \multicolumn{1}{c}%
{\small 96} & {\small 
\small \textit{0.0865}%
} & {\small 0.1224} & {\small 0.1210} & {\small 0.1171} & 
{\small 0.1123} & {\small 0.1161} & {\small 0.1140} & {\small 0.1287} & 
{\small 0.1286} & {\small 
\small \textbf{0.0818}%
} & {\small 0.1157} \\ 
\multicolumn{1}{c}{} & \multicolumn{1}{c}{} & \multicolumn{1}{c}{\small 576}
& {\small 
\small \textit{0.0304}%
} & {\small 0.0628} & {\small 0.0654} & {\small 0.0613} & 
{\small 0.0592} & {\small 0.0568} & {\small 0.0606} & {\small 0.0692} & 
{\small 0.0706} & {\small 
\small \textbf{0.0253}%
} & {\small 0.0530} \\ 
\multicolumn{1}{c}{} & \multicolumn{1}{c}{\small 0.45} & \multicolumn{1}{c}%
{\small 96} & {\small 
\small \textit{0.0885}%
} & {\small 0.1262} & {\small 0.1241} & {\small 0.1109} & 
{\small 0.1185} & {\small 0.1136} & {\small 0.1175} & {\small 0.1181} & 
{\small 0.1197} & {\small 
\small \textbf{0.0824}%
} & {\small 0.1128} \\ 
\multicolumn{1}{c}{} & \multicolumn{1}{c}{} & \multicolumn{1}{c}{\small 576}
& {\small 
\small \textit{0.0378}%
} & {\small 0.0630} & {\small 0.0581} & {\small 0.0542} & 
{\small 0.0473} & {\small 0.0495} & {\small 0.0536} & {\small 0.0647} & 
{\small 0.0695} & {\small 
\small \textbf{0.0271}%
} & {\small 0.0563} \\ \hline\hline
\end{tabular}

\end{table}%

\pagebreak

\begin{table}[h]\
\vspace{-3cm}%
\caption{{\small Bias estimates of the unadjusted LPR estimator, the feasible GS
estimator and the pre-filtered sieve bootstrap estimator, for the DGP:
ARFIMA(}${\small 1,d}_{{\small 0}}{\small ,0}${\small )  with Gaussian
innovations. The bias estimates of the feasible jackknife estimator based on
2,3,4,6,8 non-overlapping (NO) sub-samples, the feasible jackknife estimator
based on 2 moving block (MB) sub-samples, the maximum likelihood estimator (MLE)
and the pre-whitened (PW) estimator are obtained under the misspecified model:
ARFIMA(}${\small 0,d,0}${\small ) using the approach described in Section
5.3. The estimates are obtained by setting }${\small \alpha }${\small \ =
0.65. The lowest values are \textbf{bold-faced} and the second lowest values
are \textit{italicized}.}}%
\label{Table:Bias_ARFIMA(1,d,0)_mis}%
\renewcommand{\arraystretch}{0.9}%
\setlength{\tabcolsep}{3pt}%

\medskip

\begin{tabular}{lllccccccccccc}
\hline\hline
&  &  &  &  &  &  &  &  &  &  &  &  &  \\ 
${\small \phi }_{{\small 0}}$ & ${\small d}_{{\small 0}}$ & ${\small n}$ & $%
\widehat{{\small d}}_{{\small n}}$ & $\widehat{{\small d}}_{{\small J,2}}^{%
{\small NO}}$ & $\widehat{{\small d}}_{{\small J,3}}^{{\small NO}}$ & $%
\widehat{{\small d}}_{{\small J,4}}^{{\small NO}}$ & $\widehat{{\small d}}_{%
{\small J,6}}^{{\small NO}}$ & $\widehat{{\small d}}_{{\small J,8}}^{{\small %
NO}}$ & $\widehat{{\small d}}_{{\small J,2}}^{{\small MB}}$ & $\widehat{%
{\small d}}_{{\small 1}}^{{\small GS}}$ & $\widehat{{\small d}}^{{\small PFSB%
}}$ & $\widehat{{\small d}}^{{\small MLE}}$ & $\widehat{{\small d}}^{{\small %
PW}}$ \\ \hline
\multicolumn{1}{c}{} &  &  &  &  &  &  &  &  &  &  &  &  &  \\ 
\multicolumn{1}{c}{\small -0.9} & \multicolumn{1}{c}{\small -0.25} & 
\multicolumn{1}{c}{\small 96} & \multicolumn{1}{r}{\small 0.8145} & 
\multicolumn{1}{r}{\small 0.8566} & \multicolumn{1}{r}{\small 0.8595} & 
\multicolumn{1}{r}{\small 0.8643} & \multicolumn{1}{r}{\small 0.8681} & 
\multicolumn{1}{r}{\small 0.8734} & \multicolumn{1}{r}{\small 0.8540} & 
\multicolumn{1}{r}{{\small 
\small \textit{0.8002}%
}} & \multicolumn{1}{r}{{\small 
\small \textbf{0.7908}%
}} & \multicolumn{1}{r}{\small 0.8195} & \multicolumn{1}{r}%
{\small 0.8318} \\ 
\multicolumn{1}{c}{} & \multicolumn{1}{c}{} & \multicolumn{1}{c}{\small 576}
& \multicolumn{1}{r}{\small 0.5945} & \multicolumn{1}{r}{\small 0.6389} & 
\multicolumn{1}{r}{\small 0.6422} & \multicolumn{1}{r}{\small 0.6490} & 
\multicolumn{1}{r}{\small 0.6553} & \multicolumn{1}{r}{\small 0.6608} & 
\multicolumn{1}{r}{\small 0.6321} & \multicolumn{1}{r}{{\small 
\small \textbf{0.5724}%
}} & \multicolumn{1}{r}{{\small 
\small \textit{0.5898}%
}} & \multicolumn{1}{r}{\small 0.6104} & \multicolumn{1}{r}%
{\small 0.6444} \\ 
\multicolumn{1}{c}{} & \multicolumn{1}{c}{\small 0.25} & \multicolumn{1}{c}%
{\small 96} & \multicolumn{1}{r}{\small 0.7752} & \multicolumn{1}{r}{\small %
0.8075} & \multicolumn{1}{r}{\small 0.8134} & \multicolumn{1}{r}{\small %
0.8217} & \multicolumn{1}{r}{\small 0.8246} & \multicolumn{1}{r}{\small %
0.8392} & \multicolumn{1}{r}{\small 0.8273} & \multicolumn{1}{r}{{\small 
\small \textbf{0.7673}%
}} & \multicolumn{1}{r}{{\small 
\small \textit{0.7685}%
}} & \multicolumn{1}{r}{\small 0.7945} & \multicolumn{1}{r}%
{\small 0.8258} \\ 
\multicolumn{1}{c}{} & \multicolumn{1}{c}{} & \multicolumn{1}{c}{\small 576}
& \multicolumn{1}{r}{\small 0.5883} & \multicolumn{1}{r}{\small 0.6464} & 
\multicolumn{1}{r}{\small 0.6429} & \multicolumn{1}{r}{\small 0.6326} & 
\multicolumn{1}{r}{\small 0.6255} & \multicolumn{1}{r}{\small 0.6233} & 
\multicolumn{1}{r}{\small 0.6356} & \multicolumn{1}{r}{{\small 
\small \textit{0.5716}%
}} & \multicolumn{1}{r}{{\small 
\small \textbf{0.5638}%
}} & \multicolumn{1}{r}{\small 0.5943} & \multicolumn{1}{r}%
{\small 0.6305} \\ 
\multicolumn{1}{c}{} & \multicolumn{1}{c}{\small 0.45} & \multicolumn{1}{c}%
{\small 96} & \multicolumn{1}{r}{\small 0.7006} & \multicolumn{1}{r}{\small %
0.7426} & \multicolumn{1}{r}{\small 0.7482} & \multicolumn{1}{r}{\small %
0.7538} & \multicolumn{1}{r}{\small 0.7599} & \multicolumn{1}{r}{\small %
0.7646} & \multicolumn{1}{r}{\small 0.7520} & \multicolumn{1}{r}{{\small 
\small \textit{0.6946}%
}} & \multicolumn{1}{r}{{\small 
\small \textbf{0.6705}%
}} & \multicolumn{1}{r}{\small 0.7895} & \multicolumn{1}{r}%
{\small 0.8377} \\ 
\multicolumn{1}{c}{} & \multicolumn{1}{c}{} & \multicolumn{1}{c}{\small 576}
& \multicolumn{1}{r}{\small 0.5748} & \multicolumn{1}{r}{\small 0.6105} & 
\multicolumn{1}{r}{\small 0.6154} & \multicolumn{1}{r}{\small 0.6203} & 
\multicolumn{1}{r}{\small 0.6286} & \multicolumn{1}{r}{\small 0.6397} & 
\multicolumn{1}{r}{\small 0.6118} & \multicolumn{1}{r}{{\small 
\small \textit{0.5659}%
}} & \multicolumn{1}{r}{{\small 
\small \textbf{0.5451}%
}} & \multicolumn{1}{r}{\small 0.5986} & \multicolumn{1}{r}%
{\small 0.6390} \\ 
\multicolumn{1}{c}{\small -0.4} & \multicolumn{1}{c}{\small -0.25} & 
\multicolumn{1}{c}{\small 96} & \multicolumn{1}{r}{\small 0.1756} & 
\multicolumn{1}{r}{\small 0.1543} & \multicolumn{1}{r}{\small 0.1626} & 
\multicolumn{1}{r}{\small 0.1691} & \multicolumn{1}{r}{\small 0.1753} & 
\multicolumn{1}{r}{\small 0.1804} & \multicolumn{1}{r}{\small 0.1629} & 
\multicolumn{1}{r}{{\small 
\small \textbf{0.1367}%
}} & \multicolumn{1}{r}{{\small 
\small \textit{0.1435}%
}} & \multicolumn{1}{r}{\small 0.1947} & \multicolumn{1}{r}%
{\small 0.2021} \\ 
\multicolumn{1}{c}{} & \multicolumn{1}{c}{} & \multicolumn{1}{c}{\small 576}
& \multicolumn{1}{r}{\small 0.0607} & \multicolumn{1}{r}{\small 0.0419} & 
\multicolumn{1}{r}{\small 0.0484} & \multicolumn{1}{r}{\small 0.0507} & 
\multicolumn{1}{r}{\small 0.0541} & \multicolumn{1}{r}{\small 0.0585} & 
\multicolumn{1}{r}{\small 0.0426} & \multicolumn{1}{r}{{\small 
\small \textit{0.0304}%
}} & \multicolumn{1}{r}{{\small 
\small \textbf{0.0286}%
}} & \multicolumn{1}{r}{\small 0.0743} & \multicolumn{1}{r}%
{\small 0.1035} \\ 
\multicolumn{1}{c}{} & \multicolumn{1}{c}{\small 0.25} & \multicolumn{1}{c}%
{\small 96} & \multicolumn{1}{r}{\small 0.1629} & \multicolumn{1}{r}{\small %
0.1536} & \multicolumn{1}{r}{\small 0.1588} & \multicolumn{1}{r}{\small %
0.1614} & \multicolumn{1}{r}{\small 0.1669} & \multicolumn{1}{r}{\small %
0.1760} & \multicolumn{1}{r}{\small 0.1512} & \multicolumn{1}{r}{{\small 
\small \textit{0.1329}%
}} & \multicolumn{1}{r}{{\small 
\small \textbf{0.1294}%
}} & \multicolumn{1}{r}{\small 0.1840} & \multicolumn{1}{r}%
{\small 0.2064} \\ 
\multicolumn{1}{c}{} & \multicolumn{1}{c}{} & \multicolumn{1}{c}{\small 576}
& \multicolumn{1}{r}{\small 0.0571} & \multicolumn{1}{r}{\small 0.0348} & 
\multicolumn{1}{r}{\small 0.0376} & \multicolumn{1}{r}{\small 0.0491} & 
\multicolumn{1}{r}{\small 0.0527} & \multicolumn{1}{r}{\small 0.0598} & 
\multicolumn{1}{r}{\small 0.0506} & \multicolumn{1}{r}{{\small 
\small \textit{0.0289}%
}} & \multicolumn{1}{r}{{\small 
\small \textbf{0.0251}%
}} & \multicolumn{1}{r}{\small 0.0587} & \multicolumn{1}{r}%
{\small 0.0996} \\ 
\multicolumn{1}{c}{} & \multicolumn{1}{c}{\small 0.45} & \multicolumn{1}{c}%
{\small 96} & \multicolumn{1}{r}{\small 0.1653} & \multicolumn{1}{r}{\small %
0.1546} & \multicolumn{1}{r}{\small 0.1592} & \multicolumn{1}{r}{\small %
0.1648} & \multicolumn{1}{r}{\small 0.1688} & \multicolumn{1}{r}{\small %
0.1735} & \multicolumn{1}{r}{\small 0.1648} & \multicolumn{1}{r}{{\small 
\small \textit{0.1400}%
}} & \multicolumn{1}{r}{{\small 
\small \textbf{0.1277}%
}} & \multicolumn{1}{r}{\small 0.1886} & \multicolumn{1}{r}%
{\small 0.2145} \\ 
\multicolumn{1}{c}{} & \multicolumn{1}{c}{} & \multicolumn{1}{c}{\small 576}
& \multicolumn{1}{r}{\small 0.0625} & \multicolumn{1}{r}{\small 0.0445} & 
\multicolumn{1}{r}{\small 0.0498} & \multicolumn{1}{r}{\small 0.0572} & 
\multicolumn{1}{r}{\small 0.0645} & \multicolumn{1}{r}{\small 0.0686} & 
\multicolumn{1}{r}{\small 0.0749} & \multicolumn{1}{r}{{\small 
\small \textit{0.0359}%
}} & \multicolumn{1}{r}{{\small 
\small \textbf{0.0261}%
}} & \multicolumn{1}{r}{\small 0.0534} & \multicolumn{1}{r}%
{\small 0.1031} \\ 
\multicolumn{1}{c}{\small 0.4} & \multicolumn{1}{c}{\small -0.25} & 
\multicolumn{1}{c}{\small 96} & \multicolumn{1}{r}{\small -0.0363} & 
\multicolumn{1}{r}{\small -0.0286} & \multicolumn{1}{r}{\small -0.0359} & 
\multicolumn{1}{r}{\small -0.0397} & \multicolumn{1}{r}{\small -0.0462} & 
\multicolumn{1}{r}{\small -0.0481} & \multicolumn{1}{r}{\small -0.0388} & 
\multicolumn{1}{r}{{\small 
\small \textbf{-0.0047}%
}} & \multicolumn{1}{r}{{\small 
\small \textit{-0.0147}%
}} & \multicolumn{1}{r}{\small -0.0385} & \multicolumn{1}{r}%
{\small -0.0490} \\ 
\multicolumn{1}{c}{} & \multicolumn{1}{c}{} & \multicolumn{1}{c}{\small 576}
& \multicolumn{1}{r}{\small -0.0056*} & \multicolumn{1}{r}{\small -0.0105} & 
\multicolumn{1}{r}{\small -0.0128} & \multicolumn{1}{r}{\small -0.0156} & 
\multicolumn{1}{r}{\small -0.0184} & \multicolumn{1}{r}{\small -0.0229} & 
\multicolumn{1}{r}{\small -0.0154} & \multicolumn{1}{r}{{\small 
\small \textit{-0.0056}
}} & \multicolumn{1}{r}{{\small 
\small \textbf{-0.0004}%
}} & \multicolumn{1}{r}{\small -0.0163} & \multicolumn{1}{r}%
{\small -0.0258} \\ 
\multicolumn{1}{c}{} & \multicolumn{1}{c}{\small 0.25} & \multicolumn{1}{c}%
{\small 96} & \multicolumn{1}{r}{\small -0.0559} & \multicolumn{1}{r}{\small %
-0.0267} & \multicolumn{1}{r}{\small -0.0293} & \multicolumn{1}{r}{\small %
-0.0319} & \multicolumn{1}{r}{\small -0.0350} & \multicolumn{1}{r}{\small %
-0.0424} & \multicolumn{1}{r}{\small -0.0372} & \multicolumn{1}{r}{{\small 
\small \textbf{-0.0068}%
}} & \multicolumn{1}{r}{{\small 
\small \textit{-0.0153}%
}} & \multicolumn{1}{r}{\small -0.0372} & \multicolumn{1}{r}%
{\small -0.0593} \\ 
\multicolumn{1}{c}{} & \multicolumn{1}{c}{} & \multicolumn{1}{c}{\small 576}
& \multicolumn{1}{r}{\small -0.0115} & \multicolumn{1}{r}{\small -0.0104} & 
\multicolumn{1}{r}{\small -0.0131} & \multicolumn{1}{r}{\small -0.0177} & 
\multicolumn{1}{r}{\small -0.0176} & \multicolumn{1}{r}{\small -0.0239} & 
\multicolumn{1}{r}{\small -0.0168} & \multicolumn{1}{r}{{\small 
\small \textbf{0.0017}%
}} & \multicolumn{1}{r}{{\small 
\small \textit{-0.0027}%
}} & \multicolumn{1}{r}{\small -0.0095} & \multicolumn{1}{r}%
{\small -0.0269} \\ 
\multicolumn{1}{c}{} & \multicolumn{1}{c}{\small 0.45} & \multicolumn{1}{c}%
{\small 96} & \multicolumn{1}{r}{\small -0.0501} & \multicolumn{1}{r}{\small %
-0.0279} & \multicolumn{1}{r}{\small -0.0249} & \multicolumn{1}{r}{\small %
-0.0325} & \multicolumn{1}{r}{\small -0.0381} & \multicolumn{1}{r}{\small %
-0.0458} & \multicolumn{1}{r}{\small -0.0294} & \multicolumn{1}{r}{{\small 
\small \textbf{0.0032}%
}} & \multicolumn{1}{r}{{\small 
\small \textit{-0.0111}%
}} & \multicolumn{1}{r}{\small -0.0314} & \multicolumn{1}{r}%
{\small -0.0627} \\ 
\multicolumn{1}{c}{} & \multicolumn{1}{c}{} & \multicolumn{1}{c}{\small 576}
& \multicolumn{1}{r}{{\small 
\small \textit{-0.0058}%
}} & \multicolumn{1}{r}{\small -0.0115} & \multicolumn{1}{r}%
{\small -0.0157} & \multicolumn{1}{r}{\small -0.0186} & \multicolumn{1}{r}%
{\small -0.0195} & \multicolumn{1}{r}{\small -0.0210} & \multicolumn{1}{r}%
{\small -0.0153} & \multicolumn{1}{r}{\small 0.0089} & \multicolumn{1}{r}{%
{\small 
\small \textbf{0.0004}%
}} & \multicolumn{1}{r}{\small -0.0088} & \multicolumn{1}{r}%
{\small -0.0251} \\ 
\multicolumn{1}{c}{\small 0.9} & \multicolumn{1}{c}{\small -0.25} & 
\multicolumn{1}{c}{\small 96} & \multicolumn{1}{r}{\small -0.0291} & 
\multicolumn{1}{r}{{\small 
\small \textbf{-0.0123}%
}} & \multicolumn{1}{r}{{\small 
\small \textit{-0.0129}%
}} & \multicolumn{1}{r}{\small -0.0145} & \multicolumn{1}{r}%
{\small -0.0193} & \multicolumn{1}{r}{\small -0.0224} & \multicolumn{1}{r}%
{\small -0.0148} & \multicolumn{1}{r}{\small -0.0175} & \multicolumn{1}{r}%
{\small -0.0162} & \multicolumn{1}{r}{\small -0.0156} & \multicolumn{1}{r}%
{\small -0.0339} \\ 
\multicolumn{1}{c}{} & \multicolumn{1}{c}{} & \multicolumn{1}{c}{\small 576}
& \multicolumn{1}{r}{\small -0.0058} & \multicolumn{1}{r}{{\small 
\small \textbf{-0.0020}%
}} & \multicolumn{1}{r}{\small -0.0028} & \multicolumn{1}{r}%
{\small -0.0041} & \multicolumn{1}{r}{\small -0.0066} & \multicolumn{1}{r}%
{\small -0.0075} & \multicolumn{1}{r}{\small -0.0049} & \multicolumn{1}{r}%
{\small -0.0034} & \multicolumn{1}{r}{{\small 
\small \textit{-0.0023}%
}} & \multicolumn{1}{r}{\small -0.0037} & \multicolumn{1}{r}%
{\small -0.0142} \\ 
\multicolumn{1}{c}{} & \multicolumn{1}{c}{\small 0.25} & \multicolumn{1}{c}%
{\small 96} & \multicolumn{1}{r}{\small -0.0249} & \multicolumn{1}{r}{%
{\small 
\small \textbf{-0.0106}%
}} & \multicolumn{1}{r}{\small -0.0132} & \multicolumn{1}{r}%
{\small -0.0220} & \multicolumn{1}{r}{\small -0.0241} & \multicolumn{1}{r}%
{\small -0.0269} & \multicolumn{1}{r}{\small -0.0153} & \multicolumn{1}{r}%
{\small -0.0162} & \multicolumn{1}{r}{{\small 
\small \textit{-0.0117}%
}} & \multicolumn{1}{r}{\small -0.0163} & \multicolumn{1}{r}%
{\small -0.0326} \\ 
\multicolumn{1}{c}{} & \multicolumn{1}{c}{} & \multicolumn{1}{c}{\small 576}
& \multicolumn{1}{r}{\small -0.0044} & \multicolumn{1}{r}{{\small 
\small \textit{-0.0021}%
}} & \multicolumn{1}{r}{\small -0.0047} & \multicolumn{1}{r}%
{\small -0.0055} & \multicolumn{1}{r}{\small -0.0073} & \multicolumn{1}{r}%
{\small -0.0084} & \multicolumn{1}{r}{\small -0.0062} & \multicolumn{1}{r}%
{\small -0.0032} & \multicolumn{1}{r}{{\small 
\small \textbf{-0.0020}%
}} & \multicolumn{1}{r}{\small -0.0034} & \multicolumn{1}{r}%
{\small -0.0174} \\ 
\multicolumn{1}{c}{} & \multicolumn{1}{c}{\small 0.45} & \multicolumn{1}{c}%
{\small 96} & \multicolumn{1}{r}{\small -0.0241} & \multicolumn{1}{r}{%
{\small 
\small \textbf{-0.0095}%
}} & \multicolumn{1}{r}{\small -0.0126*} & \multicolumn{1}{r}%
{\small -0.0217} & \multicolumn{1}{r}{\small -0.0229} & \multicolumn{1}{r}%
{\small -0.0241} & \multicolumn{1}{r}{\small -0.0175} & \multicolumn{1}{r}%
{\small -0.0175} & \multicolumn{1}{r}{{\small 
\small \textit{-0.0126}%
}} & \multicolumn{1}{r}{\small -0.0145} & \multicolumn{1}{r}%
{\small -0.0366} \\ 
\multicolumn{1}{c}{} & \multicolumn{1}{c}{} & \multicolumn{1}{c}{\small 576}
& \multicolumn{1}{r}{\small -0.0077} & \multicolumn{1}{r}{{\small 
\small \textit{-0.0026}%
}} & \multicolumn{1}{r}{\small -0.0035} & \multicolumn{1}{r}%
{\small -0.0042} & \multicolumn{1}{r}{\small -0.0048} & \multicolumn{1}{r}%
{\small -0.0065} & \multicolumn{1}{r}{\small -0.0029} & \multicolumn{1}{r}%
{\small -0.0038} & \multicolumn{1}{r}{{\small 
\small \textbf{-0.0018}%
}} & \multicolumn{1}{r}{\small -0.0048} & \multicolumn{1}{r}%
{\small -0.0134} \\ \hline\hline
\end{tabular}

\end{table}%

\newpage

\begin{table}[h]\
\vspace{-3.0cm}%
\caption{{\small RMSE estimates of the unadjusted LPR estimator, the feasible GS
estimator and the pre-filtered sieve bootstrap estimator, for the DGP:
ARFIMA(}${\small 1,d}_{{\small 0}}{\small ,0}${\small )  with Gaussian
innovations. The bias estimates of the feasible jackknife estimator based on
2,3,4,6,8 non-overlapping (NO) sub-samples, the feasible jackknife estimator
based on 2 moving block (MB) sub-samples, the maximum likelihood estimator (MLE)
and the pre-whitened (PW) estimator are obtained under the misspecified model:
ARFIMA(}${\small 0,d,0}${\small ) using the approach described in Section
5.3. The estimates are obtained by setting }${\small \alpha }${\small \ =
0.65. The lowest values are \textbf{bold-faced} and the second lowest values
are \textit{italicized}.}}%
\label{Table:MSE_ARFIMA(1,d,0)_mis}%
\renewcommand{\arraystretch}{0.9}%
\setlength{\tabcolsep}{4pt}%

\medskip

\begin{tabular}{lllccccccccccc}
\hline\hline
&  &  &  &  &  &  &  &  &  &  &  &  &  \\ 
${\small \phi }_{{\small 0}}$ & ${\small d}_{{\small 0}}$ & ${\small n}$ & $%
\widehat{{\small d}}_{{\small n}}$ & $\widehat{{\small d}}_{{\small J,2}}^{%
{\small NO}}$ & $\widehat{{\small d}}_{{\small J,3}}^{{\small NO}}$ & $%
\widehat{{\small d}}_{{\small J,4}}^{{\small NO}}$ & $\widehat{{\small d}}_{%
{\small J,6}}^{{\small NO}}$ & $\widehat{{\small d}}_{{\small J,8}}^{{\small %
NO}}$ & $\widehat{{\small d}}_{{\small J,2}}^{{\small MB}}$ & $\widehat{%
{\small d}}_{{\small 1}}^{{\small GS}}$ & $\widehat{{\small d}}^{{\small PFSB%
}}$ & $\widehat{{\small d}}^{{\small MLE}}$ & $\widehat{{\small d}}^{{\small %
PW}}$ \\ \hline
\multicolumn{14}{c}{} \\ 
\multicolumn{1}{c}{\small -0.9} & \multicolumn{1}{c}{\small -0.25} & 
\multicolumn{1}{c}{\small 96} & {\small 
\small \textbf{1.0359}%
} & {\small 1.2624} & {\small 1.2506} & {\small 1.2406} & 
{\small 1.2416} & {\small 1.2384} & {\small 1.2395} & {\small 1.3386} & 
{\small 1.2885} & {\small 
\small \textit{1.2247}%
} & {\small 1.2372} \\ 
\multicolumn{1}{c}{} & \multicolumn{1}{c}{} & \multicolumn{1}{c}{\small 576}
& {\small 0.7398} & {\small 0.7851} & {\small 0.7827} & {\small 0.7795} & 
{\small 0.7628} & {\small 0.7695} & {\small 0.7726} & {\small 
\small \textit{0.7371}%
} & {\small 
\small \textbf{0.7359}%
} & {\small 0.8030} & {\small 0.8428} \\ 
\multicolumn{1}{c}{} & \multicolumn{1}{c}{\small 0.25} & \multicolumn{1}{c}%
{\small 96} & {\small 
\small \textit{1.1618}%
} & {\small 1.2137} & {\small 1.2154} & {\small 1.2042} & 
{\small 1.2069} & {\small 1.1958} & {\small 1.2073} & {\small 
\small \textbf{1.1484}%
} & {\small 1.2299} & {\small 1.2149} & {\small 1.2226} \\ 
\multicolumn{1}{c}{} & \multicolumn{1}{c}{} & \multicolumn{1}{c}{\small 576}
& {\small 
\small \textbf{0.9175}%
} & {\small 0.9769} & {\small 0.9732} & {\small 0.9618} & 
{\small 0.9594} & {\small 
\small \textit{0.9537}%
} & {\small 0.9678} & {\small 1.1171} & {\small 1.1130} & 
{\small 0.9556} & {\small 0.9638} \\ 
\multicolumn{1}{c}{} & \multicolumn{1}{c}{\small 0.45} & \multicolumn{1}{c}%
{\small 96} & {\small 
\small \textbf{1.1286}%
} & {\small 1.2446} & {\small 1.2384} & {\small 1.2329} & 
{\small 1.2280} & {\small 
\small \textit{1.2186}%
} & {\small 1.2349} & {\small 1.4331} & {\small 1.5385} & 
{\small 1.2285} & {\small 1.2385} \\ 
\multicolumn{1}{c}{} & \multicolumn{1}{c}{} & \multicolumn{1}{c}{\small 576}
& {\small 
\small \textbf{0.9708}%
} & {\small 1.0975} & {\small 1.0299} & {\small 1.0183} & 
{\small 0.9716} & {\small 
\small \textit{0.9824}%
} & {\small 1.0198} & {\small 1.1124} & {\small 1.1647} & 
{\small 0.9929} & {\small 1.0064} \\ 
\multicolumn{1}{c}{\small -0.4} & \multicolumn{1}{c}{\small -0.25} & 
\multicolumn{1}{c}{\small 96} & {\small 
\small \textbf{0.2568}%
} & {\small 0.3063} & {\small 0.3015} & {\small 0.2940} & 
{\small 0.2874} & {\small 0.2711} & {\small 0.2726} & {\small 
\small \textit{0.2594}%
} & {\small 0.3028} & {\small 0.2736} & {\small 0.2915} \\ 
\multicolumn{1}{c}{} & \multicolumn{1}{c}{} & \multicolumn{1}{c}{\small 576}
& {\small 
\small \textbf{0.1098}%
} & {\small 0.1408} & {\small 0.1377} & {\small 0.1305} & 
{\small 0.1249} & {\small 0.1153} & {\small 0.1281} & {\small 
\small \textit{0.1118}%
} & {\small 0.1272} & {\small 0.1242} & {\small 0.1463} \\ 
\multicolumn{1}{c}{} & \multicolumn{1}{c}{\small 0.25} & \multicolumn{1}{c}%
{\small 96} & {\small 
\small \textbf{0.2490}%
} & {\small 0.3034} & {\small 0.2956} & {\small 0.2874} & 
{\small 0.2713} & {\small 0.2624} & {\small 0.3097} & {\small 
\small \textit{0.2580}%
} & {\small 0.2879} & {\small 0.2794} & {\small 0.2905} \\ 
\multicolumn{1}{c}{} & \multicolumn{1}{c}{} & \multicolumn{1}{c}{\small 576}
& {\small 
\small \textbf{0.1079}%
} & {\small 0.1410} & {\small 0.1393} & {\small 0.1318} & 
{\small 0.1275} & {\small 0.1248} & {\small 0.1290} & {\small 0.1375} & 
{\small 
\small \textit{0.1239}%
} & {\small 0.1329} & {\small 0.1539} \\ 
\multicolumn{1}{c}{} & \multicolumn{1}{c}{\small 0.45} & \multicolumn{1}{c}%
{\small 96} & {\small 
\small \textbf{0.2506}%
} & {\small 0.3087} & {\small 0.3016} & {\small 0.2971} & 
{\small 0.2840} & {\small 0.2737} & {\small 0.2972} & {\small 0.2616} & 
{\small 
\small \textit{0.2506}%
} & {\small 0.2840} & {\small 0.3017} \\ 
\multicolumn{1}{c}{} & \multicolumn{1}{c}{} & \multicolumn{1}{c}{\small 576}
& {\small 
\small \textbf{0.1115}%
} & {\small 0.1672} & {\small 0.1508} & {\small 0.1438} & 
{\small 0.1378} & {\small 0.1225} & {\small 0.1482} & {\small 
\small \textit{0.1143}%
} & {\small 0.1230} & {\small 0.1385} & {\small 0.1228} \\ 
\multicolumn{1}{c}{\small 0.4} & \multicolumn{1}{c}{\small -0.25} & 
\multicolumn{1}{c}{\small 96} & {\small 
\small \textbf{0.1917}%
} & {\small 0.2441} & {\small 0.2318} & {\small 0.2473} & 
{\small 0.2319} & {\small 0.2301} & {\small 0.2333} & {\small 
\small \textit{0.2212}%
} & {\small 0.2717} & {\small 0.2425} & {\small 0.2573} \\ 
\multicolumn{1}{c}{} & \multicolumn{1}{c}{} & \multicolumn{1}{c}{\small 576}
& {\small 
\small \textbf{0.0919}%
} & {\small 0.1323} & {\small 0.1295} & {\small 0.1206} & 
{\small 0.1134} & {\small 0.1199} & {\small 0.1210} & {\small 0.1253} & 
{\small 
\small \textit{0.1198}%
} & {\small 0.1347} & {\small 0.1836} \\ 
\multicolumn{1}{c}{} & \multicolumn{1}{c}{\small 0.25} & \multicolumn{1}{c}%
{\small 96} & {\small 
\small \textbf{0.1960}%
} & {\small 0.2359} & {\small 0.2306} & {\small 0.2239} & 
{\small 0.2406} & {\small 0.2378} & {\small 0.2381} & {\small 
\small \textit{0.2209}%
} & {\small 0.2482} & {\small 0.2385} & {\small 0.2529} \\ 
\multicolumn{1}{c}{} & \multicolumn{1}{c}{} & \multicolumn{1}{c}{\small 576}
& {\small 
\small \textbf{0.0922}%
} & {\small 0.1330} & {\small 0.1289} & {\small 0.1211} & 
{\small 0.1281} & {\small 0.1166} & {\small 0.1199} & {\small 0.1366} & 
{\small 0.1158} & {\small 
\small \textit{0.1093}%
} & {\small 0.1545} \\ 
\multicolumn{1}{c}{} & \multicolumn{1}{c}{\small 0.45} & \multicolumn{1}{c}%
{\small 96} & {\small 
\small \textbf{0.1955}%
} & {\small 0.2416} & {\small 0.2345} & {\small 0.2394} & 
{\small 0.2296} & {\small 0.2244} & {\small 0.2267} & {\small 
\small \textit{0.2218}%
} & {\small 0.2453} & {\small 0.2236} & {\small 0.2530} \\ 
\multicolumn{1}{c}{} & \multicolumn{1}{c}{} & \multicolumn{1}{c}{\small 576}
& {\small 
\small \textbf{0.0926}%
} & {\small 0.1305} & {\small 0.1287} & {\small 0.1226} & 
{\small 0.1148} & {\small 0.1106} & {\small 0.1292} & {\small 0.1089} & 
{\small 0.1149} & {\small 
\small \textit{0.1058}%
} & {\small 0.1551} \\ 
\multicolumn{1}{c}{\small 0.9} & \multicolumn{1}{c}{\small -0.25} & 
\multicolumn{1}{c}{\small 96} & {\small 
\small \textbf{0.1115}%
} & {\small 0.1419} & {\small 0.1381} & {\small 0.1305} & 
{\small 0.1267} & {\small 
\small \textit{0.1212}%
} & {\small 0.1284} & {\small 0.1365} & {\small 0.1266} & 
{\small 0.1315} & {\small 0.1589} \\ 
\multicolumn{1}{c}{} & \multicolumn{1}{c}{} & \multicolumn{1}{c}{\small 576}
& {\small 
\small \textit{0.0624}%
} & {\small 0.0795} & {\small 0.0754} & {\small 0.0713} & 
{\small 0.0691} & {\small 0.0627} & {\small 0.0553} & {\small 0.0708} & 
{\small 
\small \textbf{0.0600}%
} & {\small 0.0703} & {\small 0.0846} \\ 
\multicolumn{1}{c}{} & \multicolumn{1}{c}{\small 0.25} & \multicolumn{1}{c}%
{\small 96} & {\small 
\small \textbf{0.1114}%
} & {\small 0.1482} & {\small 0.1469} & {\small 0.1337} & 
{\small 0.1304} & {\small 0.1289} & {\small 0.1376} & {\small 0.1328} & 
{\small 
\small \textit{0.1282}%
} & {\small 0.1293} & {\small 0.1428} \\ 
\multicolumn{1}{c}{} & \multicolumn{1}{c}{} & \multicolumn{1}{c}{\small 576}
& {\small 
\small \textbf{0.0518}%
} & {\small 0.0800} & {\small 0.0786} & {\small 0.0711} & 
{\small 0.0678} & {\small 0.0624} & {\small 0.0545} & {\small 0.0626} & 
{\small 0.0581} & {\small 
\small \textit{0.0540}%
} & {\small 0.0722} \\ 
\multicolumn{1}{c}{} & \multicolumn{1}{c}{\small 0.45} & \multicolumn{1}{c}%
{\small 96} & {\small 
\small \textbf{0.1053}%
} & {\small 0.1495} & {\small 0.1461} & {\small 0.1398} & 
{\small 0.1376} & {\small 0.1268} & {\small 0.1272} & {\small 0.1253} & 
{\small 0.1215} & {\small 
\small \textit{0.1214}%
} & {\small 0.1473} \\ 
\multicolumn{1}{c}{} & \multicolumn{1}{c}{} & \multicolumn{1}{c}{\small 576}
& {\small 
\small \textbf{0.0526}%
} & {\small 0.0714} & {\small 0.0790} & {\small 0.0774} & 
{\small 0.0722} & {\small 
\small \textit{0.0659}%
} & {\small 0.0698} & {\small 0.0769} & {\small 0.0684} & 
{\small 0.0648} & {\small 0.0739} \\ \hline\hline
\end{tabular}

\end{table}%

\newpage

\begin{table}[h]\
\vspace{-3cm}%
\caption{{\small Bias estimates of the unadjusted LPR estimator, the feasiblel GS
estimator and the pre-filtered sieve bootstrap estimator, for the DGP:
ARFIMA(}${\small 0,d}_{{\small 0}}{\small ,1}${\small )  with Gaussian
innovations. The bias estimates of the feasible jackknife estimator based on
2,3,4,6,8 non-overlapping (NO) sub-samples, the feasible jackknife estimator
based on 2 moving block (MB) sub-samples, the maximum likelihood estimator (MLE)
and the pre-whitened (PW) estimator are obtained under the misspecified model:
ARFIMA(}${\small 0,d,0}${\small ) using the approach described in Section
5.3. The estimates are obtained by setting }${\small \alpha }${\small \ =
0.65. The lowest values are \textbf{bold-faced} and the second lowest values
are \textit{italicized}.}}%
\label{Table:Bias_ARFIMA(0,d,1)_mis}%
\renewcommand{\arraystretch}{0.9}%
\setlength{\tabcolsep}{3pt}%

\medskip

\begin{tabular}{lllccccccccccc}
\hline\hline
&  &  &  &  &  &  &  &  &  &  &  &  &  \\ 
${\small \theta }_{{\small 0}}$ & ${\small d}_{{\small 0}}$ & ${\small n}$ & 
$\widehat{{\small d}}_{{\small n}}$ & $\widehat{{\small d}}_{{\small J,2}}^{%
{\small NO}}$ & $\widehat{{\small d}}_{{\small J,3}}^{{\small NO}}$ & $%
\widehat{{\small d}}_{{\small J,4}}^{{\small NO}}$ & $\widehat{{\small d}}_{%
{\small J,6}}^{{\small NO}}$ & $\widehat{{\small d}}_{{\small J,8}}^{{\small %
NO}}$ & $\widehat{{\small d}}_{{\small J,2}}^{{\small MB}}$ & $\widehat{%
{\small d}}_{{\small 1}}^{{\small GS}}$ & $\widehat{{\small d}}^{{\small PFSB%
}}$ & $\widehat{{\small d}}^{{\small MLE}}$ & $\widehat{{\small d}}^{{\small %
PW}}$ \\ \hline
\multicolumn{14}{c}{} \\ 
\multicolumn{1}{c}{\small -0.9} & \multicolumn{1}{c}{\small -0.25} & 
\multicolumn{1}{c}{\small 96} & \multicolumn{1}{r}{\small -0.5671} & 
\multicolumn{1}{r}{{\small 
\small \textbf{-0.5355}%
}} & \multicolumn{1}{r}{{\small 
\small%
}\textit{-0.5398}} & \multicolumn{1}{r}{\small -0.5437} & \multicolumn{1}{r}%
{\small -0.5541} & \multicolumn{1}{r}{\small -0.5626} & \multicolumn{1}{r}%
{\small -0.5584} & \multicolumn{1}{r}{\small -0.5450} & \multicolumn{1}{r}%
{\small -0.5466} & \multicolumn{1}{r}{\small -0.5692} & \multicolumn{1}{r}%
{\small -0.6361} \\ 
\multicolumn{1}{c}{} & \multicolumn{1}{c}{} & \multicolumn{1}{c}{\small 576}
& \multicolumn{1}{r}{\small -0.4527} & \multicolumn{1}{r}{{\small 
\small \textbf{-0.4216}%
}} & \multicolumn{1}{r}{{\small 
\small \textit{-0.4273}%
}} & \multicolumn{1}{r}{\small -0.4349} & \multicolumn{1}{r}%
{\small -0.4479} & \multicolumn{1}{r}{\small -0.4581} & \multicolumn{1}{r}%
{\small -0.4469} & \multicolumn{1}{r}{\small -0.4385} & \multicolumn{1}{r}%
{\small -0.4285} & \multicolumn{1}{r}{\small -0.4375} & \multicolumn{1}{r}%
{\small -0.4858} \\ 
\multicolumn{1}{c}{} & \multicolumn{1}{c}{\small 0.25} & \multicolumn{1}{c}%
{\small 96} & \multicolumn{1}{r}{\small -0.7763} & \multicolumn{1}{r}{%
{\small 
\small \textbf{-0.7394}%
}} & \multicolumn{1}{r}{{\small 
\small \textit{-0.7401}%
}} & \multicolumn{1}{r}{\small -0.7475} & \multicolumn{1}{r}%
{\small -0.7523} & \multicolumn{1}{r}{\small -0.7670} & \multicolumn{1}{r}%
{\small -0.7416} & \multicolumn{1}{r}{\small -0.7524} & \multicolumn{1}{r}%
{\small -0.7661} & \multicolumn{1}{r}{\small -0.7718} & \multicolumn{1}{r}%
{\small -0.8149} \\ 
\multicolumn{1}{c}{} & \multicolumn{1}{c}{} & \multicolumn{1}{c}{\small 576}
& \multicolumn{1}{r}{\small -0.5880} & \multicolumn{1}{r}{\small -0.5482} & 
\multicolumn{1}{r}{\small -0.5529} & \multicolumn{1}{r}{\small -0.5653} & 
\multicolumn{1}{r}{\small -0.5776} & \multicolumn{1}{r}{\small -0.5798} & 
\multicolumn{1}{r}{\small -0.5629} & \multicolumn{1}{r}{{\small 
\small \textit{-0.5473}%
}} & \multicolumn{1}{r}{\small -0.5621} & \multicolumn{1}{r}{%
{\small 
\small \textbf{-0.5426}%
}} & \multicolumn{1}{r}{\small -0.5846} \\ 
\multicolumn{1}{c}{} & \multicolumn{1}{c}{\small 0.45} & \multicolumn{1}{c}%
{\small 96} & \multicolumn{1}{r}{\small -0.8004} & \multicolumn{1}{r}{%
{\small 
\small \textbf{-0.7469}%
}} & \multicolumn{1}{r}{{\small 
\small \textit{-0.7504}%
}} & \multicolumn{1}{r}{\small -0.7638} & \multicolumn{1}{r}%
{\small -0.7769} & \multicolumn{1}{r}{\small -0.7716} & \multicolumn{1}{r}%
{\small -0.7553} & \multicolumn{1}{r}{\small -0.7600} & \multicolumn{1}{r}%
{\small -0.7854} & \multicolumn{1}{r}{\small -0.7713} & \multicolumn{1}{r}%
{\small -0.7942} \\ 
\multicolumn{1}{c}{} & \multicolumn{1}{c}{} & \multicolumn{1}{c}{\small 576}
& \multicolumn{1}{r}{\small -0.5880} & \multicolumn{1}{r}{{\small 
\small \textbf{-0.5033}%
}} & \multicolumn{1}{r}{{\small 
\small \textit{-0.5059}%
}} & \multicolumn{1}{r}{\small -0.5249} & \multicolumn{1}{r}%
{\small -0.5384} & \multicolumn{1}{r}{\small -0.5529} & \multicolumn{1}{r}%
{\small -0.5247} & \multicolumn{1}{r}{\small -0.5351} & \multicolumn{1}{r}%
{\small -0.5527} & \multicolumn{1}{r}{\small -0.5484} & \multicolumn{1}{r}%
{\small -0.5927} \\ 
\multicolumn{1}{c}{\small -0.4} & \multicolumn{1}{c}{\small -0.25} & 
\multicolumn{1}{c}{\small 96} & \multicolumn{1}{r}{\small -0.1437} & 
\multicolumn{1}{r}{\small -0.1468} & \multicolumn{1}{r}{\small -0.1288} & 
\multicolumn{1}{r}{\small -0.1201} & \multicolumn{1}{r}{{\small 
\small \textit{-0.1175}%
}} & \multicolumn{1}{r}{\small -0.1435} & \multicolumn{1}{r}%
{\small -0.1586} & \multicolumn{1}{r}{{\small 
\small \textbf{-0.1120}%
}} & \multicolumn{1}{r}{\small -0.1240} & \multicolumn{1}{r}%
{\small -0.1379} & \multicolumn{1}{r}{\small -0.1728} \\ 
\multicolumn{1}{c}{} & \multicolumn{1}{c}{} & \multicolumn{1}{c}{\small 576}
& \multicolumn{1}{r}{\small -0.0476} & \multicolumn{1}{r}{\small -0.0296} & 
\multicolumn{1}{r}{\small -0.0379} & \multicolumn{1}{r}{\small -0.0334} & 
\multicolumn{1}{r}{\small -0.0562} & \multicolumn{1}{r}{\small -0.0560} & 
\multicolumn{1}{r}{\small -0.0392} & \multicolumn{1}{r}{{\small 
\small \textbf{-0.0187}%
}} & \multicolumn{1}{r}{{\small 
\small \textit{-0.0271}%
}} & \multicolumn{1}{r}{\small -0.0584} & \multicolumn{1}{r}%
{\small -0.0957} \\ 
\multicolumn{1}{c}{} & \multicolumn{1}{c}{\small 0.25} & \multicolumn{1}{c}%
{\small 96} & \multicolumn{1}{r}{\small -0.1692} & \multicolumn{1}{r}{%
{\small 
\small \textbf{-0.1172}%
}} & \multicolumn{1}{r}{\small -0.1243} & \multicolumn{1}{r}%
{\small -0.1289} & \multicolumn{1}{r}{\small -0.1350} & \multicolumn{1}{r}%
{\small -0.1462} & \multicolumn{1}{r}{\small -0.1471} & \multicolumn{1}{r}%
{\small -0.1297} & \multicolumn{1}{r}{{\small 
\small \textit{-0.1200}%
}} & \multicolumn{1}{r}{\small -0.1381} & \multicolumn{1}{r}%
{\small -0.1783} \\ 
\multicolumn{1}{c}{} & \multicolumn{1}{c}{} & \multicolumn{1}{c}{\small 576}
& \multicolumn{1}{r}{\small -0.0552} & \multicolumn{1}{r}{{\small 
\small \textbf{-0.0222}%
}} & \multicolumn{1}{r}{\small -0.0318} & \multicolumn{1}{r}%
{\small -0.0425} & \multicolumn{1}{r}{\small -0.0531} & \multicolumn{1}{r}%
{\small -0.0573} & \multicolumn{1}{r}{\small -0.0469} & \multicolumn{1}{r}{%
{\small 
\small \textit{-0.0243}%
}} & \multicolumn{1}{r}{\small -0.0287} & \multicolumn{1}{r}%
{\small -0.0558} & \multicolumn{1}{r}{\small -0.0990} \\ 
\multicolumn{1}{c}{} & \multicolumn{1}{c}{\small 0.45} & \multicolumn{1}{c}%
{\small 96} & \multicolumn{1}{r}{\small -0.1630} & \multicolumn{1}{r}{%
{\small 
\small \textbf{-0.0716}%
}} & \multicolumn{1}{r}{{\small 
\small \textit{-0.1076}%
}} & \multicolumn{1}{r}{\small -0.1277} & \multicolumn{1}{r}%
{\small -0.1392} & \multicolumn{1}{r}{\small -0.1436} & \multicolumn{1}{r}%
{\small -0.1318} & \multicolumn{1}{r}{\small -0.1190} & \multicolumn{1}{r}%
{\small -0.1118} & \multicolumn{1}{r}{\small -0.1254} & \multicolumn{1}{r}%
{\small -0.1739} \\ 
\multicolumn{1}{c}{} & \multicolumn{1}{c}{} & \multicolumn{1}{c}{\small 576}
& \multicolumn{1}{r}{\small -0.0493} & \multicolumn{1}{r}{{\small 
\small \textbf{-0.0152}%
}} & \multicolumn{1}{r}{\small -0.0183} & \multicolumn{1}{r}%
{\small -0.0309} & \multicolumn{1}{r}{\small -0.0473} & \multicolumn{1}{r}%
{\small -0.0414} & \multicolumn{1}{r}{\small -0.0205} & \multicolumn{1}{r}{%
{\small 
\small \textit{-0.0169}%
}} & \multicolumn{1}{r}{\small -0.0244} & \multicolumn{1}{r}%
{\small -0.0509} & \multicolumn{1}{r}{\small -0.0948} \\ 
\multicolumn{1}{c}{\small 0.4} & \multicolumn{1}{c}{\small -0.25} & 
\multicolumn{1}{c}{\small 96} & \multicolumn{1}{r}{\small 0.0637} & 
\multicolumn{1}{r}{{\small 
\small \textit{0.0139}%
}} & \multicolumn{1}{r}{\small 0.0226} & \multicolumn{1}{r}%
{\small 0.0274} & \multicolumn{1}{r}{\small 0.0395} & \multicolumn{1}{r}%
{\small 0.0433} & \multicolumn{1}{r}{\small 0.0381} & \multicolumn{1}{r}{%
{\small 
\small \textbf{0.0154}%
}} & \multicolumn{1}{r}{\small 0.0651} & \multicolumn{1}{r}%
{\small 0.0312} & \multicolumn{1}{r}{\small 0.0661} \\ 
\multicolumn{1}{c}{} & \multicolumn{1}{c}{} & \multicolumn{1}{c}{\small 576}
& \multicolumn{1}{r}{\small 0.0175} & \multicolumn{1}{r}{{\small 
\small \textit{0.0066}%
}} & \multicolumn{1}{r}{\small 0.0091} & \multicolumn{1}{r}%
{\small 0.0095} & \multicolumn{1}{r}{\small 0.0122} & \multicolumn{1}{r}%
{\small 0.0146} & \multicolumn{1}{r}{\small 0.0070} & \multicolumn{1}{r}{%
{\small 
\small \textbf{0.0049}%
}} & \multicolumn{1}{r}{\small 0.0132} & \multicolumn{1}{r}%
{\small 0.0198} & \multicolumn{1}{r}{\small 0.0283} \\ 
\multicolumn{1}{c}{} & \multicolumn{1}{c}{\small 0.25} & \multicolumn{1}{c}%
{\small 96} & \multicolumn{1}{r}{\small 0.0504} & \multicolumn{1}{r}{{\small 
\small \textit{0.0141}%
}} & \multicolumn{1}{r}{\small 0.0364} & \multicolumn{1}{r}%
{\small 0.0490} & \multicolumn{1}{r}{\small 0.0526} & \multicolumn{1}{r}%
{\small 0.0548} & \multicolumn{1}{r}{\small 0.0301} & \multicolumn{1}{r}{%
{\small 
\small \textbf{0.0110}%
}} & \multicolumn{1}{r}{\small 0.0574} & \multicolumn{1}{r}%
{\small 0.0248} & \multicolumn{1}{r}{\small 0.0529} \\ 
\multicolumn{1}{c}{} & \multicolumn{1}{c}{} & \multicolumn{1}{c}{\small 576}
& \multicolumn{1}{r}{\small 0.0136} & \multicolumn{1}{r}{\small 0.0082} & 
\multicolumn{1}{r}{\small 0.0087} & \multicolumn{1}{r}{\small 0.0096} & 
\multicolumn{1}{r}{\small 0.0121} & \multicolumn{1}{r}{\small 0.0137} & 
\multicolumn{1}{r}{{\small 
\small \textit{0.0050}%
}} & \multicolumn{1}{r}{{\small 
\small \textbf{0.0031}%
}} & \multicolumn{1}{r}{\small 0.0108} & \multicolumn{1}{r}%
{\small 0.0153} & \multicolumn{1}{r}{\small 0.0238} \\ 
\multicolumn{1}{c}{} & \multicolumn{1}{c}{\small 0.45} & \multicolumn{1}{c}%
{\small 96} & \multicolumn{1}{r}{\small 0.0549} & \multicolumn{1}{r}{{\small 
\small \textbf{0.0189}%
}} & \multicolumn{1}{r}{\small 0.0321} & \multicolumn{1}{r}%
{\small 0.0458} & \multicolumn{1}{r}{\small 0.0529} & \multicolumn{1}{r}%
{\small 0.0585} & \multicolumn{1}{r}{\small 0.0362} & \multicolumn{1}{r}{%
{\small 
\small \textit{0.0204}%
}} & \multicolumn{1}{r}{\small 0.0570} & \multicolumn{1}{r}%
{\small 0.0274} & \multicolumn{1}{r}{\small 0.0668} \\ 
\multicolumn{1}{c}{} & \multicolumn{1}{c}{} & \multicolumn{1}{c}{\small 576}
& \multicolumn{1}{r}{\small 0.0192} & \multicolumn{1}{r}{{\small 
\small \textbf{0.0053}%
}} & \multicolumn{1}{r}{{\small 
\small \textit{0.0075}%
}} & \multicolumn{1}{r}{\small 0.0083} & \multicolumn{1}{r}%
{\small 0.0104} & \multicolumn{1}{r}{\small 0.0129} & \multicolumn{1}{r}%
{\small 0.0083} & \multicolumn{1}{r}{\small 0.0103} & \multicolumn{1}{r}%
{\small 0.0132} & \multicolumn{1}{r}{\small 0.0135} & \multicolumn{1}{r}%
{\small 0.0269} \\ 
\multicolumn{1}{c}{\small 0.9} & \multicolumn{1}{c}{\small -0.25} & 
\multicolumn{1}{c}{\small 96} & \multicolumn{1}{r}{\small 0.0359} & 
\multicolumn{1}{r}{{\small 
\small \textbf{0.0075}%
}} & \multicolumn{1}{r}{{\small 
\small \textit{0.0083}%
}} & \multicolumn{1}{r}{\small 0.0092} & \multicolumn{1}{r}%
{\small 0.0116} & \multicolumn{1}{r}{\small 0.0142} & \multicolumn{1}{r}%
{\small 0.0117} & \multicolumn{1}{r}{\small 0.0109} & \multicolumn{1}{r}%
{\small 0.0085} & \multicolumn{1}{r}{\small 0.0131} & \multicolumn{1}{r}%
{\small 0.0354} \\ 
\multicolumn{1}{c}{} & \multicolumn{1}{c}{} & \multicolumn{1}{c}{\small 576}
& \multicolumn{1}{r}{\small 0.0065} & \multicolumn{1}{r}{\small 0.0024} & 
\multicolumn{1}{r}{\small 0.0039} & \multicolumn{1}{r}{\small 0.0069} & 
\multicolumn{1}{r}{\small 0.0086} & \multicolumn{1}{r}{\small 0.0109} & 
\multicolumn{1}{r}{\small 0.0078} & \multicolumn{1}{r}{{\small 
\small \textit{0.0020}%
}} & \multicolumn{1}{r}{{\small 
\small \textbf{0.0014}%
}} & \multicolumn{1}{r}{\small 0.0045} & \multicolumn{1}{r}%
{\small 0.0151} \\ 
\multicolumn{1}{c}{} & \multicolumn{1}{c}{\small 0.25} & \multicolumn{1}{c}%
{\small 96} & \multicolumn{1}{r}{\small 0.0293} & \multicolumn{1}{r}{{\small 
\small \textit{0.0096}%
}} & \multicolumn{1}{r}{\small 0.0138} & \multicolumn{1}{r}%
{\small 0.0157} & \multicolumn{1}{r}{\small 0.0199} & \multicolumn{1}{r}%
{\small 0.0237} & \multicolumn{1}{r}{\small 0.0195} & \multicolumn{1}{r}%
{\small 0.0130} & \multicolumn{1}{r}{{\small 
\small \textbf{0.0073}%
}} & \multicolumn{1}{r}{\small 0.0099} & \multicolumn{1}{r}%
{\small 0.0322} \\ 
\multicolumn{1}{c}{} & \multicolumn{1}{c}{} & \multicolumn{1}{c}{\small 576}
& \multicolumn{1}{r}{\small 0.0083} & \multicolumn{1}{r}{{\small 
\small \textit{0.0022}%
}} & \multicolumn{1}{r}{\small 0.0036} & \multicolumn{1}{r}%
{\small 0.0043} & \multicolumn{1}{r}{\small 0.0055} & \multicolumn{1}{r}%
{\small 0.0063} & \multicolumn{1}{r}{\small 0.0029} & \multicolumn{1}{r}%
{\small 0.0057} & \multicolumn{1}{r}{{\small 
\small \textbf{0.0019}%
}} & \multicolumn{1}{r}{\small 0.0052} & \multicolumn{1}{r}%
{\small 0.0137} \\ 
\multicolumn{1}{c}{} & \multicolumn{1}{c}{\small 0.45} & \multicolumn{1}{c}%
{\small 96} & \multicolumn{1}{r}{\small 0.0235} & \multicolumn{1}{r}{{\small 
\small \textit{0.0109}%
}} & \multicolumn{1}{r}{\small 0.0128} & \multicolumn{1}{r}%
{\small 0.0146} & \multicolumn{1}{r}{\small 0.0165} & \multicolumn{1}{r}%
{\small 0.0198} & \multicolumn{1}{r}{\small 0.0149} & \multicolumn{1}{r}%
{\small 0.0132} & \multicolumn{1}{r}{{\small 
\small \textbf{0.0075}%
}} & \multicolumn{1}{r}{\small 0.0084} & \multicolumn{1}{r}%
{\small 0.0353} \\ 
\multicolumn{1}{c}{} & \multicolumn{1}{c}{} & \multicolumn{1}{c}{\small 576}
& \multicolumn{1}{r}{\small 0.0195} & \multicolumn{1}{r}{{\small 
\small \textbf{0.0030}%
}} & \multicolumn{1}{r}{\small 0.0075} & \multicolumn{1}{r}%
{\small 0.0081} & \multicolumn{1}{r}{\small 0.0076} & \multicolumn{1}{r}%
{\small 0.0116} & \multicolumn{1}{r}{\small 0.0084} & \multicolumn{1}{r}%
{\small 0.0071} & \multicolumn{1}{r}{{\small 
\small \textit{0.0042}%
}} & \multicolumn{1}{r}{\small 0.0044} & \multicolumn{1}{r}%
{\small 0.0120} \\ \hline\hline
\end{tabular}

\end{table}%

\newpage

\begin{table}[h]\
\vspace{-3cm}%
\caption{{\small RMSE estimates of the unadjusted LPR estimator, the feasible GS
estimator and the pre-filtered sieve bootstrap estimator, for the DGP:
ARFIMA(}${\small 0,d}_{{\small 0}}{\small ,1}${\small )  with Gaussian
innovations. The bias estimates of the feasible jackknife estimator based on
2,3,4,6,8 non-overlapping (NO) sub-samples, the feasible jackknife estimator
based on 2 moving block (MB) sub-samples, the maximum likelihood estimator (MLE)
and the pre-whitened (PW) estimator are obtained under the misspecified model:
ARFIMA(}${\small 0,d,0}${\small ) using the approach described in Section
5.3. The estimates are obtained by setting }${\small \alpha }${\small \ =
0.65. The lowest values are \textbf{bold-faced} and the second lowest values
are \textit{italicized}.}}%
\label{Table:MSE_ARFIMA(0,d,1)_mis}%
\renewcommand{\arraystretch}{0.9}%
\setlength{\tabcolsep}{4pt}%

\medskip

\begin{tabular}{lllccccccccccc}
\hline\hline
&  &  &  &  &  &  &  &  &  &  &  &  &  \\ 
${\small \theta }_{{\small 0}}$ & ${\small d}_{{\small 0}}$ & ${\small n}$ & 
$\widehat{{\small d}}_{{\small n}}$ & $\widehat{{\small d}}_{{\small J,2}}^{%
{\small NO}}$ & $\widehat{{\small d}}_{{\small J,3}}^{{\small NO}}$ & $%
\widehat{{\small d}}_{{\small J,4}}^{{\small NO}}$ & $\widehat{{\small d}}_{%
{\small J,6}}^{{\small NO}}$ & $\widehat{{\small d}}_{{\small J,8}}^{{\small %
NO}}$ & $\widehat{{\small d}}_{{\small J,2}}^{{\small MB}}$ & $\widehat{%
{\small d}}_{{\small 1}}^{{\small GS}}$ & $\widehat{{\small d}}^{{\small PFSB%
}}$ & $\widehat{{\small d}}^{{\small MLE}}$ & $\widehat{{\small d}}^{{\small %
PW}}$ \\ \hline
\multicolumn{14}{c}{} \\ 
\multicolumn{1}{c}{\small -0.9} & \multicolumn{1}{c}{\small -0.25} & 
\multicolumn{1}{c}{\small 96} & {\small 
\small \textbf{0.6233}%
} & {\small 0.6882} & {\small 0.6854} & {\small 0.6825} & 
{\small 0.6797} & {\small 0.6774} & {\small 0.6828} & {\small 
\small \textit{0.6385}%
} & {\small 0.8247} & {\small 0.6619} & {\small 0.7538} \\ 
\multicolumn{1}{c}{} & \multicolumn{1}{c}{} & \multicolumn{1}{c}{\small 576}
& {\small 
\small \textbf{0.4794}%
} & {\small 0.5335} & {\small 0.5391} & {\small 0.5447} & 
{\small 0.5482} & {\small 0.5505} & {\small 0.5549} & {\small 
\small \textit{0.4885}%
} & {\small 0.4977} & {\small 0.5043} & {\small 0.5578} \\ 
\multicolumn{1}{c}{} & \multicolumn{1}{c}{\small 0.25} & \multicolumn{1}{c}%
{\small 96} & {\small 
\small \textbf{0.7996}%
} & {\small 0.8695} & {\small 0.8662} & {\small 0.8533} & 
{\small 0.8504} & {\small 0.8442} & {\small 0.8451} & {\small 0.8268} & 
{\small 0.8430} & {\small 
\small \textit{0.8072}%
} & {\small 0.8459} \\ 
\multicolumn{1}{c}{} & \multicolumn{1}{c}{} & \multicolumn{1}{c}{\small 576}
& {\small 
\small \textbf{0.5951}%
} & {\small 0.6749} & {\small 0.6685} & {\small 0.6612} & 
{\small 0.6553} & {\small 0.6490} & {\small 0.6514} & {\small 0.6219} & 
{\small 0.6590} & {\small 
\small \textit{0.6049}%
} & {\small 0.6318} \\ 
\multicolumn{1}{c}{} & \multicolumn{1}{c}{\small 0.45} & \multicolumn{1}{c}%
{\small 96} & {\small 
\small \textbf{0.8219}%
} & {\small 0.8806} & {\small 0.8829} & {\small 0.8781} & 
{\small 0.8700} & {\small 0.8651} & {\small 0.8588} & {\small 0.8590} & 
{\small 
\small \textit{0.8327}%
} & {\small 0.8381} & {\small 0.8637} \\ 
\multicolumn{1}{c}{} & \multicolumn{1}{c}{} & \multicolumn{1}{c}{\small 576}
& {\small 
\small \textbf{0.5950}%
} & {\small 0.6474} & {\small 0.6433} & {\small 0.6419} & 
{\small 0.6349} & {\small 0.6355} & {\small 0.6671} & {\small 0.6298} & 
{\small 0.6487} & {\small 
\small \textit{0.6015}%
} & {\small 0.6372} \\ 
\multicolumn{1}{c}{\small -0.4} & \multicolumn{1}{c}{\small -0.25} & 
\multicolumn{1}{c}{\small 96} & {\small 
\small \textbf{0.2376}%
} & {\small 0.2996} & {\small 0.2963} & {\small 0.2927} & 
{\small 0.2903} & {\small 0.2847} & {\small 0.2834} & {\small 
\small \textit{0.2488}%
} & {\small 0.3103} & {\small 0.2688} & {\small 0.3142} \\ 
\multicolumn{1}{c}{} & \multicolumn{1}{c}{} & \multicolumn{1}{c}{\small 576}
& {\small 
\small \textbf{0.1037}%
} & {\small 0.1534} & {\small 0.1588} & {\small 0.1556} & 
{\small 0.1485} & {\small 0.1414} & {\small 0.1549} & {\small 
\small \textit{0.1098}%
} & {\small 0.1254} & {\small 0.1236} & {\small 0.1540} \\ 
\multicolumn{1}{c}{} & \multicolumn{1}{c}{\small 0.25} & \multicolumn{1}{c}%
{\small 96} & {\small 
\small \textbf{0.2527}%
} & {\small 0.2958} & {\small 0.2964} & {\small 0.2942} & 
{\small 0.2846} & {\small 0.2812} & {\small 0.2833} & {\small 
\small \textit{0.2560}%
} & {\small 0.2782} & {\small 0.2645} & {\small 0.3162} \\ 
\multicolumn{1}{c}{} & \multicolumn{1}{c}{} & \multicolumn{1}{c}{\small 576}
& {\small 
\small \textbf{0.1068}%
} & {\small 0.1576} & {\small 0.1529} & {\small 0.1486} & 
{\small 0.1455} & {\small 0.1438} & {\small 0.1482} & {\small 
\small \textit{0.1103}%
} & {\small 0.1199} & {\small 0.1182} & {\small 0.1573} \\ 
\multicolumn{1}{c}{} & \multicolumn{1}{c}{\small 0.45} & \multicolumn{1}{c}%
{\small 96} & {\small 
\small \textbf{0.2496}%
} & {\small 0.2999} & {\small 0.2927} & {\small 0.2900} & 
{\small 0.2867} & {\small 0.2815} & {\small 0.2795} & {\small 
\small \textit{0.2518}%
} & {\small 0.2725} & {\small 0.2691} & {\small 0.3029} \\ 
\multicolumn{1}{c}{} & \multicolumn{1}{c}{} & \multicolumn{1}{c}{\small 576}
& {\small 
\small \textbf{0.1047}%
} & {\small 0.1553} & {\small 0.1518} & {\small 0.1489} & 
{\small 0.1421} & {\small 0.1358} & {\small 0.1365} & {\small 
\small \textit{0.1098}%
} & {\small 0.1188} & {\small 0.1105} & {\small 0.1661} \\ 
\multicolumn{1}{c}{\small 0.4} & \multicolumn{1}{c}{\small -0.25} & 
\multicolumn{1}{c}{\small 96} & {\small 
\small \textbf{0.1982}%
} & {\small 0.2589} & {\small 0.2556} & {\small 0.2528} & 
{\small 0.2438} & {\small 0.2482} & {\small 0.2473} & {\small 0.2212} & 
{\small 0.2809} & {\small 
\small \textit{0.2140}%
} & {\small 0.2344} \\ 
\multicolumn{1}{c}{} & \multicolumn{1}{c}{} & \multicolumn{1}{c}{\small 576}
& {\small 
\small \textbf{0.0932}%
} & {\small 0.1442} & {\small 0.1418} & {\small 0.1397} & 
{\small 0.1362} & {\small 0.1315} & {\small 0.1428} & {\small 0.1078} & 
{\small 0.1268} & {\small 
\small \textit{0.1026}%
}\ & {\small 0.1473} \\ 
\multicolumn{1}{c}{} & \multicolumn{1}{c}{\small 0.25} & \multicolumn{1}{c}%
{\small 96} & {\small 
\small \textbf{0.1947}%
} & {\small 0.2546} & {\small 0.2439} & {\small 0.2418} & 
{\small 0.2317} & {\small 0.2338} & {\small 0.2496} & {\small 0.2213} & 
{\small 0.2663} & {\small 
\small \textit{0.2187}%
} & {\small 0.2365} \\ 
\multicolumn{1}{c}{} & \multicolumn{1}{c}{} & \multicolumn{1}{c}{\small 576}
& {\small 
\small \textbf{0.0925}%
} & {\small 0.1483} & {\small 0.1426} & {\small 0.1474} & 
{\small 0.1322} & {\small 0.1272} & {\small 0.1349} & {\small 0.1077} & 
{\small 0.1238} & {\small 
\small \textit{0.1043}%
} & {\small 0.1558} \\ 
\multicolumn{1}{c}{} & \multicolumn{1}{c}{\small 0.45} & \multicolumn{1}{c}%
{\small 96} & {\small 
\small \textbf{0.1964}%
} & {\small 0.2565} & {\small 0.2534} & {\small 0.2429} & 
{\small 0.2412} & {\small 0.2324} & {\small 0.2448} & {\small 0.2223} & 
{\small 0.2643} & {\small 
\small \textit{0.2125}%
} & {\small 0.2424} \\ 
\multicolumn{1}{c}{} & \multicolumn{1}{c}{} & \multicolumn{1}{c}{\small 576}
& {\small 
\small \textbf{0.0943}%
} & {\small 0.1338} & {\small 0.1288} & {\small 0.1265} & 
{\small 0.1169} & {\small 0.1142} & {\small 0.1396} & {\small 0.1090} & 
{\small 0.1229} & {\small 
\small \textit{0.0975}%
} & {\small 0.1466} \\ 
\multicolumn{1}{c}{\small 0.9} & \multicolumn{1}{c}{\small -0.25} & 
\multicolumn{1}{c}{\small 96} & {\small 
\small \textbf{0.0886}%
} & {\small 0.1384} & {\small 0.1357} & {\small 0.1329} & 
{\small 0.1310} & {\small 0.1294} & {\small 0.1367} & {\small 
\small \textit{0.1105}%
} & {\small 0.1253} & {\small 0.1254} & {\small 0.1427} \\ 
\multicolumn{1}{c}{} & \multicolumn{1}{c}{} & \multicolumn{1}{c}{\small 576}
& {\small 
\small \textbf{0.0344}%
} & {\small 0.0712} & {\small 0.0696} & {\small 0.0653} & 
{\small 0.0628} & {\small 0.0611} & {\small 0.0636} & {\small 
\small \textit{0.0561}%
} & {\small 0.0589} & {\small 0.0685} & {\small 0.0749} \\ 
\multicolumn{1}{c}{} & \multicolumn{1}{c}{\small 0.25} & \multicolumn{1}{c}%
{\small 96} & {\small 
\small \textbf{0.0865}%
} & {\small 0.1393} & {\small 0.1329} & {\small 0.1314} & 
{\small 0.1299} & {\small 0.1245} & {\small 0.1343} & {\small 0.1287} & 
{\small 0.1286} & {\small 
\small \textit{0.0952}%
} & {\small 0.1426} \\ 
\multicolumn{1}{c}{} & \multicolumn{1}{c}{} & \multicolumn{1}{c}{\small 576}
& {\small 
\small \textbf{0.0304}%
} & {\small 0.0794} & {\small 0.0768} & {\small 0.0752} & 
{\small 0.0746} & {\small 0.0699} & {\small 0.0712} & {\small 0.0692} & 
{\small 0.0706} & {\small 
\small \textit{0.0487}%
} & {\small 0.0758} \\ 
\multicolumn{1}{c}{} & \multicolumn{1}{c}{\small 0.45} & \multicolumn{1}{c}%
{\small 96} & {\small 
\small \textbf{0.0885}%
} & {\small 0.1375} & {\small 0.1348} & {\small 0.1311} & 
{\small 0.1297} & {\small 0.1249} & {\small 0.1324} & {\small 0.1181} & 
{\small 0.1197} & {\small 
\small \textit{0.0991}%
} & {\small 0.1424} \\ 
\multicolumn{1}{c}{} & \multicolumn{1}{c}{} & \multicolumn{1}{c}{\small 576}
& {\small 
\small \textbf{0.0378}%
} & {\small 0.0728} & {\small 0.0715} & {\small 0.0676} & 
{\small 0.0619} & {\small 0.0588} & {\small 0.0662} & {\small 0.0647} & 
{\small 0.0695} & {\small 
\small \textit{0.0454}%
} & {\small 0.0743} \\ \hline\hline
\end{tabular}

\end{table}%

\newpage

\begin{table}[h]\
\vspace{-1.07cm}%
\caption{{\small Bias estimates of the unadjusted LPR estimator, the feasible GS
estimator and the pre-filtered sieve bootstrap estimator, for the DGP:
ARFIMA(}${\small 1,d}_{{\small 0}}{\small ,1}${\small )  with Gaussian
innovations. The bias estimates of the feasible jackknife estimator based on
2,3,4,6,8 non-overlapping (NO) sub-samples, the feasible jackknife estimator
based on 2 moving block (MB) sub-samples, the maximum likelihood estimator (MLE)
and the pre-whitened (PW) estimator are obtained under the misspecified model:
ARFIMA(}${\small 2,d,0}${\small ) using the approach described in Section
5.3. The estimates are obtained by setting }${\small \alpha }${\small \ =
0.65. The lowest values are \textbf{bold-faced} and the second lowest values
are \textit{italicized}.}}%
\label{Table:Bias_ARFIMA(1,d,1)_mis}%
\renewcommand{\arraystretch}{0.80}%
\setlength{\tabcolsep}{3pt}%

\medskip



\end{table}%

\newpage

\begin{table}[h]\
\vspace{-1.07cm}%
\caption{{\small RMSE estimates of the unadjusted LPR estimator, the feasible GS
estimator and the pre-filtered sieve bootstrap estimator, for the DGP:
ARFIMA(}${\small 1,d}_{{\small 0}}{\small ,1}${\small )  with Gaussian
innovations. The bias estimates of the feasible jackknife estimator based on
2,3,4,6,8 non-overlapping (NO) sub-samples, the feasible jackknife estimator
based on 2 moving block (MB) sub-samples, the maximum likelihood estimator (MLE)
and the pre-whitened (PW) estimator are obtained under the misspecified model:
ARFIMA(}${\small 2,d,0}${\small ) using the approach described in Section
5.3. The estimates are obtained by setting }${\small \alpha }${\small \ =
0.65. The lowest values are \textbf{bold-faced} and the second lowest values
are \textit{italicized}.}}%
\label{Table:MSE_ARFIMA(1,d,1)_mis}%
\renewcommand{\arraystretch}{0.80}%
\setlength{\tabcolsep}{4pt}%

\medskip


\end{center}

\end{table}%

\end{subappendices}%

\end{document}